%% file: main.tex
\documentclass[11pt,a4paper]{article}
\pdfoutput=1

\usepackage{jheppub}
\makeatletter
\gdef\@fpheader{}
\makeatother

\usepackage{amsthm}
\usepackage{mathrsfs}

\usepackage{booktabs}
\usepackage{array}
\usepackage{enumitem}

\usepackage{orcidlink}

\newtheorem{theorem}{Theorem}[section]
\newtheorem{proposition}[theorem]{Proposition}
\newtheorem{definition}[theorem]{Definition}
\newtheorem{assumption}[theorem]{Assumption}

\input{macros}

\input{frontmatter-jhep}

\hypersetup{
  pdftitle={Coarse-Graining and Long-Range Response in Generated Systems},
  pdfauthor={Jinku Guo},
  pdfsubject={Generated operations, observation-sufficient coarse states, and certified response},
  pdfkeywords={Coarse-graining; Generated systems; Response functions; Schur complements; Memory kernels; Analytic Fredholm theory}
}

\begin{document}

\maketitle\flushbottom

\input{sections-en/sec01_introduction}
\input{sections-en/sec02_three_criteria}
\input{sections-en/sec03_operator_classification}
\input{sections-en/sec04_irreversibility}
\input{sections-en/sec05_synergy}
\input{sections-en/sec06_summary}

\appendix
\input{sections-en/app_kl_representation}
\input{sections-en/app_operator_hierarchy}
\input{sections-en/app_kernel_proof}

\input{sections-en/declarations}

\bibliographystyle{JHEP}
\bibliography{references}

\end{document}

%% file: macros.tex
\newcommand{\diag}{\operatorname{diag}}

\newcommand{\spec}{\operatorname{spec}}

\newcommand{\Ran}{\operatorname{Ran}}
\newcommand{\Ker}{\operatorname{Ker}}

\newcommand{\TT}{\mathrm{TT}}

\newcommand{\iu}{\mathrm{i}\mkern1mu}

\newcommand{\cO}{\mathcal{O}}
\newcommand{\cK}{\mathcal{K}}
\newcommand{\cB}{\mathcal{B}}
\newcommand{\cG}{\mathcal{G}}
\newcommand{\cH}{\mathcal{H}}
\newcommand{\cC}{\mathfrak{C}}
\newcommand{\cI}{\mathfrak{I}}

%% file: frontmatter-jhep.tex
\title{Coarse-Graining and Long-Range Response in Generated Systems}

\author[a]{Jinku Guo\,\orcidlink{0009-0000-6600-6171}}
\affiliation[a]{Northwestern Polytechnical University, Xi'an 710072, China}
\emailAdd{guojk@nwpu.edu.cn}

\abstract{%
 Coarse-graining should retain the response to the perturbations and
 observations of interest.  We study this requirement for systems defined
 by generating operations and observation maps.  Equality of all allowed
 future outputs defines an observation-sufficient quotient; fibre
 consistency gives differentiable descent for bounded surjective linear maps
 between Banach spaces.  Linearising the state equation and the observation
 together produces a response whose kernel, source, readout and direct term
 transform consistently under coarse-graining.  We derive an exact
 comparison identity valid also for noninvertible coarse maps.  Eliminating
 residual states yields a memory representation and, under decay bounds,
 controlled finite-memory approximations.  For isolated critical modes,
 analytic Fredholm theory gives the pole and visibility conditions;
 reciprocal realisations with a positive crossing matrix have positive
 finite-rank residues.  An arithmetic merge and a reflecting heat protocol
 illustrate these results by explicit calculations.  Gaussian averaging
 and quantum-field-theory channels provide further realisations, including
 the identification of response poles with vacuum spectral atoms under
 matched spectral hypotheses.
}

\keywords{%
 Coarse-graining; Generated systems; Response functions; Schur complements;
 Memory kernels; Analytic Fredholm theory
}

%% file: sections-en/sec01_introduction.tex
\section{Introduction}
\label{sec:introduction}

A central question for long-range correlations in quantum field theory is
whether a coarse description preserves the source response of the underlying
system.  Neglecting the feedback of eliminated variables can shift a predicted
critical point or change the apparent coupling of the critical mode to the
chosen observables.
We address this question by first constructing
response for general generating operations, and then identifying the
additional conditions needed for a field-theoretic interpretation.

A reduced description is useful when it preserves the effects of the
operations and perturbations relevant to an observation.  This requirement
involves more than agreement at one instant.  An unresolved coordinate may
affect a later output through the evolution, the source coupling, or the
observation itself.  The question is therefore which states can be
identified while retaining these effects, and how the response changes
when the identification is only approximate.  Near a critical mode, small
closure errors can have a large observable consequence.

We begin with generating operations, allowed source interventions, and
observation maps.  Two states are equivalent when every allowed future
experiment gives the same output.  This defines the coarsest
observation-sufficient quotient.  Its universal property concerns sets;
the topology needed for differentiable response is treated separately.
For a bounded surjective linear map between Banach spaces, constancy of
the projected update and observation on each fibre gives a unique
differentiable coarse realisation.  Thus the information to be retained
is fixed by the operations and observations, while regularity determines
whether the resulting description supports linear response.

The response is obtained by differentiating the state equation and the
observation together.  This produces four linked maps: the linearised
update, source, readout, and direct term.  Their joint descent preserves
the complete observable response, including feedback through unresolved
states.  Source-dependent coarse coordinates contribute compensating
derivatives to the source and direct term.  An implicit residual equation
provides the same response without a choice of iterative solver; this
also makes it possible to distinguish the stability of a generating
update from the convergence of a numerical iteration.

For approximate coarse descriptions, we derive an exact response
comparison identity in terms of four closure defects.  The coarse map
may be noninvertible, so the identity applies to the removal of state
coordinates as well as to changes of coordinates.  The estimates include
displacement of the reference branch.  Eliminating residual states offers
a second description: in time, their effects appear as an initial-state
term and memory of the retained variables and sources.  At a fixed
response parameter, the corresponding block elimination is the
Schur--Feshbach reduction~\cite{Feshbach:1958}, with the source, readout,
and direct term reduced together with the kernel.

Once the response has been constructed, analytic Fredholm theory
describes its isolated singularities (Ref.~\cite{Kato:1995}, Chapter~VII).
A simple crossing
of a kernel eigenvalue through one gives a resolvent pole; source and
readout overlaps determine whether it is observable.  In a reciprocal
Hilbert-space realisation, a positive crossing matrix on the isolated
critical subspace gives a positive finite-rank response residue.  For
approximations, norm and Riesz-projection estimates
(Ref.~\cite{Chatelin:1983}, Chapters~5--6) control the displacement of critical roots and the
change of their residues.  Resolvent bounds retain the sensitivity of
nonnormal kernels, for which spectral distances alone can be misleading
\cite{TrefethenEmbree:2005}.

Two examples connect the general statements to explicit operations.  The
arithmetic construction of Ref.~\cite{Guo:2026arithmetic} supplies a
normalised merge, a faithful environment, and a connected additive cost.
We prove the finite identities used here and show how static source
response can survive exact environment elimination even when a
memoryless heat update fails to descend.  The second example composes
Neumann heat action on a reflecting window, positive modulation, and a
two-mode projection.  Its closure defect, minimal residual, memory,
stability interval, critical parameter, and visible residue are computed
from these operations.  The projection is part of this finite protocol;
approximations to a larger system are treated through separate error
estimates.

Gaussian averaging gives a continuous family of operations for these
constructions.  We give a self-contained characterisation within the
class of weakly continuous, centred, orthogonally invariant convolution
semigroups with finite nonzero variance and square-root self-similarity;
this is a special case of the theory of stable laws~\cite{Sato:1999}.
Heat smoothing and an information-losing quotient have different roles.
We also distinguish averaging in external and relative coordinates,
since their momentum multipliers act on different parts of a response.
The reflecting-window example connects this averaging law to calibrated
correlation endpoints, a chosen boundary condition, and an explicit heat
spectrum.

Quantum field theory provides a further realisation of the response
construction.  Renormalised operator mixing and physical symmetry
quotients specify the channels; Ward identities and functional flows
then constrain and compute their responses.  For a centred, connected
vacuum correlator, matched physical projections, continuation and
subtraction conventions identify a Fredholm residue with a
K\"all\'en--Lehmann spectral atom.  This spectral identification is
additional to the native positive-residue theorem.  The full covariant
stress-tensor case also illustrates the force of the representation
hypotheses: its null-shell transverse-traceless atom is excluded by the
local argument in Appendix~\ref{app:kl}.  Spatial long-range propagation
and gravitational coupling involve further physical input, including the
soft-spin-two and Lorentz-covariance consistency conditions
\cite{Weinberg:1964,WeinbergWitten:1980}.

Section~\ref{sec:criteria} constructs sufficient coarse states, regular
descent, same-rule response, and Gaussian operations.
Section~\ref{sec:classification} treats channel elimination, visible
critical modes, and the field-theory realisation.
Section~\ref{sec:irreversibility} derives response-error identities,
residual memory, and scale-dependent descriptions.
Section~\ref{sec:synergy} gives the two generating examples, the scalar
feedback fold, and independent field-theory checks.
Section~\ref{sec:summary} concludes.  The appendices supply the spectral
representation, compatible operator realisations, and the analytic
descent and approximation proofs.

%% file: sections-en/sec02_three_criteria.tex
\section{Generated operations, sufficient coarse states, and response}
\label{sec:criteria}

The starting objects are operations and observations.  Coarse states should
identify precisely those states that the allowed experiments cannot
distinguish.  This requirement
first determines a set quotient; its regular realisation and its response
then require separate arguments.

\subsection{Operations and the observation-sufficient quotient}
\label{sec:generated-operations}

Let $\mathsf X$ be a common invariant state domain and let
$F_a:\mathsf X\to\mathsf X$ be the allowed deterministic operations.
The label $a$ includes each allowed source intervention, not just the
unforced update.  A finite word $w$ specifies an ordered composition $F_w$;
the empty word acts as the identity.  Let $\mathscr O$ be the declared
observation family.  Its members may have different output spaces, but
equality of their outputs is fixed before taking a quotient.  These data
come from a specified protocol.  For finite sequential machines,
distinguishing states by their future outputs is the viewpoint of
Moore's experiments~\cite{Moore:1956}.  Here the state and output sets may
be infinite, and the regularity needed for response is imposed below.
Section~\ref{sec:arithmetic-example} constructs the operations on
an arithmetic fibre, and Section~\ref{sec:neumann-example} constructs them
from reflecting heat action, modulation, and projection.

In the latter example, two full states can have the same retained amplitude
but different discarded amplitudes, and hence different outputs after one
update.  Agreement of the present observations therefore does not suffice
for closure.  The equivalence below tests whether such a distinction can
appear after any allowed sequence of operations.

\begin{theorem}[Coarsest observation-sufficient quotient]
\label{thm:canonical-quotient}
Define
\begin{equation}
 x\sim y\quad\Longleftrightarrow\quad
 O(F_wx)=O(F_wy)\quad\hbox{for every }O\in\mathscr O\hbox{ and }w.
 \label{eq:observational-equivalence}
\end{equation}
Then $\sim$ is an equivalence relation, and every allowed operation and
observation descends to $q:\mathsf X\to\mathsf X/{\sim}$.  If another
map $C$ admits exact descended operations and observations on its image,
there is a unique surjection $\pi:C(\mathsf X)\to\mathsf X/{\sim}$ with
$q=\pi C$.  It intertwines the descended operations and observations.
Thus the quotient is unique up to its canonical bijection among coarsest
exact observation-sufficient descriptions.
\end{theorem}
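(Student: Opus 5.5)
The plan is to verify the three claims in order: that $\sim$ is an equivalence relation, that operations and observations descend, and that the universal property holds.

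\emph{Equivalence and descent.} Reflexivity, symmetry and transitivity follow pointwise from those of equality in each output space. Equality is fixed before the quotient is taken, so no topology enters. For descent of an operation $F_a$, suppose $x\sim y$ and take any word $w$. The concatenated word $wa$ satisfies $F_{wa}=F_wF_a$, so
\[
O(F_wF_ax)=O(F_{wa}x)=O(F_{wa}y)=O(F_wF_ay),
\]
and hence $F_ax\sim F_ay$. Therefore $\bar F_a[x]:=[F_ax]$ is well defined and satisfies $\bar F_aq=qF_a$. For descent of an observation, use the empty word: $x\sim y$ gives $O(x)=O(y)$, so $\bar O[x]:=O(x)$ is well defined and $\bar Oq=O$.

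\emph{Universal property.} Let $C$ have descended maps $G_a$ on $C(\mathsf X)$ and $P_O$ with $G_aC=CF_a$ and $P_OC=O$. By induction on word length, $G_wC=CF_w$ for every word, so
\[
O(F_wx)=P_O(G_wCx).
\]
Hence $Cx=Cy$ implies $x\sim y$; that is, the fibres of $C$ refine those of $q$. Define $\pi(Cx):=q(x)$. This is well defined by the refinement just shown, and it is surjective because $q$ is. Uniqueness follows because $C$ is surjective onto its image: any $\pi'$ with $q=\pi'C$ agrees with $\pi$ on every point $Cx$. Intertwining is a direct check:
\[
\pi G_a(Cx)=\pi C(F_ax)=qF_ax=\bar F_aq x=\bar F_a\pi(Cx),
\]
and similarly $\bar O\pi(Cx)=O(x)=P_O(Cx)$.

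\emph{Uniqueness up to bijection.} I will read ``coarsest'' as meaning that the description is itself a quotient through which every exact description factors. If $C$ is also coarsest, then $C$ factors through $q$ as well: $q$ is one exact description by the descent step, so $C(\mathsf X)\to\mathsf X/{\sim}$ and $\mathsf X/{\sim}\to C(\mathsf X)$ both exist. Uniqueness of factorizations through surjections then forces the two composites to be identities, so the maps are inverse bijections.

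The only step needing care is this last one, which depends on making ``coarsest'' precise in the factorization sense above. Everything else is bookkeeping with word concatenation.
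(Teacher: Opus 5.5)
Your proposal is correct and follows the paper's proof essentially step for step. The shared steps are pointwise equivalence, descent by prefixing $F_a$ at the input and using the empty word, induction along words to show that $Cx=Cy$ implies $x\sim y$, and the formula $\pi(Cx)=[x]$ giving well-definedness, surjectivity, uniqueness and intertwining; your mutual-factorisation argument for the final bijection makes explicit a step the paper leaves implicit. One wording slip: your gloss of ``coarsest'' is stated backwards, since the coarsest description should factor through every exact one (as $q=\pi C$ does), not the reverse, but the argument you actually run, obtaining $\sigma$ with $C=\sigma q$ from $C$'s coarseness applied to $q$, uses the correct direction.
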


\begin{proof}
Equality of all specified outputs is reflexive, symmetric, and transitive.
If $x\sim y$, append $F_a$ at the input of any word to obtain
$F_ax\sim F_ay$.  The empty word gives $O(x)=O(y)$, so both operations
and observations descend.  If $Cx=Cy$, induction along a word using the
descended $C$-operations gives $CF_wx=CF_wy$; the descended observations
then give $x\sim y$.  Therefore $\pi(Cx)=[x]$ is well defined and
surjective.  This formula proves uniqueness and both intertwining claims.
\end{proof}

The quotient depends on the complete operation and observation family;
changing that family changes the equivalence relation.  Its dimension and
computability require further information about the system.  For partially defined
operations, equivalence must also compare which words are executable.
History restrictions require a control state; nonautonomous protocols
require time fibres or a clock state.  A stochastic version must specify
whether full output laws or selected statistics are compared and prove the
corresponding descent separately.  We use the deterministic common-domain
version here.

\subsection{Regular descent and its limits}
\label{sec:regular-descent}

For a fixed coarse map $C$, exact descent on its actual image has a simple
test.  No linear structure is needed for the equivalence
\begin{equation}
 CF=\bar F C\text{ for a unique }\bar F:C(\mathsf X)\to C(\mathsf X)
 \quad\Longleftrightarrow\quad
 Cx=Cy\ \Longrightarrow\ CF(x)=CF(y).
 \label{eq:fibre-consistency}
\end{equation}
Indeed, the only possible definition is $\bar F(Cx)=CF(x)$, and the
right-hand condition is exactly its well-definedness.  Uniqueness is on
$C(\mathsf X)$, not on a larger ambient space, and does not select $C$ or
$F$ themselves.

This test answers a question about values: does the coarse output depend
only on the coarse input?  Response requires more, since it differentiates
that dependence.  The next proposition gives conditions under which an
exact descended update also has well-defined continuous derivatives.

\begin{proposition}[Banach descent with sources]
\label{prop:banach-descent}
Let $\mathcal X,\mathcal Y,\mathcal J,\mathcal Z$ be Banach spaces,
$C:\mathcal X\to\mathcal Y$ a bounded linear surjection,
$U\subset\mathcal X$, $V\subset\mathcal J$ open, and
$F:U\times V\to\mathcal X$ continuously Fr\'echet differentiable.
Assume fibre consistency for every $h\in V$.  Then the unique map
$\bar F:C(U)\times V\to\mathcal Y$ is $C^1$, $C(U)$ is open, and
\begin{equation}
 \bar F(Cx,h)=CF(x,h),\qquad
 D_y\bar F(Cx,h)C=CD_xF(x,h),\qquad
 D_h\bar F(Cx,h)=CD_hF(x,h).
 \label{eq:banach-source-descent}
\end{equation}
A $C^1$ observation $O:U\times V\to\mathcal Z$ descends with the same
regularity if it is constant on each $C$-fibre at every $h$.
\end{proposition}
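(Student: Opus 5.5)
The plan is to reduce everything to a local $C^1$ right inverse of $C$ obtained from the open mapping theorem. Since $C:\mathcal X\to\mathcal Y$ is a bounded linear surjection between Banach spaces, the open mapping theorem shows that $C$ is an open map. Hence $C(U)$ is open in $\mathcal Y$. The same theorem gives a constant $M$ such that every $y\in\mathcal Y$ has a preimage of norm at most $M\|y\|$.

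A bounded linear right inverse need not exist, because $\Ker C$ need not be complemented. I therefore plan to use the Bartle--Graves theorem. It supplies a continuous (in general nonlinear) right inverse $s:\mathcal Y\to\mathcal X$ with $Cs=\mathrm{id}$. Such an $s$ is not differentiable, so I would differentiate $\bar F$ directly rather than through $s$.

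Well-definedness is immediate. By fibre consistency at each $h$, the formula $\bar F(Cx,h)=CF(x,h)$ defines $\bar F$ uniquely on $C(U)\times V$, exactly as in \eqref{eq:fibre-consistency}.

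Next I identify the derivative candidates. Fix $(x,h)$ and $k\in\Ker C$. For small $t$, $x+tk\in U$ and $C(x+tk)=Cx$, so fibre consistency gives $CF(x+tk,h)=CF(x,h)$. Differentiating at $t=0$ yields $CD_xF(x,h)k=0$. Thus $CD_xF(x,h)$ vanishes on $\Ker C$ and factors through the isomorphism $\mathcal X/\Ker C\cong\mathcal Y$. This gives a unique bounded operator $L(x,h)$ with $L(x,h)C=CD_xF(x,h)$ and $\|L\|\le M\|CD_xF\|$.

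I would also check that $L$ depends only on $(Cx,h)$. Take $x,x'\in U$ in the same fibre, and first assume the segment between them lies in $U$. The map $t\mapsto CF(x+t(x'-x)+\xi,h)$ is constant in $t$, since $x'-x\in\Ker C$, for every small $\xi$. Differentiating in $\xi$ gives $CD_xF(x,h)=CD_xF(x',h)$. In general $U$ need not be convex, so I would restrict this argument to the local comparison that is actually needed.

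The main step, and the expected obstacle, is proving Fréchet differentiability of $\bar F$ at $(y,h)=(Cx,h)$ with derivative $(L,CD_hF)$. Given a small increment $(\eta,\delta)$, use the open mapping bound to pick $\xi$ with $C\xi=\eta$ and $\|\xi\|\le M\|\eta\|$. Then
\[
\bar F(y+\eta,h+\delta)-\bar F(y,h)-L\eta-CD_hF\,\delta
=C\bigl[F(x+\xi,h+\delta)-F(x,h)-D_xF\,\xi-D_hF\,\delta\bigr],
\]
which is $o(\|\xi\|+\|\delta\|)=o(\|\eta\|+\|\delta\|)$ by differentiability of $F$ at $(x,h)$. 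Choosing $\xi$ arbitrarily is legitimate, because the left side depends only on $\eta$ by fibre consistency. Differentiability therefore holds without any smooth section.

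Continuity of the derivative is the delicate part. Near $(y,h)$, choose base points $x'=x+\xi$ with $\xi$ small via the same bounded preimage. Then
\[
\|L(y',h')-L(y,h)\|\le M\,\|C\|\,\|D_xF(x+\xi,h')-D_xF(x,h)\|,
\]
which tends to zero by continuity of $D_xF$. The $h$-derivative is handled the same way, using the fibre independence established above. With $C^1$ regularity in hand, the identities \eqref{eq:banach-source-descent} follow directly from $L(x,h)C=CD_xF(x,h)$ and the formula for the $h$-derivative.

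For the observation, replace $CF$ by $O$ throughout. Constancy of $O$ on each $C$-fibre gives $D_xO\,k=0$ for $k\in\Ker C$, and the same open-mapping argument applies verbatim.
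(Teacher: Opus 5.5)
Your proposal is correct and follows essentially the paper's route: bounded lifts from the open mapping theorem, $CD_xF(x,h)$ annihilating $\Ker C$ and so factoring through $\mathcal Y$, Fr\'echet differentiability via a lifted increment, and operator-norm continuity by lifting nearby base points. You can drop three pieces: the opening appeal to a local $C^1$ right inverse (one would force $\Ker C$ to be complemented), the Bartle--Graves aside, and the convexity-limited segment argument. They are unnecessary because your differentiability step works at every representative of $Cx$, so uniqueness of the Fr\'echet derivative already gives independence of the representative, exactly as the paper observes.
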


The quotient-norm proof is given in Appendix~\ref{app:kernel-proof}.
Bounded surjectivity makes $\Ker C$ closed and identifies
$\mathcal X/\Ker C$ with $\mathcal Y$ by a bounded isomorphism.
The quotient argument applies without a bounded linear section or a
complemented kernel.  Repeated iteration additionally requires the relevant
invariant domains.  A full-system
fixed point projects to a coarse fixed point; the converse need not hold:
$CF(x)=Cx$ permits $F(x)-x\ne0$ in $\Ker C$.

A Gaussian heat map $\cH_\tau$ on $L^2(\mathbb R^d)$ is injective, with
nonclosed range for $\tau>0$ in the inherited $L^2$ norm
(Proposition~\ref{prop:operational-irreversibility}).  It is not a
surjection onto that $L^2$ space.  Giving its range the norm
$\|\cH_\tau f\|_{\rm tr}=\|f\|_2$ makes it a Banach isomorphic copy of
the original space, rather than an information-losing quotient.  A genuine
projection or observation quotient must still satisfy
Eq.~\eqref{eq:fibre-consistency}.  Heat smoothing alone cannot repair
nonlinear failure of that condition: on $[0,\pi]$, let $C=P_0$ be the
constant-mode projection and $F(u)=H_t(u^2)$ on the finite cosine core.
Then $C(a\cos nx)=C0=0$, whereas $CF(a\cos nx)=a^2/2\ne CF(0)$ for
$a\ne0$, $n\geq1$.  These are function amplitudes, not probabilities.

\subsection{Response from one sourced rule}
\label{sec:same-rule-response}

On a regular state space, specify an actual update $F(X,h;z)$ and
observation $O(X,h;z)$ with finite-dimensional source and output spaces.
The parameter $z$ is an analytic response coordinate; $\eta$ denotes an
independent control parameter.  Neither is intrinsically a momentum or a
clock.  The symbols $B,S,R,D$ below are the four derivatives of this rule
at a declared fixed-point branch $X_*(z)$ and $h=0$:
\begin{equation}
 B=D_XF,\qquad S=D_hF,\qquad R=D_XO,\qquad D=D_hO.
 \label{eq:native-four-derivatives}
\end{equation}
Here $D$ is the direct source response, not a state-space operator.  We
write $\mathscr C$ for the measured response and reserve $C$ in the general
descent results for the coarse map.  The QFT implementation later writes
the measured matrix as $C(Q^2)$ and its kernel as $\cB$.

\begin{proposition}[Same-rule response]
\label{prop:same-rule-response}
Suppose $F,O$ are $C^1$ near the reference solution and $I-B$ is boundedly
invertible.  The local fixed-point branch is differentiable in $h$ and
\begin{equation}
 \mathscr C=D_h[O(X(h),h)]_{h=0}=D+R(I-B)^{-1}S.
 \label{eq:native-response}
\end{equation}
More generally, if a $C^1$ residual equation $\mathcal E(X,h;z)=0$ defines
the branch and $L=D_X\mathcal E$ is boundedly invertible, then
\begin{equation}
 \mathscr C=O_h-O_XL^{-1}\mathcal E_h.
 \label{eq:implicit-native-response}
\end{equation}
\end{proposition}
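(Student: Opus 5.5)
The plan is to reduce both formulas to the Banach-space implicit function theorem, then differentiate the composite observation by the chain rule.

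For the fixed-point form, set $\mathcal E(X,h)=F(X,h)-X$ near the declared reference point $(X_*,0)$. This map is $C^1$ because $F$ is. Its partial derivative is $D_X\mathcal E=B-I$, which is boundedly invertible by hypothesis. The implicit function theorem then gives an open neighbourhood of $h=0$ and a unique $C^1$ branch $X(h)$ with $X(0)=X_*$ and $\mathcal E(X(h),h)=0$.

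Differentiating the identity $\mathcal E(X(h),h)=0$ at $h=0$ gives
\begin{equation*}
 (B-I)X'(0)+S=0,
 \qquad\text{hence}\qquad
 X'(0)=(I-B)^{-1}S .
\end{equation*}
Since $O$ is $C^1$, the composite $h\mapsto O(X(h),h)$ is differentiable. The chain rule gives
\begin{equation*}
 D_h[O(X(h),h)]_{h=0}=D_XO\,X'(0)+D_hO=D+R(I-B)^{-1}S ,
\end{equation*}
which is Eq.~\eqref{eq:native-response}.

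The general residual form follows by the same steps. One applies the implicit function theorem directly to $\mathcal E$ with $L=D_X\mathcal E$ invertible, which gives $X'(0)=-L^{-1}\mathcal E_h$. Substituting into the chain rule yields Eq.~\eqref{eq:implicit-native-response}. As a consistency check, taking $\mathcal E=F-X$ gives $L=B-I$ and $\mathcal E_h=S$, and the formula reduces to Eq.~\eqref{eq:native-response}.

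The parameter $z$ is held fixed throughout. Since source and output spaces are finite-dimensional, all derivatives are bounded operators and $\mathscr C$ is a finite matrix. No argument specific to $z$ is needed.

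The only step needing care is the precise form of the implicit function theorem. It requires $C^1$ regularity in the joint variables $(X,h)$ and a bounded inverse of the partial derivative, not merely injectivity or a dense range; the latter would not suffice in infinite dimensions. It also requires the declared branch to be the one selected locally. Both are supplied by the hypotheses. Uniqueness of the local branch means the response is attached to that branch, independently of any iterative solver. No spectral-radius condition on $B$ is used, only invertibility of $I-B$.
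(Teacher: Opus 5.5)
Your proposal is correct and is essentially the paper's proof: the Banach implicit-function theorem supplies the local $C^1$ branch, differentiating its defining equation gives $X_h=(I-B)^{-1}S$ (respectively $X_h=-L^{-1}\mathcal E_h$), and the chain rule on $O(X(h),h)$ yields both formulae. Your sign choice $\mathcal E=F-X$ is opposite to the paper's residual convention $S=-\mathcal E_h$, but $O_h-O_XL^{-1}\mathcal E_h$ is invariant under $\mathcal E\mapsto-\mathcal E$, so nothing changes.
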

\begin{proof}
The Banach implicit-function theorem gives the local branch.
Differentiating its defining equation gives
$(I-B)X_h=S$, or $LX_h=-\mathcal E_h$.  Differentiating the observation
gives $\mathscr C=D+RX_h$, proving both formulae.
\end{proof}

In Eq.~\eqref{eq:native-response}, $S$ couples the source to the state,
$(I-B)^{-1}$ solves the linearised fixed-point equation, and $R$ converts
the resulting state variation into an output.  The term $D$ accounts for
the source dependence of the observation at fixed state.  Preserving the
kernel alone need not preserve this complete source-to-output response.

For the residual convention put $S=-\mathcal E_h$.  A numerical iteration
introduced only to solve that residual is not thereby the actual update;
Section~\ref{sec:preconditioning} proves the distinction.  The common
origin of the four derivatives guarantees consistency, not reciprocity or
positivity.  Those properties will be verified in the examples or supplied
by an independently established physical representation.

\begin{theorem}[Preservation of the complete visible response]
\label{thm:visible-descent}
Suppose $CF(X,h)=\bar F(CX,h)$ and $O(X,h)=\bar O(CX,h)$ hold in a
neighbourhood, with fixed bounded linear $C$ and corresponding reference
branches.  If both $I-B$ and $I-\bar B$ are invertible, then
\begin{equation}
 CB=\bar BC,\quad \bar S=CS,\quad R=\bar RC,\quad D=\bar D,
 \qquad \mathscr C=\bar{\mathscr C}.
 \label{eq:visible-descent}
\end{equation}
If both responses have meromorphic extensions to the same connected
domain from a common nonempty open invertible set, their visible poles and
output Laurent coefficients agree there.
\end{theorem}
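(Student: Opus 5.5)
The plan is to differentiate the two descent identities at the reference branch and then combine the resulting intertwining relations with Proposition~\ref{prop:same-rule-response}. The branches must correspond, $\bar X_*=CX_*$. Since $C$ is bounded linear, the relation $CF(X,h)=\bar F(CX,h)$ will be differentiated in $X$ and in $h$ at $(X_*,0)$. The chain rule gives $CD_XF=D_y\bar F\,C$ and $CD_hF=D_h\bar F$, that is, $CB=\bar BC$ and $CS=\bar S$. This is exactly Eq.~\eqref{eq:banach-source-descent}. Under the hypotheses of Proposition~\ref{prop:banach-descent} I would invoke it directly; otherwise I would take $\bar F$ to be $C^1$ near $(CX_*,0)$, which is implicit in the definition of $\bar B$.

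The same differentiation applied to $O(X,h)=\bar O(CX,h)$ gives $R=\bar RC$ and $D=\bar D$. One point needs checking: that $CX_*(h)$ is the coarse branch $\bar X(h)$. Applying $C$ to $X(h)=F(X(h),h)$ gives $CX(h)=\bar F(CX(h),h)$. Local uniqueness of the coarse fixed point, which follows from the implicit function theorem and the invertibility of $I-\bar B$, then identifies $CX(h)$ with $\bar X(h)$, since the two agree at $h=0$.

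For the response identity there are two routes. The direct route uses the intertwining $C(I-B)=(I-\bar B)C$. Multiplying by $(I-\bar B)^{-1}$ on the left and by $(I-B)^{-1}$ on the right gives $C(I-B)^{-1}=(I-\bar B)^{-1}C$. Then
\[
\mathscr C=D+R(I-B)^{-1}S=\bar D+\bar R C(I-B)^{-1}S=\bar D+\bar R(I-\bar B)^{-1}CS=\bar{\mathscr C}.
\]
As a cross-check, the branch route differentiates $O(X(h),h)=\bar O(\bar X(h),h)$ in $h$. This uses the branch identification above.

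For the meromorphic statement, I read $z$ as a parameter of the whole setup, so that the identity $\mathscr C(z)=\bar{\mathscr C}(z)$ holds on the common nonempty open set where both $I-B(z)$ and $I-\bar B(z)$ are invertible. The identity theorem for vector-valued meromorphic functions on a connected domain then gives $\mathscr C=\bar{\mathscr C}$ as meromorphic functions. Their principal parts at every point therefore coincide, so the visible poles (poles of the output matrix, as opposed to kernel singularities cancelled by $R$ or $S$) and all output Laurent coefficients agree.

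I expect the main obstacle to be the bookkeeping in this last step. The poles of $\mathscr C$ exclude points where $I-B$ is singular but the singularity is invisible, so one must not claim that the spectra of $B$ and $\bar B$ agree. One must also ensure that the identity is first established on a genuinely open set rather than at isolated points. I would handle this by restricting explicitly to the open invertible set supplied in the hypothesis and by defining visible poles as poles of the finite-dimensional output matrix only.
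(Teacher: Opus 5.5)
Your proposal is correct and follows the paper's own proof: you differentiate both descent identities at the matched reference point, obtain $(I-\bar B)^{-1}C=C(I-B)^{-1}$ from $C(I-B)=(I-\bar B)C$, substitute into $D+R(I-B)^{-1}S$, and conclude with the identity theorem on the common open invertible set. Your extra step identifying $CX(h)$ with the coarse branch by local uniqueness is not needed for the direct route, since the theorem already assumes corresponding reference branches, but it is a harmless consistency check.
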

\begin{proof}
Differentiate the two neighbourhood identities.  Multiplying
$(I-\bar B)C=C(I-B)$ by the inverses gives
$(I-\bar B)^{-1}C=C(I-B)^{-1}$.  Substitution in
Eq.~\eqref{eq:native-response} gives equality; the identity theorem gives
the meromorphic assertion.
\end{proof}

This theorem preserves the source-to-observation response, not every
internal eigenvalue.  The observation hypothesis is essential.  For
$B(z)=\diag(0,1-z)$, $S=(0,1)^{\mathsf T}$, $R=(0,1)$, and
$C=(1,0)$, the update descends to $\bar B=0$, but the full response is
$1/z$.  It cannot be recovered from the retained coordinate because the
readout does not descend.

\begin{proposition}[State- and source-dependent coarse maps]
\label{prop:moving-coarse-map}
Let $F,O,C=C(X,h)$ and the descended maps $\bar F,\bar O$ be jointly
$C^1$ in state and source on compatible open neighbourhoods of the
reference states and source, with the following compositions defined.
Suppose
$C(F(X,h),h)=\bar F(C(X,h),h)$ and
$O(X,h)=\bar O(C(X,h),h)$ hold near an actual sourced fixed point.
Set $T=D_XC$ and $C_h=D_hC$ there.  Then
\begin{equation}
 \begin{aligned}
 TB&=\bar BT,& \bar S&=TS+(I-\bar B)C_h,\\
 R&=\bar RT,& D&=\bar D+\bar RC_h.
 \end{aligned}
 \label{eq:moving-coarse-derivatives}
\end{equation}
On the common invertible domain the full responses still agree.
\end{proposition}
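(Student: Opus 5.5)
The plan is to differentiate the two neighbourhood identities by the chain rule at the sourced fixed point, read off the four relations in Eq.~\eqref{eq:moving-coarse-derivatives}, and then substitute them into Eq.~\eqref{eq:native-response} as in the proof of Theorem~\ref{thm:visible-descent}, keeping track of the extra $C_h$ terms.

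\textbf{Derivatives of the update identity.} Fix the reference fixed point $X_*$ with $F(X_*,0)=X_*$. Put $Y_*=C(X_*,0)$. Then $\bar F(Y_*,0)=C(F(X_*,0),0)=Y_*$, so $Y_*$ is the coarse reference fixed point, and the barred derivatives are taken there. Differentiating $C(F(X,h),h)=\bar F(C(X,h),h)$ in $X$ at $(X_*,0)$ gives
\[
 D_XC(F(X_*,0),0)\,B=\bar B\,T .
\]
Because $F(X_*,0)=X_*$, the left factor is $T$ itself, so $TB=\bar BT$. This is the one place where the fixed-point property is genuinely used, and I expect it to be the only delicate step: the derivatives of $C$ must be evaluated at the image point $F(X_*,0)$, which equals $X_*$ only at the fixed point. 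Differentiating in $h$ gives
\[
 T S+C_h=\bar B\,C_h+\bar S ,
\]
since the left side picks up $D_XC\cdot D_hF+D_hC$ and the right side picks up $D_y\bar F\cdot D_hC+D_h\bar F$. Rearranging yields $\bar S=TS+(I-\bar B)C_h$.

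\textbf{Derivatives of the observation identity.} Differentiating $O(X,h)=\bar O(C(X,h),h)$ gives $R=\bar RT$ in $X$ and $D=\bar RC_h+\bar D$ in $h$.

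\textbf{Equality of responses.} On the common invertible domain, $TB=\bar BT$ gives $(I-\bar B)T=T(I-B)$, hence $T(I-B)^{-1}=(I-\bar B)^{-1}T$. Then
\[
 \mathscr C=D+R(I-B)^{-1}S=\bar D+\bar RC_h+\bar R(I-\bar B)^{-1}TS .
\]
Substituting $TS=\bar S-(I-\bar B)C_h$ turns the last term into $\bar R(I-\bar B)^{-1}\bar S-\bar RC_h$. The $\bar RC_h$ terms cancel, so $\mathscr C=\bar D+\bar R(I-\bar B)^{-1}\bar S=\bar{\mathscr C}$, which is the compensation mechanism described in the text.

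Finally, the $C^1$ hypotheses and the stated compositions justify the chain rule. Proposition~\ref{prop:same-rule-response} applies on both sides because $I-B$ and $I-\bar B$ are invertible, so both responses are given by Eq.~\eqref{eq:native-response}.
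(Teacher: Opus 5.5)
Your proof is correct and follows the paper's argument: you differentiate both neighbourhood identities at the fixed point, where $F(X_*,0)=X_*$ makes the derivatives of $C$ on the two sides agree, and then substitute the four identities into Eq.~\eqref{eq:native-response} so that the $\bar R C_h$ terms cancel. One small correction: the $h$-derivative also relies on the fixed-point property, because $T$ and $C_h$ on the left-hand side are likewise evaluated at $F(X_*,0)$, so the state derivative is not the only place it is used.
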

\begin{proof}
At a fixed point the derivatives of $C$ on both sides are evaluated at
the same full state.  The state derivative gives the first identity;
the source derivative gives $TS+C_h=\bar BC_h+\bar S$.
The observation chain rule gives the second line.  Inserting all four
identities into Eq.~\eqref{eq:native-response} cancels the $C_h$ terms.
\end{proof}

These formulae concern fixed-source stationary response.  Time-varying
windows or sources require the actual adjacent-time coarse maps and their
history.  Discarding $C_h$ would already change the static response.

\subsection{Gaussian operations and their representations}
Gaussian averaging is one of the allowed operations when its spatial
domain has been specified.  It is distinct from the sufficient quotient.
The following Euclidean results also supply the external and internal
momentum representations used in the QFT implementation of
Section~\ref{sec:channel-data}.
For $\tau>0$, define the normalised heat kernel and its action by
\begin{equation}
 g_\tau(x)=\frac{1}{(4\pi\tau)^{d/2}}
             e^{-|x|^2/(4\tau)},
 \qquad
 (\cH_\tau f)(x)=(g_\tau*f)(x).
 \label{eq:heat-kernel}
\end{equation}
We use $\widehat f(q)=\int e^{\iu q\cdot x}f(x)\,d^dx$, so that
$\widehat{\partial_\mu f}(q)=-\iu q_\mu\widehat f(q)$.  In this convention,
\begin{equation}
 \widehat{\cH_\tau f}(q)=e^{-\tau|q|^2}\widehat f(q),
 \qquad
 \cH_{\tau_1}\cH_{\tau_2}
 =\cH_{\tau_1+\tau_2}.
 \label{eq:heat-multiplier}
\end{equation}
For a translationally invariant two-point function, smearing each insertion
with $\cH_{\tau_Q/2}$ gives the external representation
\begin{equation}
 \cG^{\rm ext}_{\tau_Q}C(Q^2)
 :=e^{-\tau_QQ^2}C(Q^2).
 \label{eq:external-gaussian-representation}
\end{equation}
If a Gaussian is instead written as
$\exp[-|x|^2/(2\sigma^2)]$, its Fourier multiplier is
$\exp[-\sigma^2|q|^2/2]$, so that $\tau=\sigma^2/2$.
Equivalently, the heat-kernel parameter used here is one half of the
one-coordinate variance.
This convention is stated once because one-insertion, two-insertion, and
relative-coordinate averages otherwise differ by factors of two.

We consider averaging laws that compose continuously across scales, have no
preferred direction or drift, and have finite nonzero variance.  We further
require the entire distribution, rather than only its variance, to rescale
exactly with the square root of the scale parameter.  This self-similarity
condition is the additional restriction that determines the Gaussian
semigroup in the following proposition.

\begin{proposition}[Conditional uniqueness of the Gaussian semigroup]
\label{prop:gaussian-unique}
Let $\{\mu_\tau\}_{\tau\geq0}$ be probability measures on $\mathbb R^d$ satisfying: weak continuity with $\mu_0=\delta_0$; the convolution law $\mu_{\tau+s}=\mu_\tau*\mu_s$; centring; orthogonal invariance; a finite, nonzero second moment; and exact square-root self-similarity
$(D_{\sqrt\tau})_\#\mu_1=\mu_\tau$.
Then there is a constant $D>0$ such that
\begin{equation}
 \widehat\mu_\tau(q)=e^{-D\tau|q|^2}.
 \label{eq:gaussian-unique}
\end{equation}
Thus the semigroup is Gaussian, up to the choice of the scale unit $D$.
\end{proposition}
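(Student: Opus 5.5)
The plan is to work entirely with characteristic functions $\varphi_\tau(q)=\widehat\mu_\tau(q)$, using the paper's Fourier convention. The hypotheses translate into four properties.

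\begin{enumerate}
\item The convolution law gives $\varphi_{\tau+s}=\varphi_\tau\varphi_s$.
\item Orthogonal invariance makes $\varphi_\tau(q)$ depend only on $|q|$.
\item The same invariance gives $\varphi_\tau(-q)=\varphi_\tau(q)$. Together with $\varphi_\tau(-q)=\overline{\varphi_\tau(q)}$, this forces $\varphi_\tau$ to be real.
\item Square-root self-similarity gives $\varphi_\tau(q)=\varphi_1(\sqrt\tau\,q)$ for every $\tau>0$.
\end{enumerate}

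Write $\varphi_1(q)=\psi(|q|^2)$ with $\psi:[0,\infty)\to\mathbb R$ continuous and $\psi(0)=1$. Then property 4 reads $\varphi_\tau(q)=\psi(\tau|q|^2)$. Inserting this into property 1 gives Cauchy's multiplicative equation
\[
\psi(a+b)=\psi(a)\psi(b),\qquad a,b\geq0 .
\]
Here we set $a=\tau|q|^2$, $b=s|q|^2$; fixing any $q\neq0$, the pairs $(a,b)$ range over all of $[0,\infty)^2$.

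The next task is to solve this equation among continuous real functions. First, $\psi(a)=\psi(a/2)^2\geq0$. Suppose $\psi(a_0)=0$ for some $a_0>0$. Then $\psi(a_0/2^n)=0$ for all $n$, and continuity gives $\psi(0)=0$, contradicting $\psi(0)=1$. Hence $\psi>0$, so $\log\psi$ is continuous and additive, and therefore linear. This yields $\psi(a)=e^{-Da}$ for some real $D$, and so $\widehat\mu_\tau(q)=e^{-D\tau|q|^2}$.

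It remains to show $D>0$. Boundedness $|\varphi_1|\le1$ already forces $D\geq0$. The second-moment hypothesis sharpens this. A finite second moment makes $\varphi_1$ twice differentiable at $0$, with Hessian $-\int x_\mu x_\nu\,d\mu_1$ (by centring the first derivatives vanish). Comparing with the expansion $1-D|q|^2+O(|q|^4)$ gives $\int|x|^2\,d\mu_1=2dD$. Since the second moment is nonzero, $D>0$.

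\emph{Main obstacle.} The delicate step is the continuity of $\psi$, and in particular excluding zeros of $\varphi_1$. Continuity of $\psi$ is inherited from $\varphi_1$, because every characteristic function is continuous. So the nonvanishing argument above, which relies on property 3 and on $\psi(0)=1$, is the only nontrivial point.

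Weak continuity and centring are then only used at the edges of the argument. Weak continuity covers $\tau=0$ and $\mu_0=\delta_0$, which is consistent with $\psi(0)=1$. Centring identifies the quadratic coefficient with the variance.
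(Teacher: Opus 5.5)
Your proof is correct. It uses the same ingredients as the paper's proof: characteristic functions, reality from $x\mapsto -x$, nonnegativity by squaring, exclusion of zeros, the exponential solution, and the second moment for $D>0$. The order is different, however.

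The paper first fixes $q$ and solves the semigroup law in $\tau$. It uses weak continuity ($\varphi_{\tau/n}(q)\to1$) both to exclude zeros and to write $\varphi_\tau(q)=e^{-\tau\psi(q)}$ with $\psi(q)\geq0$. Only then does it impose self-similarity to obtain $\psi(rq)=r^2\psi(q)$, and orthogonal invariance to obtain $\psi(q)=D|q|^2$.

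You instead use isotropy and self-similarity at the outset to reduce everything to one function of $a=\tau|q|^2$. The semigroup law then becomes Cauchy's multiplicative equation on $[0,\infty)$. The continuity you need comes from continuity of $\varphi_1$ in $q$, not from weak continuity in $\tau$.

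Your ordering buys economy. Since $\varphi_\tau(q)=\varphi_1(\sqrt\tau\,q)$, weak continuity follows from self-similarity. Reflection invariance together with a finite first moment already gives centring. Both hypotheses are therefore redundant in your argument. Your computation $\int|x|^2\,d\mu_1=2dD$ also makes the positivity of $D$ more explicit than the paper's one-line remark. Even more directly, $D=0$ would force $\mu_1=\delta_0$.

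The paper's ordering has a different advantage. It isolates what the semigroup law alone yields, a continuous nonnegative characteristic exponent $\psi(q)$, before any scaling is imposed. That is the natural form for the remarks following the proof, on a different scaling exponent or on dropping exact self-similarity so that symmetric jump components become possible. Your reduction to a function of $\tau|q|^2$ relies on exact square-root scaling from the start.
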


The self-contained proof in Appendix~\ref{app:kernel-proof} uses the
characteristic semigroup and its quadratic homogeneity.  For the stable-law
background, see Chapter~3 of Ref.~\cite{Sato:1999}; a related Gaussian
derivation from different axioms appears in Section~2 of
Ref.~\cite{Guo:2026arithmetic}.
The conclusion belongs to this probabilistic Euclidean class.  An FRG
regulator is a separate scale-dependent modification of the inverse
propagator, as discussed in Section~\ref{sec:wilsonian-information}.

There is a complementary entropy characterisation at fixed covariance.
If a density $f$ has the same mean and positive covariance as the Gaussian
$g$, and the entropy integrals exist, then
$\int f\log(f/g)\geq0$ and $\int f\log g=\int g\log g$ imply
$-\int f\log f\leq-\int g\log g$, with equality only for $f=g$ almost
everywhere.  This fixed-moment variational statement does not replace the
composition and exact self-similarity assumptions of
Proposition~\ref{prop:gaussian-unique}.

\begin{theorem}[Tensor lift and momentum-routing form]
\label{thm:gaussian-tensor-lift}
Let $F\in\mathcal S'((\mathbb R^d)^n)$ and apply $\cH_{a_i\tau}$ to its
$i$th coordinate, where $a_i>0$.  In Fourier space the tensor-product action
is
\begin{equation}
 \widehat F_\tau(p_1,\ldots,p_n)
 =\exp\!\left[-\tau\sum_{i=1}^na_i|p_i|^2\right]
   \widehat F(p_1,\ldots,p_n).
 \label{eq:gaussian-tensor-lift}
\end{equation}
Suppose, in addition, that translation invariance has been used to factor out
the single overall momentum-conservation distribution.  In adapted variables
$(P,\boldsymbol\xi)$, with $P:=\sum_{i=1}^np_i$ and $r=d(n-1)$, assume
\begin{equation}
 \widehat F_{\rm full}(P,\boldsymbol\xi)
 =(2\pi)^d\delta^{(d)}(P)\widehat F_{\rm red}(\boldsymbol\xi),
 \qquad \widehat F_{\rm red}\in\mathcal S'(\mathbb R^r),
 \label{eq:reduced-momentum-distribution}
\end{equation}
where an injective linear routing $p=L\boldsymbol\xi$ parametrises the chosen
momentum-conserving subspace $P=0$.  The induced action on the reduced
distribution is then
\begin{equation}
 \widehat F_{{\rm red},\tau}(\boldsymbol\xi)
 =\exp[-\tau\boldsymbol\xi^{\mathsf T}M_L\boldsymbol\xi]\,
  \widehat F_{\rm red}(\boldsymbol\xi),
 \qquad
 M_L=L^{\mathsf T}\diag(a_1I_d,\ldots,a_nI_d)L\succeq0.
 \label{eq:routed-gaussian-form}
\end{equation}
If $V$ is a routing subspace on which $M_L\succeq cI$ for some $c>0$, then
the multiplier is at most $e^{-\tau c q_0^2}$ on
$\{\boldsymbol\xi\in V:\|\boldsymbol\xi\|\geq q_0\}$.
Directions in $\Ker M_L$ are not suppressed by this statement.
\end{theorem}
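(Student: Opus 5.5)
The plan is to prove Theorem~\ref{thm:gaussian-tensor-lift} in three layers: the one-coordinate multiplier, its tensor extension to $\mathcal S'$, and the restriction to the momentum-conserving subspace.

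\textbf{Tensor action.} First I would fix the meaning of applying $\cH_{a_i\tau}$ to the $i$th coordinate. On Schwartz functions, $\cH_{a_i\tau}\otimes I$ is convolution in $x_i$ with $g_{a_i\tau}$. By Eq.~\eqref{eq:heat-multiplier}, in Fourier space it is multiplication by $e^{-a_i\tau|p_i|^2}$. The product
\begin{equation*}
 m(p)=\exp\bigl[-\tau\textstyle\sum_i a_i|p_i|^2\bigr]
\end{equation*}
is a Schwartz function, so multiplication by it is continuous on $\mathcal S'((\mathbb R^d)^n)$. The heat operators are self-adjoint under the real pairing because $g_\tau$ is even. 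The action on $F\in\mathcal S'$ is therefore defined by duality, $\langle \cH F,\phi\rangle=\langle F,\cH\phi\rangle$. Commuting the Fourier transform with this duality gives Eq.~\eqref{eq:gaussian-tensor-lift}. Since the factors act on different coordinates, they commute, and the order of application is irrelevant.

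\textbf{Reduction to the conserving subspace.} Next I would choose a linear change of variables $p\mapsto(P,\boldsymbol\xi)$ whose restriction to $\{P=0\}$ is inverted by $p=L\boldsymbol\xi$. Concretely, complete $L$ by any vector $p=E P+L\boldsymbol\xi$ with $\sum_i(EP)_i=P$. The multiplier $m$ is smooth, so multiplying the product distribution $\delta(P)\otimes\widehat F_{\rm red}$ by $m(EP+L\boldsymbol\xi)$ is legitimate. The key identity is $\psi(P,\boldsymbol\xi)\,\delta(P)=\psi(0,\boldsymbol\xi)\,\delta(P)$ for smooth $\psi$ with polynomially bounded derivatives. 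I would verify it by pairing with a test function.

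This reduces the multiplier to $m(L\boldsymbol\xi)$ acting on $\widehat F_{\rm red}$, and
\begin{equation*}
 \sum_i a_i|(L\boldsymbol\xi)_i|^2=\boldsymbol\xi^{\mathsf T}M_L\boldsymbol\xi,
\end{equation*}
which gives Eq.~\eqref{eq:routed-gaussian-form}. Positivity $M_L\succeq0$ is immediate because $M_L$ has the form $L^{\mathsf T}AL$ with $A\succ0$.

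\textbf{Main obstacle.} I expect the main care to lie in this middle step. The issue is making the decomposition $\widehat F_{\rm full}=(2\pi)^d\delta(P)\widehat F_{\rm red}$ precise as a tensor product in the adapted coordinates, including the Jacobian constant. The restriction of $m$ to $P=0$ must also be independent of the choice of complement $E$. Independence follows because only values of $m$ at $P=0$ enter. Any Jacobian factor is a constant common to both sides and cancels, since the multiplier acts pointwise.

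\textbf{Bound and limits of the statement.} Finally, for $\boldsymbol\xi\in V$ with $\|\boldsymbol\xi\|\ge q_0$, the assumption $M_L\succeq cI$ gives $\boldsymbol\xi^{\mathsf T}M_L\boldsymbol\xi\ge cq_0^2$. Monotonicity of the exponential then gives the bound $e^{-\tau c q_0^2}$. For directions in $\Ker M_L$ the multiplier equals $1$, which shows that the statement makes no suppression claim there.
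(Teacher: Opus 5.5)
Your proposal is correct and follows essentially the same route as the paper: the coordinatewise Fourier multipliers are combined into a product acting continuously on $\mathcal S'$, and that product multiplies the factored distribution $(2\pi)^d\delta^{(d)}(P)\widehat F_{\rm red}$, where it sees only its values on $P=0$ through the routing $p=L\xi$, followed by the quadratic-form bound. Your explicit completion $p=EP+L\xi$, the identity $\psi(P,\xi)\,\delta(P)=\psi(0,\xi)\,\delta(P)$, and the remarks on independence of $E$ and the constant Jacobian spell out what the paper's proof states in a single line.
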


The reduced-distribution hypothesis is the standard translation-invariant
$n$-point convention: the overall momentum delta is removed before
independent routing variables are introduced.
Equation~\eqref{eq:routed-gaussian-form} does not assert the restriction of an
arbitrary distribution to a linear subspace.

For two equally weighted coordinates, the routing can be displayed without
matrix notation.  With
\begin{equation}
 P_{\rm c}:=p_1+p_2,
 \qquad q_{\rm r}:=p_1-p_2,
 \qquad
 p_{1,2}=\frac12(P_{\rm c}\mathbin{\pm}q_{\rm r}),
 \label{eq:center-relative-routing}
\end{equation}
one has the exact identity
\begin{equation}
 |p_1|^2+|p_2|^2
 =\frac12\bigl(|P_{\rm c}|^2+|q_{\rm r}|^2\bigr).
 \label{eq:center-relative-gaussian-energy}
\end{equation}
Thus an equal-weight tensor Gaussian remains diagonal in centre and relative
momentum, with multiplier
$\exp[-a\tau(|P_{\rm c}|^2+|q_{\rm r}|^2)/2]$.
For unequal weights, the same substitution also produces the mixed term
$(a_1-a_2)P_{\rm c}\cdot q_{\rm r}/2$; it cannot be discarded unless the
weights or the kinematics remove it.

For a two-point function with $p_1=Q$, $p_2=-Q$, and
$a_1=a_2=1/2$, Eq.~\eqref{eq:gaussian-tensor-lift} reduces to
Eq.~\eqref{eq:external-gaussian-representation}.  For an internal transfer,
Eq.~\eqref{eq:routed-gaussian-form} generally contains a different quadratic
form and may contain mixed momentum terms.  We denote that separately
specified action by
\begin{equation}
 \cG^{\rm rel}_{\tau_q}A(\boldsymbol\xi)
 :=e^{-\tau_q\boldsymbol\xi^{\mathsf T}M_L\boldsymbol\xi}
 A(\boldsymbol\xi).
 \label{eq:relative-gaussian-representation}
\end{equation}
Equation~\eqref{eq:relative-gaussian-representation} for an internal loop
momentum follows from the declared coordinate action and routing.  These data
must be supplied separately from any external smearing of the composite
insertion.

\subsection{Gaussian momentum-transfer criterion}
\label{sec:gaussian-criterion}

Let $A(q)$ denote a projected contribution as a function or distribution of
a specified set of transferred momenta.  Once the coordinate action and the
routing have supplied a positive-semidefinite matrix $M$, set
\begin{equation}
 A_\tau(q):=\cG^{\rm rel}_\tau A(q)
 =e^{-\tau q^{\mathsf T}Mq}A(q).
 \label{eq:averaged-amplitude}
\end{equation}
This definition quantifies the Gaussian action on a specified contribution.
The corresponding graph and the scale dependence of the unfiltered amplitude
remain part of the effective dynamics.

\begin{theorem}[Fixed-transfer suppression]
\label{thm:fixed-transfer}
For $\delta>0$, let
$\Omega_\delta=\{q:q^{\mathsf T}Mq\geq\delta^2\}$.
The following statements hold.
\begin{enumerate}[label=(\roman*)]
 \item If $A\in L^\infty(\Omega_\delta)$, then
 \begin{equation}
  \|A_\tau\|_{L^\infty(\Omega_\delta)}
  \leq e^{-\tau\delta^2}
       \|A\|_{L^\infty(\Omega_\delta)}
  \xrightarrow[\tau\to\infty]{}0.
  \label{eq:uniform-suppression}
 \end{equation}
 \item If $A$ is a tempered distribution and
 $\chi\in C_c^\infty(\mathbb R^d)$ has support in $\Omega_\delta$, then
 \begin{equation}
   \langle A_\tau,\chi\rangle
   =\langle A,e^{-\tau q^{\mathsf T}Mq}\chi\rangle
   \xrightarrow[\tau\to\infty]{}0.
   \label{eq:distribution-suppression}
  \end{equation}
 \item Suppose a scale-dependent contribution has the form
 $A_\tau^{\rm full}=e^{-\tau q^{\mathsf T}Mq}\widetilde A_\tau$ and
 \begin{equation}
  \|\widetilde A_\tau\|_{L^\infty(\Omega_\delta)}
  \leq C e^{a\tau}
  \quad\hbox{with}\quad a<\delta^2.
  \label{eq:amplitude-growth-condition}
 \end{equation}
 Then $A_\tau^{\rm full}\to0$ in $L^\infty(\Omega_\delta)$.
\end{enumerate}
\end{theorem}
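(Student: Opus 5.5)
All three parts rest on the single pointwise bound $0<e^{-\tau q^{\mathsf T}Mq}\leq e^{-\tau\delta^2}$ for $q\in\Omega_\delta$, which holds because $M\succeq0$ and $q^{\mathsf T}Mq\geq\delta^2$ there. I will first record this bound, together with the fact that $\Omega_\delta$ is closed. Then I treat the three statements in order. The only nontrivial step is (ii), where the bound must be upgraded to Schwartz-seminorm control.

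For (i), multiply $|A(q)|\leq\|A\|_{L^\infty(\Omega_\delta)}$ (a.e.\ on $\Omega_\delta$) by the multiplier bound and take the essential supremum. This gives Eq.~\eqref{eq:uniform-suppression}, and the limit follows from $\delta^2>0$. Part (iii) is the same computation applied to $\widetilde A_\tau$: the product is bounded by $Ce^{a\tau}e^{-\tau\delta^2}=Ce^{-(\delta^2-a)\tau}$, which tends to zero since $a<\delta^2$. No uniformity in $q$ beyond the $L^\infty$ bound is needed.

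For (ii), I will first justify the identity. The function $e^{-\tau q^{\mathsf T}Mq}$ is smooth, and all its derivatives have polynomial growth, since they are polynomials times a bounded Gaussian factor when $M\succeq0$. Multiplication by it therefore maps $\mathcal S$ to $\mathcal S$ continuously, so $A_\tau$ is defined by duality as stated. It remains to show that $\chi_\tau:=e^{-\tau q^{\mathsf T}Mq}\chi\to0$ in $\mathcal S$, or equivalently in $C_c^\infty$ on the fixed compact set $K=\operatorname{supp}\chi\subset\Omega_\delta$, because $A$ is continuous on $\mathcal S$ and all $\chi_\tau$ are supported in $K$. By Leibniz, $\partial^\alpha\chi_\tau$ is a finite sum of terms $\partial^\beta\chi\cdot\partial^{\alpha-\beta}e^{-\tau q^{\mathsf T}Mq}$. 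Each derivative of the Gaussian equals $P_{\gamma}(\tau,q)e^{-\tau q^{\mathsf T}Mq}$, where $P_\gamma$ is a polynomial in $\tau$ and $q$ of degree at most $|\gamma|$ in $\tau$. On $K$ the $q$-dependence is bounded, so every seminorm $\sup|q^\beta\partial^\alpha\chi_\tau|$ is at most $c_{\alpha,\beta}(1+\tau)^{|\alpha|}e^{-\tau\delta^2}\to0$. This polynomial-times-exponential estimate is the main point requiring care. I expect it to go through directly, because the support of $\chi$ lies in the region where $q^{\mathsf T}Mq\geq\delta^2$, and this is exactly what prevents the polynomial growth in $\tau$ from winning.

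Finally, a tempered distribution is bounded by finitely many Schwartz seminorms, so $|\langle A,\chi_\tau\rangle|\lesssim(1+\tau)^Ne^{-\tau\delta^2}\to0$. Here the dimension of $q$ is whatever the routing supplies; the statement writes $\mathbb R^d$, and nothing in the argument depends on that choice.
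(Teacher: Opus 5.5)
Your proposal is correct and follows essentially the same route as the paper: the pointwise bound $e^{-\tau q^{\mathsf T}Mq}\leq e^{-\tau\delta^2}$ on $\Omega_\delta$ gives (i) and (iii) directly. For (ii), the paper likewise bounds every Schwartz seminorm of $\chi_\tau=e^{-\tau q^{\mathsf T}Mq}\chi$ by a polynomial in $\tau$ times $e^{-\tau\delta^2}$ on the fixed support and then uses continuity of $A$; your Leibniz-rule count (degree at most $|\alpha|$ in $\tau$) makes explicit what the paper states in one line.
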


The proof is given in Appendix~\ref{app:kernel-proof}.
The $L^\infty$ bounds use essential suprema.  For bounded continuous
representatives they also give ordinary uniform bounds on the indicated
region.
Part~(ii) is the appropriate formulation for amplitudes containing threshold distributions or other non-function singularities.
Part~(iii) records the additional hypothesis required when the amplitude
itself runs with the averaging scale.  Without such a bound, growth of
$\widetilde A_\tau$ can offset the Gaussian multiplier.
The convergence is not uniform in a shrinking neighbourhood of the origin.
Indeed, for $q=u/\sqrt\tau$,
\begin{equation}
 e^{-\tau q^{\mathsf T}Mq}=e^{-u^{\mathsf T}Mu},
 \label{eq:soft-scaling}
\end{equation}
so the soft scaling region, together with $\Ker M$, remains in the problem.
Contributions suppressed at fixed transfer may also leave local operators, derivative corrections, and running couplings after the fast variables are integrated out.
Interpreting those remnants as a stochastic noise kernel requires an
independently specified state-level construction.

Momentum routing and graph topology are independent data.
A bubble, crossed ladder, vertex correction, or more general skeleton graph may all be evaluated at zero net transfer.
Conversely, a ladder subgraph may occur away from zero transfer.
The statement $q=0$ therefore selects a kinematic region, not a ladder topology.

\begin{proposition}[Preservation of an isolated pole]
\label{prop:pole-preservation}
Let a projected Euclidean form factor be meromorphic near $Q^2=0$ and let $w(Q^2)$ be analytic there with $w(0)=1$ and no zero in a neighbourhood of the origin.
Then multiplication by $w$ preserves both the order and residue of a possible simple pole at $Q^2=0$ and cannot create such a pole from a regular form factor.
In particular, $w(Q^2)=e^{-\tau_Q Q^2}$ preserves
\begin{equation}
 Z_0=\lim_{Q^2\to0}Q^2 C(Q^2).
 \label{eq:pole-residue}
\end{equation}
\end{proposition}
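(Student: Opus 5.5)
The argument reduces to the Laurent expansion of $C$ at $Q^2=0$. Write $s=Q^2$ and let $C$ be meromorphic on a punctured disc $0<|s|<r$, with $w$ analytic and zero-free on $|s|<r$ (shrink $r$ if needed). Since $C$ is meromorphic at the origin, it has a Laurent expansion
\[
C(s)=\sum_{k\geq -m}c_k s^k,
\]
where $m\ge0$ is the pole order, with $m=0$ meaning that $C$ extends analytically. Similarly
\[
w(s)=1+\sum_{k\geq1}w_k s^k.
\]
The product $wC$ is then meromorphic at the origin, with Laurent series given by the Cauchy product of the two series.

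\emph{Order.} The leading coefficient of $wC$ is $w(0)c_{-m}=c_{-m}$. Hence if $c_{-m}\neq0$, the product has a pole of exactly the same order $m$, and in particular a simple pole stays simple. For the converse direction, $1/w$ is analytic near $0$ because $w$ has no zero there. If $wC$ had a pole while $C$ were regular, then $C=(wC)/w$ would be analytic near $0$ times meromorphic, and its leading coefficient would again be unchanged, which is a contradiction. More directly, a regular $C$ times an analytic $w$ is analytic, so no pole can be created. The zero-free hypothesis is what makes the order statement symmetric; analyticity of $w$ alone already rules out creating a pole.

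\emph{Residue.} For a simple pole ($m=1$), compute the limit directly:
\[
\lim_{s\to0}s\,w(s)C(s)=w(0)\lim_{s\to0}sC(s)=Z_0 .
\]
Both limits exist, the first by continuity of $w$ and the second because $sC(s)$ extends analytically. So the residue of $wC$ is $w(0)c_{-1}=c_{-1}$. I would note that this preserves the coefficient of $1/Q^2$, that is, the residue in the variable $Q^2$, which is the $Z_0$ defined in Eq.~\eqref{eq:pole-residue}.

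\emph{Specialisation.} The function $w(Q^2)=e^{-\tau_QQ^2}$ is entire, never vanishes, and satisfies $w(0)=1$, so the general statement applies. This gives invariance of $Z_0$ under the external Gaussian representation Eq.~\eqref{eq:external-gaussian-representation}.

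The only step needing any care is the scope of the conclusion, not an estimate: only the leading (pole) coefficient is preserved. The regular part of $C$ is shifted by terms such as $w_1c_{-1}$, so the finite parts of the Laurent expansion are not invariant, and I would remark on this explicitly.
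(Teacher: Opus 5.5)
Your proof is correct and follows essentially the paper's route. The paper writes $C(Q^2)=Z_0/Q^2+C_{\rm rem}(Q^2)$ and multiplies by $e^{-\tau_QQ^2}=1-\tau_QQ^2+O(Q^4)$ to get $Z_0/Q^2+C_{\rm rem}-\tau_QZ_0+o(1)$; it then notes that a regular form factor times an entire function stays regular, which matches your leading-coefficient argument, including your shift $w_1c_{-1}=-\tau_QZ_0$. The paper's version assumes only $Q^2C_{\rm rem}\to0$ rather than meromorphy of the remainder, and your limit computation $\lim_{s\to0}s\,w(s)C(s)=w(0)\lim_{s\to0}sC(s)$ already covers that weaker setting.
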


This proposition concerns $\cG^{\rm ext}_{\tau_Q}$ acting on a fixed
correlator.
Interactions integrated along an RG flow can change the correlator through the effective action; that dynamical effect is addressed by the kernel criterion below.
The positive spectral measure continues to be defined from the original
Wightman correlator.  In particular, multiplication turns a
contact polynomial into an analytic, non-polynomial short-range term.  The
physical spectral decomposition and its contact subtraction are therefore
defined before the diagnostic filter is applied.

\subsection{An additive-source realisation}
\label{sec:source-realisation}

The response operator should be identified from the linearisation of the
self-consistent dynamics rather than from the topology of a selected set of
graphs.  An additive-source specialisation of
Proposition~\ref{prop:same-rule-response} uses a Banach response space
$\mathcal X_{\cC}$ and maps $\mathsf S_X,\mathsf R_X$ of the types in
Eq.~\eqref{eq:native-four-derivatives}.  Let
$\mathfrak F_{\tau_q}(\,\cdot\,;z,\eta):\mathcal X_{\cC}\to
\mathcal X_{\cC}$ and consider, near a reference solution, the sourced
equation
\begin{equation}
 X=\mathfrak F_{\tau_q}(X;z,\eta)
   +\mathsf S_{X,\tau_Q}(z)J.
 \label{eq:sourced-self-consistency}
\end{equation}
Here $J\in\mathbb C^m$, while $\mathsf S_{X,\tau_Q}:\mathbb C^m\to
\mathcal X_{\cC}$ and $\mathsf R_{X,\tau_Q}:\mathcal X_{\cC}\to
\mathbb C^m$ may include external Gaussian factors.  The relative
scale $\tau_q$, if present, belongs to the self-consistent map itself.  The
analytic term $C_{\rm reg}$ is understood in the same external convention as
the source and readout.

For a local solution branch $X(J;z,\eta)$, its measured output is a
map into the external operator coordinates,
\begin{equation}
 \begin{aligned}
  \mathcal Y(J;z,\eta)
  &=\mathcal Y_{\rm reg}(J;z,\eta)\\
  &\quad+\mathsf R_{X,\tau_Q}(z)X(J;z,\eta),\\
  C_{\tau_Q,\tau_q}&:=D_J\mathcal Y\big|_{J=0},
  &C_{\rm reg}&:=D_J\mathcal Y_{\rm reg}\big|_{J=0}.
 \end{aligned}
 \label{eq:measured-response-output}
\end{equation}
Thus the correlator or susceptibility is the Jacobian from the declared
physical source coordinates to the declared readout coordinates.  Both maps
are needed: a critical response direction contributes only when it is excited
by the source and detected by the readout.
When the sources couple to physical insertions and the output is their
normalised expectation value, differentiation gives the connected two-point
matrix, with the specified source-pairing convention and local contact
terms.  Identifying an abstract response with this physical source
derivative is an additional realisation requirement, not a consequence of
the factorisation alone.

\begin{proposition}[Response factorisation]
\label{prop:response-factorisation}
Suppose $\mathfrak F_{\tau_q}$ is continuously Fr\'echet differentiable in
$X$ on a neighbourhood of a solution $X_*$ of
Eq.~\eqref{eq:sourced-self-consistency} with $J=0$, and suppose
$\mathcal Y_{\rm reg}$ is Fr\'echet differentiable in $J$ at the origin.  Set
\begin{equation}
 \cB_{\tau_q}(z;\eta)
 :=D_X\mathfrak F_{\tau_q}(X_*;z,\eta).
 \label{eq:response-jacobian}
\end{equation}
If $I-\cB_{\tau_q}$ is invertible, the Banach-space implicit-function theorem
gives a differentiable local solution branch, and its measured response has
the form
\begin{equation}
 C_{\tau_Q,\tau_q}(z;\eta)
 =C_{\rm reg}(z;\eta)
 +\mathsf R_{X,\tau_Q}(z)
  [I-\cB_{\tau_q}(z;\eta)]^{-1}
  \mathsf S_{X,\tau_Q}(z).
 \label{eq:coarse-grained-response-factorisation}
\end{equation}
At a critical point the same identity holds as a meromorphic continuation
whenever the Fredholm hypotheses below are satisfied.
\end{proposition}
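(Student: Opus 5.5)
The plan is to reduce the statement to Proposition~\ref{prop:same-rule-response}, treating Eq.~\eqref{eq:sourced-self-consistency} as a residual equation. For fixed $(z,\eta)$ set
\begin{equation*}
 \mathcal E(X,J):=X-\mathfrak F_{\tau_q}(X;z,\eta)-\mathsf S_{X,\tau_Q}(z)J,
\end{equation*}
a map from a neighbourhood of $(X_*,0)$ in $\mathcal X_{\cC}\times\mathbb C^m$ into $\mathcal X_{\cC}$. It is $C^1$ because $\mathfrak F_{\tau_q}$ is $C^1$ in $X$ by hypothesis and the source term is bounded linear in $J$, since $J$ ranges over a finite-dimensional space. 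Its partial derivatives at $(X_*,0)$ are $D_X\mathcal E=I-\cB_{\tau_q}(z;\eta)$ and $\mathcal E_J=-\mathsf S_{X,\tau_Q}(z)$. Invertibility of $I-\cB_{\tau_q}$ is bounded invertibility, by the open mapping theorem, because $\cB_{\tau_q}$ is a bounded operator as a Fréchet derivative. The Banach implicit-function theorem therefore gives a unique $C^1$ branch $X(J)$ with $X(0)=X_*$ and
\begin{equation*}
 X_J(0)=[I-\cB_{\tau_q}]^{-1}\mathsf S_{X,\tau_Q}.
\end{equation*}

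Next, differentiate the output in Eq.~\eqref{eq:measured-response-output} at $J=0$. The regular part contributes $C_{\rm reg}$ by definition. The readout term is the composition of the bounded linear map $\mathsf R_{X,\tau_Q}(z)$ with the differentiable branch, so the chain rule gives $\mathsf R_{X,\tau_Q}X_J(0)$. Adding the two contributions yields Eq.~\eqref{eq:coarse-grained-response-factorisation}. Equivalently, this is Eq.~\eqref{eq:native-response} with $B=\cB_{\tau_q}$, $S=\mathsf S_{X,\tau_Q}$, $R=\mathsf R_{X,\tau_Q}$ and $D=C_{\rm reg}$. The one subtlety is that $\mathcal Y_{\rm reg}$ is assumed differentiable only at the origin, not $C^1$. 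That suffices, because only the derivative at $J=0$ is needed, and the sum of a function differentiable at a point and a $C^1$ composition is differentiable at that point.

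The last sentence, about critical points, is the main obstacle, since the hypotheses it refers to come later. I would read it as a conditional assertion. Suppose $z\mapsto\cB_{\tau_q}(z;\eta)$ is analytic with values in the compact operators, or more generally such that $I-\cB_{\tau_q}$ is an analytic Fredholm family of index zero that is invertible at one point of a connected domain. Suppose also that $\mathsf R_{X,\tau_Q}$, $\mathsf S_{X,\tau_Q}$ and $C_{\rm reg}$ are analytic there. The analytic Fredholm theorem (Ref.~\cite{Kato:1995}, Chapter~VII) then shows that $[I-\cB_{\tau_q}]^{-1}$ is meromorphic with finite-rank principal parts. Consequently the right-hand side of Eq.~\eqref{eq:coarse-grained-response-factorisation} is meromorphic. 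It agrees with the measured response on the open invertible set, so by the identity theorem it is the unique meromorphic continuation, including at a critical point where $1\in\spec\cB_{\tau_q}$. I would state these hypotheses explicitly and defer them to the later Fredholm section rather than prove anything new here.
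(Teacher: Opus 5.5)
Your proposal is correct and follows the paper's proof. The paper applies the Banach implicit-function theorem to $G(X,J)=X-\mathfrak F_{\tau_q}(X;z,\eta)-\mathsf S_{X,\tau_Q}(z)J$, linearises to $(I-\cB)\delta X=\mathsf S_{X,\tau_Q}\delta J$, composes with the readout, adds the direct term, and leaves the critical-point statement to the analytic Fredholm theorem under Assumption~\ref{ass:fredholm}, exactly as you do. Your extra remarks are correct refinements of steps the paper leaves implicit: the bounded inverse from the open mapping theorem, the fact that differentiability of $\mathcal Y_{\rm reg}$ only at the origin suffices, and the identity theorem fixing the continuation.
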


Thus an external Gaussian changes the source and readout maps, but not the
spectrum of the fixed response operator in Eq.~\eqref{eq:response-jacobian}.
For a source--source matrix with the same scalar filter on both insertions,
one may write on the Euclidean axis
\begin{equation}
 \begin{aligned}
  \mathsf S_{X,\tau_Q}(z)&=\mathsf S_{X,0}(z)W_{\tau_Q}(z),\\
  \mathsf R_{X,\tau_Q}(z)&=W_{\tau_Q}(z)\mathsf R_{X,0}(z),\\
  C_{{\rm reg},\tau_Q}(z)
   &=W_{\tau_Q}(z)C_{{\rm reg},0}(z)W_{\tau_Q}(z),\\
  W_{\tau_Q}(z)&=e^{-\tau_Q z/2}I_m.
 \end{aligned}
 \label{eq:external-source-readout-filter}
\end{equation}
Since $W_{\tau_Q}(0)=I_m$, this factorisation also makes the preservation of
an existing residue manifest.
Internal averaging affects that spectrum through a specified
$\tau_q$-dependence of $\mathfrak F_{\tau_q}$.  This dependence must be
derived from the declared Gaussian action.

%% file: sections-en/sec03_operator_classification.tex
\section{Channels, elimination, and visible critical structure}
\label{sec:classification}

The descended or residual-completed system of Section~\ref{sec:criteria}
supplies a response space and the four maps $B,S,R,D$.  A channel is a
specified source-to-observation problem on this space.  A bounded projector
$P_X$ selects a retained computational sector and $Q_X=I_X-P_X$ its
complement.  The retained sector is closed only when the update does not
send it into the complement.  Here $\cB=B$, $\mathsf S_X=S$, $\mathsf R_X=R$, and
$C_{\rm reg}=D$ retain the notation also used in the QFT realisation below.

\subsection{Reference-state symmetry and exact elimination}
\label{sec:general-channels}

The relevant symmetry is the symmetry of the reference state as well as
of the update.  A transformation that moves the reference state relates
two different linear-response problems; it need not split either one
into invariant channels.

\begin{proposition}[Equivariant linearisation at the reference state]
\label{prop:stabilizer-linearisation}
Let a group act by bounded linear isomorphisms $U_g$ on the state space,
and suppose $F(U_gX)=U_gF(X)$.  At $X_*$,
\begin{equation}
 DF(U_gX_*)U_g=U_gDF(X_*).
 \label{eq:equivariant-background}
\end{equation}
Consequently $DF(X_*)$ commutes with $U_g$ for the stabilizer
$G_{X_*}=\{g:U_gX_*=X_*\}$.  The same fixed-background restriction applies
to source and observation intertwiners when their equivariance holds.
\end{proposition}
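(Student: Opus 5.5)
The plan is to differentiate the equivariance identity $F(U_gX)=U_gF(X)$ in $X$ at the reference point $X_*$ by the chain rule, and then specialise to the stabilizer. Both sides are maps $\mathcal X\to\mathcal X$ that are Fr\'echet differentiable near $X_*$; this is the regularity already assumed in Proposition~\ref{prop:same-rule-response}, and it will be stated explicitly. Since $U_g$ is a bounded linear isomorphism, its derivative is $U_g$ itself. Applying the chain rule to the left side at $X_*$ gives $DF(U_gX_*)\,U_g$. The right side is the composition of $F$ with a bounded linear map, so its derivative is $U_g\,DF(X_*)$. Equating the two derivatives gives Eq.~\eqref{eq:equivariant-background}.

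A small technical point is that the identity must hold on a neighbourhood of $X_*$, and $U_g$ must map that neighbourhood into the domain of $F$. I will assume that the group action preserves the state domain, as the statement implicitly does. Openness of $U_g(\text{neighbourhood})$ follows because $U_g$ is a homeomorphism.

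For $g\in G_{X_*}$ we have $U_gX_*=X_*$. Substituting this into Eq.~\eqref{eq:equivariant-background} gives $DF(X_*)U_g=U_gDF(X_*)$, so $B=DF(X_*)$ commutes with every $U_g$ in the stabilizer. The stabilizer is a subgroup because $U_{gh}=U_gU_h$ and $U_{g^{-1}}=U_g^{-1}$. This is not strictly needed for the commutation claim, but it justifies calling $G_{X_*}$ the relevant symmetry group for a channel decomposition.

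For sources and observations, the same argument applies to the intertwining relations. For the sources these are $F(U_gX,V_gh)=U_gF(X,h)$ with source representation $V_g$. For the observations they are $O(U_gX,V_gh)=W_gO(X,h)$ with output representation $W_g$. Taking partial derivatives at $(X_*,0)$ gives $D_hF(U_gX_*,0)V_g=U_gS$ and $D_XO(U_gX_*,0)U_g=W_gR$, together with the analogous relation for $D$. These reduce to $SV_g=U_gS$, $RU_g=W_gR$ and $DV_g=W_gD$ when $g$ fixes the reference state; the source reference $h=0$ is automatically fixed by a linear $V_g$.

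The main obstacle is not analytic. It is making precise which point the derivative is evaluated at. One must keep the background $U_gX_*$ on the left of Eq.~\eqref{eq:equivariant-background}: for $g$ outside the stabilizer the identity relates the linearisations at two different reference states and gives no commutation. This is the content of the remark preceding the proposition, and the proof will state it explicitly.
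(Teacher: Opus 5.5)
Your proposal is correct and follows essentially the paper's own argument: apply the chain rule to the equivariance identity, observe that for $g\in G_{X_*}$ both Jacobians are evaluated at the same base point, and differentiate the sourced and observation identities in the same way. Your explicit relations $SV_g=U_gS$, $RU_g=W_gR$ and $DV_g=W_gD$ write out what the paper's proof states in one line.
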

\begin{proof}
Differentiate the equivariance identity with respect to $X$.  For
$g\in G_{X_*}$ the two Jacobians have the same base point.  Differentiating
the sourced equivariance and observation identities gives their
intertwining relations.  A broken symmetry instead relates different
backgrounds and need not block-diagonalise one of their Jacobians.
\end{proof}

Symmetry provides invariant sectors only after these fixed-background
conditions and the relevant bounded projectors have been established.
It does not supply the nonzero source and readout overlaps required below.
Exact finite systems and finite approximations to infinite systems both
admit block elimination, but only the latter need cutoff convergence to
identify an infinite target.

\begin{proposition}[Response-closed channel]
\label{prop:response-closed-channel}
The spectrum of the restriction $\cB|_{\Ran P_X}$ is represented by
$P_X\cB P_X$ only if
\begin{equation}
 Q_X\cB P_X=0.
 \label{eq:response-channel-invariance}
\end{equation}
If Eq.~\eqref{eq:response-channel-invariance} fails and
$I_X-Q_X\cB Q_X$ is invertible in the domain under consideration, elimination of
the complementary sector gives the exact effective response operator
\begin{equation}
 \cB_{\rm eff}
 =P_X\cB P_X
  +P_X\cB Q_X(I_X-Q_X\cB Q_X)^{-1}Q_X\cB P_X.
 \label{eq:feshbach-response}
\end{equation}
 Writing $D_Q:=I_X-Q_X\cB Q_X$, the associated source and readout maps are
\begin{align}
 \mathsf S_{X,{\rm eff}}
 &=P_X\mathsf S_X+P_X\cB Q_XD_Q^{-1}Q_X\mathsf S_X,
 \label{eq:feshbach-source}\\
 \mathsf R_{X,{\rm eff}}
 &=\mathsf R_XP_X+\mathsf R_XQ_XD_Q^{-1}Q_X\cB P_X.
 \label{eq:feshbach-readout}
\end{align}
If the unreduced response contains the analytic term $C_{\rm reg}$ of
Eq.~\eqref{eq:coarse-grained-response-factorisation}, its reduced value is
\begin{equation}
 C_{{\rm reg},{\rm eff}}
 =C_{\rm reg}+\mathsf R_XQ_XD_Q^{-1}Q_X\mathsf S_X.
 \label{eq:feshbach-direct-response}
\end{equation}
The full response is therefore
$C_{{\rm reg},{\rm eff}}
 +\mathsf R_{X,{\rm eff}}(I_X-\cB_{\rm eff})^{-1}
  \mathsf S_{X,{\rm eff}}$.
Consequently, the spectrum of $P_X\cB P_X$ alone has no channel interpretation
when the omitted sector remains coupled.
If $\cB(z)$ is a compact analytic family and $D_Q(z)^{-1}$ exists throughout
a neighbourhood, then $\cB_{\rm eff}(z)$ is a compact analytic family on
$\Ran P_X$.  If $I_X-\cB_{\rm eff}$ is invertible at one point and the
remaining source, readout, background, and threshold hypotheses of
Assumption~\ref{ass:fredholm} hold, the Fredholm pole criterion applies
directly to this effective response.
\end{proposition}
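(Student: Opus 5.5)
The plan is to write the unreduced linearised response problem in $2\times2$ block form with respect to $\Ran P_X\oplus\Ran Q_X$ and eliminate the $Q_X$-component exactly. For the first claim, I would decompose $\cB$ into its four blocks $P_X\cB P_X$, $P_X\cB Q_X$, $Q_X\cB P_X$, $Q_X\cB Q_X$. If $Q_X\cB P_X=0$, then $\Ran P_X$ is invariant and $\cB|_{\Ran P_X}=P_X\cB P_X|_{\Ran P_X}$. If instead $Q_X\cB P_X\neq0$, there is some $x\in\Ran P_X$ with $\cB x\notin\Ran P_X$, so the restriction does not even map $\Ran P_X$ into itself. Its spectrum as an operator on $\Ran P_X$ is then undefined, and $P_X\cB P_X$ is only a compression. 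The simple example $\cB=\begin{pmatrix}0&1\\1&0\end{pmatrix}$ with $P_X=\diag(1,0)$ shows that compressions and actual spectra differ.

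For the elimination, I would take the sourced linear equation $(I_X-\cB)X=\mathsf S_XJ$ with output $Y=C_{\rm reg}J+\mathsf R_XX$, and write $X=X_P+X_Q$. The $Q_X$-row reads $D_QX_Q=Q_X\cB P_XX_P+Q_X\mathsf S_XJ$. Since $D_Q$ is invertible on $\Ran Q_X$ (I would check that $D_Q$ preserves $\Ran Q_X$ and is the identity on $\Ran P_X$, so its inverse commutes with $Q_X$), this gives $X_Q=D_Q^{-1}Q_X(\cB P_XX_P+\mathsf S_XJ)$. Substituting into the $P_X$-row gives $(I_X-\cB_{\rm eff})X_P=\mathsf S_{X,{\rm eff}}J$, with $\cB_{\rm eff}$ and $\mathsf S_{X,{\rm eff}}$ exactly as in Eqs.~\eqref{eq:feshbach-response} and \eqref{eq:feshbach-source}. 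Substituting $X_Q$ into the readout gives $Y=[C_{\rm reg}+\mathsf R_XQ_XD_Q^{-1}Q_X\mathsf S_X]J+\mathsf R_{X,{\rm eff}}X_P$, which yields Eqs.~\eqref{eq:feshbach-readout} and \eqref{eq:feshbach-direct-response}.

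The equivalence of invertibility then follows from the Schur complement identity: $I_X-\cB$ is invertible iff $I-\cB_{\rm eff}$ is invertible on $\Ran P_X$, given that $D_Q$ is invertible. Consequently the full response equals $C_{{\rm reg},{\rm eff}}+\mathsf R_{X,{\rm eff}}(I-\cB_{\rm eff})^{-1}\mathsf S_{X,{\rm eff}}$, which agrees with Eq.~\eqref{eq:coarse-grained-response-factorisation}. The remark about the spectrum of $P_X\cB P_X$ follows because the correction term $P_X\cB Q_XD_Q^{-1}Q_X\cB P_X$ is generally nonzero.

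For the analytic-family statement, I would argue as follows. Since $z\mapsto D_Q(z)$ is analytic and invertible throughout the neighbourhood, $D_Q^{-1}$ is analytic there. Products and sums of analytic families are analytic. Compactness holds because $P_X\cB P_X$ is compact and the correction term contains the compact factor $\cB$, with all other factors bounded; compact operators form an ideal. Restricting to $\Ran P_X$, a closed subspace because $P_X$ is a bounded projector, preserves compactness. Invertibility of $I-\cB_{\rm eff}$ at one point together with the remaining hypotheses of Assumption~\ref{ass:fredholm} are exactly the inputs of the Fredholm pole criterion, so it applies directly.

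The main obstacle is bookkeeping rather than depth. One must treat $D_Q^{-1}$ as an operator on $\Ran Q_X$, or equivalently use $D_Q^{-1}Q_X$, and verify that the formulas with $D_Q$ on all of $X$ agree with those on $\Ran Q_X$. One must also check the Schur complement invertibility equivalence in the Banach setting, which I would do by exhibiting the explicit block inverse.
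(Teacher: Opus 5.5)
Your proposal is correct and follows the paper's proof: write $I_X-\cB$ in block form on $\Ran P_X\oplus\Ran Q_X$, solve the $Q_X$ row first, substitute into the $P_X$ row and the readout, and then get compactness from the ideal property and analyticity from the analyticity of inversion. Your additional checks are consistent refinements of steps the paper leaves implicit: that $D_Q$ commutes with $P_X$ and so $D_Q^{-1}$ preserves $\Ran Q_X$, the Schur-complement equivalence of invertibility, and the $2\times2$ example.
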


\begin{proof}
Write $I-\cB$ in $P\oplus Q$ block form and solve the $Q$ equation before the
$P$ equation.  This gives Eqs.~\eqref{eq:feshbach-response}--
\eqref{eq:feshbach-direct-response}.  Compactness of $\cB_{\rm eff}$ follows
because bounded left and right multiplication preserve compactness, while
analyticity follows from analyticity of the inverse map on the open set of
bounded invertible operators.
\end{proof}

This is the Feshbach--Schur block elimination adapted to response maps between
the stated spaces~\cite{Feshbach:1958}.  It gives two admissible procedures:
enlarge the channel until Eq.~\eqref{eq:response-channel-invariance} holds, or
retain the energy-dependent operator Eq.~\eqref{eq:feshbach-response}.
The second term accounts for excursions into the eliminated sector and
their return to the retained one; the source, readout and direct terms
account for the other ways that sector contributes to the observation.

For the compatible physical-channel realisation of
Eq.~\eqref{eq:channel-realisation-compatibility},
$Q_X\mathsf S_X=0$ and $\mathsf R_XQ_X=0$.  Hence the displayed general block
formulae reduce to
\begin{equation}
 \mathsf S_{X,{\rm eff}}=\mathsf S_X,
 \qquad
 \mathsf R_{X,{\rm eff}}=\mathsf R_X,
 \qquad
 C_{{\rm reg},{\rm eff}}=C_{\rm reg}.
 \label{eq:feshbach-compatible-simplification}
\end{equation}
These simplifications apply only to that specially compatible physical
realisation.  A general generated source and observation can couple to
$Q_X$, and then all four unsimplified terms are necessary, as the two-source
example in Section~\ref{sec:neumann-example} demonstrates.

When differentiating with respect to a parameter $a$, use
$\partial_aD_Q^{-1}=D_Q^{-1}Q_X(\partial_a\cB)Q_XD_Q^{-1}$ for fixed
projectors.  The product rule must also differentiate both off-diagonal
blocks, sources, readouts, and direct term.  Derivatives along a reference
branch include its state displacement.  A moving projector additionally
contributes its own derivatives.  Appendix~\ref{app:kernel-proof} records
the fixed-projector formula.  Frequency dependence generated by this
elimination can represent memory, not a memoryless one-step update.

\subsection{Isolated Fredholm response}
\label{sec:kernel-criterion}

A singular inverse need not give a singular measured response: the source
must excite the critical mode and the readout must detect it.  The next
result separates these issues.  The sign of the residue and its
interpretation as a physical spectral weight are subsequent questions,
treated under additional assumptions.

\begin{assumption}[Isolated Fredholm channel]
\label{ass:fredholm}
Fix $\eta=\eta_c$.  There is a connected neighbourhood $U$ of $z=0$ in
which $\cB(z;\eta_c)$ is a compact analytic operator family,
$I-\cB(z;\eta_c)$ is invertible at least at one point, and $U$ contains
no continuum threshold or other non-meromorphic singularity.
The response-space source $\mathsf S_X(z)$, readout $\mathsf R_X(z)$, and matrix background
$C_{\rm reg}(z)$ in Eq.~\eqref{eq:coarse-grained-response-factorisation} are
analytic on $U$.
\end{assumption}

\begin{theorem}[Kernel criterion for an isolated algebraic pole]
\label{thm:kernel-pole}
Under Assumption~\ref{ass:fredholm}, suppress the fixed parameter $\eta_c$
and consider the matrix response in
Eq.~\eqref{eq:coarse-grained-response-factorisation}.  Then:
\begin{enumerate}[label=(\roman*)]
 \item if $1\notin\spec\cB(0)$, the response has no pole at $z=0$ generated by this kernel;
 \item if $\cB(z)$ has an isolated algebraically simple eigenvalue
 $\lambda(z)$ with
 \begin{equation}
  \lambda(0)=1,
  \qquad \partial_z\lambda(0)\neq0,
  \label{eq:simple-spectral-zero}
 \end{equation}
 then the resolvent has a simple pole.  The measured response contains that
 pole precisely when both
 $\mathsf R_X(0)|r\rangle\neq0$ and
 $\langle\ell|\mathsf S_X(0)\neq0$;
 \item with $\langle\ell|r\rangle=1$, its algebraic residue matrix is
 \begin{equation}
  Z_{\rm alg}
  =-\frac{\mathsf R_X(0)|r\rangle
            \langle\ell|\mathsf S_X(0)}
           {\partial_z\lambda(0)}.
  \label{eq:matrix-algebraic-residue}
  \end{equation}
 This is an exact algebraic response statement.  Positivity can follow
 from Theorem~\ref{thm:native-positive-residue}; identification as a
 Lorentzian mass atom requires Theorem~\ref{thm:fredholm-kl-bridge}.
\end{enumerate}
\end{theorem}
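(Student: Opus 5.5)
The plan is to use analytic Fredholm theory together with a Riesz-projection splitting of the resolvent near $z=0$.

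\emph{Part (i).} By Assumption~\ref{ass:fredholm}, $\cB(z)$ is a compact analytic family on the connected set $U$, and $I-\cB$ is invertible at some point of $U$. The analytic Fredholm theorem (Ref.~\cite{Kato:1995}, Chapter~VII) then says that $(I-\cB(z))^{-1}$ is meromorphic on $U$, with poles only where $1\in\spec\cB(z)$. If $1\notin\spec\cB(0)$, then $I-\cB(0)$ is invertible, and since the invertible operators form an open set, the inverse is analytic near $0$. The source $\mathsf S_X$, readout $\mathsf R_X$ and background $C_{\rm reg}$ are analytic by assumption. Hence Eq.~\eqref{eq:coarse-grained-response-factorisation} is analytic at $0$, which proves (i).

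\emph{Part (ii), splitting the resolvent.} Since $\lambda(0)=1$ is isolated and algebraically simple, I would take a small circle $\Gamma$ around $1$ that separates $\lambda(z)$ from the rest of $\spec\cB(z)$ for all small $|z|$. The Riesz projection
\[
\Pi(z)=\frac{1}{2\pi\iu}\oint_\Gamma(\mu-\cB(z))^{-1}\,d\mu
\]
is analytic and of rank one (Kato, Chapter~II/VII). Write $\Pi(z)=|r(z)\rangle\langle\ell(z)|$ with analytic vectors normalised by $\langle\ell(z)|r(z)\rangle=1$, and $\cB\Pi=\lambda\Pi$. On $\Ran(I-\Pi(z))$ the operator $I-\cB(z)$ is invertible for small $z$, because $1$ is not in the remaining spectrum. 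This gives
\[
(I-\cB(z))^{-1}=\frac{\Pi(z)}{1-\lambda(z)}+A(z),
\]
with $A$ analytic near $0$. The condition $\lambda(0)=1$, $\partial_z\lambda(0)\neq0$ gives $1-\lambda(z)=-\partial_z\lambda(0)\,z\,(1+O(z))$, so the first term has a simple pole.

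\emph{Parts (ii)--(iii), the residue.} The residue of the resolvent is $-\Pi(0)/\partial_z\lambda(0)$. Sandwiching by the analytic maps $\mathsf R_X$ and $\mathsf S_X$, only their values at $0$ contribute to the residue, and $C_{\rm reg}$ contributes nothing. This yields Eq.~\eqref{eq:matrix-algebraic-residue}.

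The residue matrix is the outer product of the column vector $\mathsf R_X(0)|r\rangle\in\mathbb C^m$ and the row vector $\langle\ell|\mathsf S_X(0)$. Such a rank-one matrix is nonzero exactly when both factors are nonzero. Because the pole is simple, the measured response has a pole precisely under those two visibility conditions. Otherwise the singular part vanishes and the response extends analytically across $0$.

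The main obstacle is the analytic choice of $|r(z)\rangle$ and $\langle\ell(z)|$ in a general Banach space, with $\langle\ell|$ a functional in the dual. I would first define $r(z)=\Pi(z)r_0$ and $\ell(z)=\ell_0\Pi(z)$ for fixed $r_0,\ell_0$ with $\ell_0\Pi(0)r_0\neq0$, and renormalise by the nonvanishing analytic scalar $\ell_0\Pi(z)r_0$. Only the values at $z=0$ enter the residue, so this choice is sufficient.
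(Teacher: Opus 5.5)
Your proof is correct and follows the paper's own route. Part (i) uses analytic Fredholm theory together with openness of the invertible operators. Part (ii) uses the rank-one Riesz splitting $[I-\cB(z)]^{-1}=\Pi(z)/(1-\lambda(z))+A(z)$ with $1-\lambda(z)=-\partial_z\lambda(0)z+O(z^2)$, and obtains the residue as an outer product, which is nonzero exactly when both overlaps are nonzero. Your explicit normalisation $r(z)=\Pi(z)r_0$, $\ell(z)=\ell_0\Pi(z)/(\ell_0\Pi(z)r_0)$ simply fills in the analytic-perturbation step that the paper cites from Kato.
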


The proof in Appendix~\ref{app:kernel-proof} is an application of the analytic Fredholm theorem and the Riesz projection onto the critical eigenspace.
Condition~\eqref{eq:simple-spectral-zero} is transversality in the declared
complex response coordinate $z$.  For a real eigenvalue branch along a
real control parameter, $\partial_\eta\lambda(0;\eta_c)\neq0$ certifies
a first-order transverse crossing of one.  It is not necessary for every
crossing: $\lambda(z;\eta)=1-z+\eta^3$ crosses at $\eta=0$ despite a
vanishing first control derivative.  Higher-order crossings require their
own analysis and do not change the simple-pole criterion in $z$.
When $\eta=k$ is an RG
scale, a crossing at intermediate $k$ is a flow diagnostic; a physical-state
claim requires the pole conditions for the exact endpoint response.
If a continuum threshold starts at $z=0$, Assumption~\ref{ass:fredholm} fails in general.
The channel is then labelled \emph{threshold-sensitive} and requires a
limiting-absorption or explicit spectral analysis.

\subsection{Reciprocal response and a finite-rank positive residue}
\label{sec:native-positivity}

The Fredholm result accommodates nonnormal kernels and independent source
and readout maps.  A reciprocal Hilbert-space realisation permits an
additional positive statement before any Lorentzian representation.

\begin{theorem}[Native positive residue]
\label{thm:native-positive-residue}
Let $L(z)$ be analytic in operator norm near zero on a Hilbert space
$\mathcal H$.  Suppose $L(0)$ is self-adjoint, its kernel is nonzero and
finite dimensional with orthogonal projector $P$, and
$QL(0)Q|_{Q\mathcal H}$ has a bounded inverse, where $Q=I-P$.
If
\begin{equation}
 G_0=PL'(0)P|_{P\mathcal H}\succ0,
 \label{eq:native-crossing-matrix}
\end{equation}
then, on a sufficiently small punctured neighbourhood,
\begin{equation}
 L(z)^{-1}=\frac{PG_0^{-1}P}{z}+A(z),\qquad A\text{ analytic at }0.
 \label{eq:native-inverse-residue}
\end{equation}
Let $S(z):\mathbb C^m\to\mathcal H$, $R(z):\mathcal H\to\mathbb C^m$
and $D(z)$ be analytic, with $R(0)=S(0)^*$ in the declared source--output
pairing.  The response $D+RL^{-1}S$ has residue
\begin{equation}
 Z_{\rm native}=S(0)^*PG_0^{-1}PS(0)\succeq0,
 \qquad \operatorname{rank}Z_{\rm native}
       =\operatorname{rank}(PS(0)).
 \label{eq:native-positive-residue}
\end{equation}
A pole is visible precisely when $PS(0)\ne0$.
\end{theorem}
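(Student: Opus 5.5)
The plan is a Schur--Feshbach block elimination of $L(z)$ in the orthogonal decomposition $\mathcal H=P\mathcal H\oplus Q\mathcal H$, in the same spirit as Proposition~\ref{prop:response-closed-channel}. Write
\begin{equation*}
 L(z)=\begin{pmatrix} L_{PP}(z) & L_{PQ}(z)\\ L_{QP}(z) & L_{QQ}(z)\end{pmatrix}.
\end{equation*}
Since $L(0)$ is self-adjoint with $PL(0)=L(0)P=0$, all three blocks $L_{PP}(0)$, $L_{PQ}(0)$ and $L_{QP}(0)$ vanish. Hence $L_{PP}(z)=zL_{PP}'(0)+O(z^2)$, $L_{PQ}(z)=O(z)$ and $L_{QP}(z)=O(z)$, with $L_{PP}'(0)=G_0$.

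Because $L_{QQ}(0)$ is boundedly invertible on $Q\mathcal H$ and inversion is analytic on the open set of invertible operators, $L_{QQ}(z)^{-1}$ exists and is analytic near $0$. The Schur complement
\begin{equation*}
 \Sigma(z)=L_{PP}(z)-L_{PQ}(z)L_{QQ}(z)^{-1}L_{QP}(z)
\end{equation*}
then satisfies $\Sigma(z)=zG_0+O(z^2)$, because the correction term is $O(z^2)$. Write $\Sigma(z)=z\bigl(G_0+zK(z)\bigr)$ with $K$ analytic on the finite-dimensional space $P\mathcal H$. Since $G_0\succ0$ is invertible there, $\Sigma(z)^{-1}=z^{-1}G_0^{-1}+O(1)$, and the $O(1)$ part is analytic.

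Next, invert $L(z)$ with the standard block formula. The $PP$ block of $L^{-1}$ is $\Sigma^{-1}$. The off-diagonal blocks are $-\Sigma^{-1}L_{PQ}L_{QQ}^{-1}$ and $-L_{QQ}^{-1}L_{QP}\Sigma^{-1}$. The $QQ$ block is $L_{QQ}^{-1}+L_{QQ}^{-1}L_{QP}\Sigma^{-1}L_{PQ}L_{QQ}^{-1}$. In each of the last three, the factor $z^{-1}$ from $\Sigma^{-1}$ is multiplied by at least one $O(z)$ off-diagonal block, so these blocks are analytic at $0$. Only the $PP$ block keeps a pole, which gives Eq.~\eqref{eq:native-inverse-residue} with residue $PG_0^{-1}P$. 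On a small punctured disc the inverse exists, since $\Sigma(z)$ is invertible there and the block factorisation is valid.

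The main obstacle is making this pole bookkeeping rigorous. This means checking that the $O(z)$ estimates hold in operator norm, not only formally. Here analyticity of each block in norm and finite dimensionality of $P\mathcal H$ give the needed control. Note that self-adjointness of $L(0)$ is exactly what kills \emph{both} off-diagonal blocks at $z=0$. Without it, $L_{QP}(0)$ could be nonzero, and the correction term would only be $O(z)$, contaminating the crossing matrix.

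For the response, $D$ and $S,R$ are analytic, so the residue of $D+RL^{-1}S$ at $0$ is $R(0)PG_0^{-1}PS(0)=S(0)^*PG_0^{-1}PS(0)$, using $R(0)=S(0)^*$. This matrix is positive semidefinite because $G_0^{-1}\succ0$: for $v\in\mathbb C^m$, $v^*Zv=\langle PS(0)v,G_0^{-1}PS(0)v\rangle\ge0$. For the rank, write $Z=A^*G_0^{-1}A$ with $A=PS(0):\mathbb C^m\to P\mathcal H$. Since $G_0^{-1/2}$ is invertible, $Z=(G_0^{-1/2}A)^*(G_0^{-1/2}A)$ has $\operatorname{rank}Z=\operatorname{rank}(G_0^{-1/2}A)=\operatorname{rank}A$. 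The pole is visible iff $Z\ne0$, which holds iff $PS(0)\ne0$.
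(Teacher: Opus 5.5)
Your proof is correct and is essentially the paper's own argument. Both use Schur--Feshbach elimination in $P\mathcal H\oplus Q\mathcal H$, the vanishing of both off-diagonal blocks of $L(0)$ by self-adjointness, the Schur complement $zG_0+O(z^2)$, the cancellation of $z^{-1}$ outside the $P$ block by the $O(z)$ off-diagonal factors, and the factorisation $(G_0^{-1/2}PS(0))^*(G_0^{-1/2}PS(0))$ for positivity and rank.

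There is one slip, confined to your side remark and not the proof. Since $L(0)P=0$ always gives $QL(0)P=0$, the block that could be nonzero without self-adjointness is $L_{PQ}(0)=PL(0)Q$, not $L_{QP}(0)$. Your conclusion that the crossing matrix would then be contaminated at order $z$ still holds.
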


\begin{proof}
Self-adjointness makes both off-diagonal blocks of $L(0)$ zero.
The inverse of the $Q$ block persists analytically.  The Schur complement
on $P\mathcal H$ is $zG_0+O(z^2)$, since the product of the off-diagonal
blocks is $O(z^2)$.  Its inverse is $z^{-1}G_0^{-1}$ plus an analytic
remainder.  Block inversion proves Eq.~\eqref{eq:native-inverse-residue};
the off-diagonal factors cancel the possible $z^{-1}$ singularity outside
the $P$ block.  Taking the residue gives
Eq.~\eqref{eq:native-positive-residue}.  Its factorisation as
$(G_0^{-1/2}PS(0))^*(G_0^{-1/2}PS(0))$ proves positivity and rank.
\end{proof}

This theorem requires neither a compact $B$ nor a Lorentzian vacuum, but
it does require the stated Hilbert pairing, reciprocity, isolated critical
subspace, and actual positive crossing matrix.  For $L=I-\cB$ the
crossing is $-P\cB'(0)P$, not a derivative in an unrelated control
parameter.  A one-dimensional kernel alone is insufficient:
$\diag(0,1,1/2,\ldots)$ has no bounded complementary inverse and adding
$zI$ produces singularities accumulating at zero.  Unbounded operators
require a separately verified common-domain or closed-form analytic
realisation; this bounded theorem does not assert one.  No relation
$R=S^*$ is imposed on a genuinely nonreciprocal update.

A positive native residue is not, without further constructions, a spatial
Green function, a massless particle, or an attractive force.  Actual
iteration stability is tested on the specified update and is computed
separately in Section~\ref{sec:neumann-example}.

\subsection{Quantum-field-theory realisation}
\label{sec:channel-data}
We work with a local, unitary, Poincar\'e-invariant quantum field theory
admitting a Euclidean continuation for the physical correlation functions
considered below.  The invariant-mass spectral assertions use a
Poincar\'e-invariant positive-energy vacuum, unique in the chosen physical
representation, as specified in Appendix~\ref{app:kl}.  Whenever a response
is identified with a physical two-point function, its insertions are centred,
$\Theta_i=O_i-\langle O_i\rangle I$, and its correlation matrix is connected.
Subtracting these one-point products precedes, and is distinct from, the
subtraction of local contact terms.
Gauge theories are understood in a renormalised BRST formulation without an uncancelled gauge anomaly.
Let $\mathcal V_{\rm phys}$ be the renormalised operator space after passage
to the physical quotient described in Appendix~\ref{app:operator-mixing}.
 The quotient is algebraic.  There is no canonical Hilbert norm on a space of
 local composite operators, so every infinite-dimensional boundedness
 statement below is made relative to a specified channel norm for which
$\mathcal V_{\rm phys}$ has been completed to a Banach space and all stated
maps extend continuously; we use the same symbol for that completion.  Thus
``bounded'' always refers to this declared channel norm, not to an implicit
canonical norm on local operators.  A bounded projector
$\mathcal P_{\cC}$ selects a Lorentz and internal-symmetry sector, with the
closed Banach subspace $\mathcal H_{\cC}=\Ran\mathcal P_{\cC}$.  For a finite
operator block the choice of norm is immaterial because all norms are
equivalent.  A practical calculation may later replace this exact space by a
finite block; that replacement is part of the approximation, not part of the
definition of the physical channel.

The operator space specifies which insertions are probed; the response
space carries the variables used to solve their sourced equations.  The
realisation maps connect these descriptions, while the regulator and
truncation data record how a concrete calculation is performed.

\begin{definition}[Physical channel]
\label{def:physical-channel}
A physical operator channel is the tuple
\begin{equation}
 \cC_{\rm phys}
 =\bigl(\mathsf T,|\Omega\rangle,\mathcal V_{\rm phys},
         \mathcal P_{\cC},\mathcal S,\mathcal E,
         \mathsf S_{\cO},\mathsf R_{\cO}\bigr),
 \label{eq:physical-channel}
\end{equation}
where $\mathsf T$ is the theory, $|\Omega\rangle$ the state or vacuum, $\mathcal S$ the exact symmetry and anomaly data, and $\mathcal E$ the external kinematics.
For a general state the response variables must respect its unbroken
symmetries; the vacuum identification $z=Q^2$ and the invariant-mass atom
test do not extend to an ensemble without a separate spectral argument.
The bounded maps
\begin{equation}
 \mathsf S_{\cO}(z):\mathbb C^m\longrightarrow\mathcal H_{\cC},
 \qquad
 \mathsf R_{\cO}(z):\mathcal H_{\cC}\longrightarrow\mathbb C^m
 \label{eq:operator-source-readout-maps}
\end{equation}
specify the physical operator insertions and their output coordinates.
To use these insertions in a response equation, one must also specify how they
are realised on the response space.
Let $\mathcal X_{\cC}$ be a Banach response space, let
$P_X\in\mathcal L(\mathcal X_{\cC})$ be a bounded projector, and let
\begin{equation}
 \iota_{\cC}(z):\mathcal H_{\cC}\longrightarrow\mathcal X_{\cC},
 \qquad
 \pi_{\cC}(z):\mathcal X_{\cC}\longrightarrow\mathcal H_{\cC}
 \label{eq:channel-realisation-maps}
\end{equation}
be bounded source-realisation and readout-realisation maps.  They need not be
mutual inverses.  Compatibility with the selected response block means
\begin{equation}
 P_X\iota_{\cC}=\iota_{\cC},
 \qquad
 \pi_{\cC}P_X=\pi_{\cC}.
 \label{eq:channel-realisation-compatibility}
\end{equation}
The maps that enter a response equation are then
\begin{equation}
 \mathsf S_X:=\iota_{\cC}\mathsf S_{\cO}:
 \mathbb C^m\longrightarrow\mathcal X_{\cC},
 \qquad
 \mathsf R_X:=\mathsf R_{\cO}\pi_{\cC}:
 \mathcal X_{\cC}\longrightarrow\mathbb C^m.
 \label{eq:response-source-readout}
\end{equation}
Their explicit construction is part of the physical model, not a
choice of notation.  In particular, a critical response eigenspace can fail
to couple to either map.

For later reference, the response and approximation data are recorded as
\begin{equation}
 \cI=\bigl(\mathcal X_{\cC},P_X,\iota_{\cC},\pi_{\cC},
            \cH_\tau,\cG^{\rm ext}_{\tau_Q},
            \cG^{\rm rel}_{\tau_q},\mathfrak F_{\tau_q},\cB(z;\eta),
            \mathcal R_{\rm ren},\mathcal T_N\bigr),
 \label{eq:response-data}
\end{equation}
where $\cH_\tau$ is the one-coordinate heat semigroup,
$\cG^{\rm ext}$ and $\cG^{\rm rel}$ are the two induced representations of
Section~\ref{sec:criteria}, $\mathfrak F_{\tau_q}$ is a self-consistent response map and $\cB$ its
linearisation, $\mathcal R_{\rm ren}$ is the
renormalisation and regulator prescription, and $\mathcal T_N$ is a
truncation when an infinite target is approximated, with $N=\infty$
denoting that exact hierarchy.  A finite system defined by an exact
operation protocol has no such limiting requirement.
\end{definition}

The following variables will not be interchanged.
The invariant $s\geq0$ labels Lorentzian spectral mass.
The variable $Q^2\geq0$ is the squared external Euclidean momentum of a projected source--source correlator.
The scale $k$ labels a Wilsonian or functional RG trajectory.
Within a diagram or kernel, $q\in\mathbb R^d$ denotes the Euclidean momentum transferred between a fast subgraph and the external channel.
In this QFT realisation, $z$ denotes the complexified external invariant: our convention
is $z=Q^2$ on the Euclidean axis and $z=-p^2-\iu0$ at the Lorentzian physical
boundary.  Finally, $\eta$ denotes a control parameter such as a coupling,
background, or RG scale.  A derivative with respect to $z$ is therefore not
a flow derivative with respect to $\eta$ or $k$.

\subsection{Operator blocks and the physical quotient}
\label{sec:operator-blocks}

Begin with the renormalised local operators of fixed Lorentz representation,
internal quantum numbers, and ghost number.  Their exact completion can be
infinite dimensional.  A finite set selected by a canonical-dimension bound
or a numerical basis is denoted $\mathcal V_N$ and is treated as a
truncation unless a separate closure argument applies.
Renormalisation takes the matrix form
\begin{equation}
 [\cO_i]_R(\mu)=Z_{ij}(\mu,\mathcal R)\cO_j^{\rm bare}.
 \label{eq:operator-renormalisation}
\end{equation}
In a gauge theory this set may contain gauge-invariant representatives, BRST-exact operators, equation-of-motion operators, and total derivatives.
The triangular renormalisation system retains these directions before
passage to physical matrix elements~\cite{JoglekarLee:1976}.

Let $\overline{\mathcal V}=\mathcal V/\mathcal I_{\rm EOM}$ be the on-shell local operator space and assume that the renormalised BRST differential $\mathsf s_{\rm B}$ descends to it.
For physical separated-point matrix elements, the relevant ghost-number-zero cohomology is
\begin{equation}
 \mathcal V_{\rm phys}
 =H^0(\mathsf s_{\rm B},\overline{\mathcal V})
 =\frac{\Ker\!\left(\mathsf s_{\rm B}:\overline{\mathcal V}^{0}
       \longrightarrow\overline{\mathcal V}^{1}\right)}
 {\Ran\!\left(\mathsf s_{\rm B}:\overline{\mathcal V}^{\,-1}
       \longrightarrow\overline{\mathcal V}^{0}\right)},
 \label{eq:physical-quotient}
\end{equation}
with contact terms retained separately.  The cohomological treatment of
local operator renormalisation is developed in Section~8.6 of
Ref.~\cite{Barnich:2000}.  The on-shell gluonic matrix elements at zero
momentum transfer studied in Ref.~\cite{CollinsScalise:1994} illustrate
the infrared care needed when taking exceptional limits.
Here $\mathcal I_{\rm EOM}$ is the ideal generated by the field equations.
Total derivatives require separate treatment.  They are redundant for a
constant-source integrated insertion, or at zero insertion momentum, only
when the boundary term vanishes and the indicated limit is well defined.
For a nonconstant source, integration by parts retains a source-gradient
insertion; at nonzero insertion momentum the descendants must remain in
the mixing block.
Equation~\eqref{eq:physical-quotient} is the local cohomological construction reviewed in Appendix~\ref{app:operator-mixing}; it is not used to discard anomalous, descendant, or contact contributions.

For non-gauge theories the BRST quotient is absent, but operator mixing and EOM redundancies remain.
For the scalar improvement
$T_{\mu\nu}\mapsto T_{\mu\nu}
+(\partial_\mu\partial_\nu-\eta_{\mu\nu}\Box)X$,
with $X$ a renormalised Lorentz scalar, the separated-point TT projection
is unchanged, whereas the spin-zero response and the poles visible in an
individual representative can change.  If the improvement descendant is
retained in a closed operator block, a finite nonsingular constant basis
change preserves the pole structure of the full correlation matrix and the
rank of each isolated residue, as proved in
Theorem~\ref{thm:basis-covariance}.

\begin{definition}[Exact channel and computational block]
\label{def:channel-block}
An exact channel is the range of a bounded projector $\mathcal P_{\cC}$ on
$\mathcal V_{\rm phys}$ that is stable under the renormalisation map and the
unbroken symmetries recorded in $\mathcal S$.  A computational block is a
finite-dimensional projected space $P_N\mathcal V_{\rm phys}$.  It is exact
only if its closure under renormalisation, symmetry, and response has been
proved; otherwise all conclusions drawn from it carry the truncation label.
The classification is applied to a full block and its correlation matrix,
not to a preferred basis vector.
\end{definition}

Here boundedness is taken in the Banach completion and channel norm declared
in Definition~\ref{def:physical-channel}.  The algebraic BRST/EOM quotient
precedes this completion; choosing a completion does not alter which
representatives define physical cohomology classes.

This definition accommodates mixing between scalar operators such as mass terms, interaction densities, and total derivatives, and between tensor operators with the same physical spin.
It also prevents unlike objects from being entered into a common table.
A running coupling is a coordinate on theory space; a local composite operator is an insertion.
They are related through source derivatives and beta functions but are not the same classification object.

Renormalisation closure does not by itself imply closure under the dynamical
response.  The operator projector $\mathcal P_{\cC}$ acts on
$\mathcal V_{\rm phys}$, whereas the compatible response projector $P_X$
from Eq.~\eqref{eq:channel-realisation-compatibility} acts on
$\mathcal X_{\cC}$.  Put $Q_X=I_X-P_X$; the two projectors are related by the
realisation maps but are not identified.

\subsection{Symmetry-allowance criterion}
\label{sec:symmetry-criterion}

The original QED Ward identity and its Ward--Takahashi generalisation
relate the fermion propagator and current vertex~\cite{Ward:1950,Takahashi:1957}.
More generally, a symmetry identity relates current insertions to
variations of the other fields.
For a current $J_A^\mu$ and renormalised insertions $\Phi_r$, the
position-space identity, allowing for an anomaly, has the schematic form
\begin{align}
 \partial_\mu^x
 \langle T J_A^\mu(x)\Phi_1(x_1)\cdots\Phi_n(x_n)\rangle
 ={}&-\iu\sum_{r=1}^n\delta^{(d)}(x-x_r)
 \langle T\Phi_1\cdots(\delta_A\Phi_r)\cdots\Phi_n\rangle
 \nonumber\\
 &+\langle T\mathcal A_A(x)\Phi_1\cdots\Phi_n\rangle,
 \label{eq:ward-contact}
\end{align}
where $\mathcal A_A$ is an anomaly insertion and the displayed delta functions are contact terms.
Translation Ward identities for $T_{\mu\nu}$ have the same logical structure, with derivatives acting on the other insertions.
After Fourier transformation, these identities fix divergences and selected normalisations; they do not determine the unconstrained transverse form factors.

For a gauge theory, physical local insertions are represented by BRST cohomology classes rather than by the condition of BRST closure alone.
The corresponding constraints on gauge-fixed Green functions are the
Slavnov--Taylor identities~\cite{Slavnov:1972,Taylor:1971}.
Writing the BRST differential as $\mathsf s_{\rm B}$, ghost-number-zero closure
$\mathsf s_{\rm B}\cO=0$ is supplemented by the identification
$\cO\sim\cO+\mathsf s_{\rm B}\Psi$ after passage to the equation-of-motion quotient.
A total derivative $dY$ is also quotiented for a constant-source integrated
functional with vanishing boundary term, or at zero insertion momentum when
that limit is well defined.  A nonconstant source leaves a source-gradient
insertion after integration by parts, and descendants are retained at
nonzero insertion momentum.
A colour component $F^a_{\mu\nu}$ transforms covariantly and is not by itself a gauge-invariant scalar observable, whereas $\operatorname{tr}F_{\mu\nu}F^{\mu\nu}$ defines a scalar gauge-invariant class.

\begin{proposition}[Scope of a symmetry identity]
\label{prop:symmetry-scope}
Fix a renormalised physical operator block, its state, and a projector.
Assume the relevant Ward or Slavnov--Taylor identity is anomaly free in that block.
After the contact and anomaly terms have been kept explicitly, the identity
has the following consequences:
\begin{enumerate}[label=(\roman*)]
 \item set symmetry-forbidden projected components to zero up to specified contact terms;
 \item determine longitudinal components in terms of lower-point functions;
 \item fix a zero-transfer form-factor normalisation when stable external states carry a nonzero conserved charge, LSZ reduction applies, and the diagonal matrix element has a continuous zero-transfer limit;
 \item leave the remaining physical transverse components symmetry allowed.
\end{enumerate}
None of these conclusions, without additional state and dynamical information, implies a nonzero vacuum spectral atom, a nonzero shell contribution, or an isolated pole.
\end{proposition}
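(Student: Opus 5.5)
The plan is to work in momentum space with the renormalised, anomaly-free identity of Eq.~\eqref{eq:ward-contact} for the fixed block, state and projector. Each of (i)--(iv) then becomes a linear statement about how the identity constrains the projected correlator. The contact and anomaly terms are kept as explicit inhomogeneities throughout. The final negative clause is treated separately, by exhibiting freedom that the identity leaves untouched.

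\textbf{Fourier transform.} I will contract the identity with the incoming momentum $p$ of the current. This gives $-\iu p_\mu\Gamma^{\mu}_{A,\ldots}(p,\ldots)$ on the left. On the right it gives a sum of shifted lower-point functions $\langle\cdots\delta_A\Phi_r\cdots\rangle$ plus the anomaly insertion, which vanishes by hypothesis in this block.

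\textbf{Part (i).} I will decompose the block into irreducible components under the unbroken symmetry group $\mathcal S$. By Proposition~\ref{prop:stabilizer-linearisation}, and its analogue for correlators at the fixed state, correlators are intertwiners of these representations. A Schur-type argument then forces components with nontrivial total charge to vanish. Only the contact terms survive, as permitted by the identity.

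\textbf{Part (ii).} I will decompose the tensor structure by projectors $p^\mu p^\nu/p^2$ and $\eta^{\mu\nu}-p^\mu p^\nu/p^2$, away from $p^2=0$ on the Euclidean axis. The contracted identity then fixes exactly the longitudinal projection in terms of lower-point functions.

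\textbf{Part (iii).} I will use LSZ reduction to relate the three-point function to the matrix element $\langle\alpha,k'|J^\mu_A|\alpha,k\rangle$. The charge $Q_A=\int J^0_A$ acts on the stable state as its nonzero charge. Taking the assumed continuous limit $k'\to k$ then normalises the form factor, $F(0)=q_A$. Here the hypotheses of stability, LSZ and continuity are used exactly as stated.

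\textbf{Part (iv).} The transverse projector annihilates the contracted identity. Hence no linear condition is imposed on transverse components beyond those in (i). They therefore remain allowed, though undetermined.

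\textbf{Negative clause, and the main obstacle.} The claim that the identity forces no atom, shell or pole is the step where the real argument lies. My approach is a counterexample by freedom: the identity is linear, so it is satisfied by any transverse addition that respects the symmetry. I will exhibit two admissible data sets that agree on every constrained component. One is free of poles, such as a massive free theory or the trivial transverse zero. The other has a pole, for example obtained by adding to the transverse form factor a symmetric term $\propto(\eta^{\mu\nu}p^2-p^\mu p^\nu)/(p^2+m^2)$.

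The difficulty is to make sure that such a modification is realised by an actual consistent state and theory, rather than only by an abstract function. To address this, I would first try to point to standard realisations: free massive versus massless vector currents, or Schwinger-type models. Failing that, I would state the clause at the level the proposition asserts. Namely, the identity alone, being linear and supported in the longitudinal and contact sectors, cannot imply the conclusion, because the spectral residue at an atom lives in the transverse sector that (iv) leaves free. That transverse freedom is what shows that extra state and dynamical information is required.
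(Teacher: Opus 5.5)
Your treatment of (i)--(iv) follows the paper's proof. That proof consists of Fourier transformation and projection of Eq.~\eqref{eq:ward-contact}, LSZ amputation for (iii), and (iv) as the complement of the constrained components. Restricting (i) to the symmetries left unbroken by the fixed state is the right reading. Citing Proposition~\ref{prop:stabilizer-linearisation} is not apt, since it concerns Jacobians at a fixed background, but the correlator selection rule you actually use is standard.

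For the negative clause, the paper relies on its explicit examples, and your first option is the correct one. It is easier than you fear: non-implication needs only \emph{one} realisation in which (i)--(iv) hold and no atom, shell term or isolated pole appears. The free massive Dirac current of Section~\ref{sec:conserved-current-example} is such a realisation. Its identity is anomaly free, and the stable fermion carries charge, so (iii) is not vacuous. Its current--current spectral density is a pure continuum beginning at $4m^2$. You therefore do not need the pole-carrying transverse deformation, nor the question of whether a consistent theory realises it.

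Your fallback argument would fail, because the residue of an atom need not lie in the transverse sector left free by (iv). In Section~\ref{sec:goldstone-example}, the Goldstone pole $F_A(p^2)=v_A/(p^2+\iu0)$ sits in the \emph{longitudinal} component. It is forced by conclusion (ii) itself once the lower-point datum $v_A=\langle\Omega|\delta_A\Phi(0)|\Omega\rangle$ is nonzero. So linearity plus transverse freedom does not establish the clause. The correct reason is that the lower-point functions feeding (ii) are state data, and the clause excludes exactly that additional state information. The case $v_A=0$, or the massive Dirac current, then shows the identity holding with no forced pole.
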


\begin{proof}
After Fourier transformation, projection, and, for part~(iii), LSZ amputation,
Eq.~\eqref{eq:ward-contact} together with the stated representation and
regularity assumptions gives (i)--(iii).  Part~(iv) is the complement of those
constraints in the physical operator block.
\end{proof}

The proposition leaves form factors not fixed by the specified symmetry
to the dynamical calculation.
Composite scalar and fermion-bilinear channels may contain bound-state poles even though no conserved-current normalisation fixes their residues.

The Bethe--Salpeter construction describes bound states through a
two-particle kernel~\cite{BetheSalpeter:1951}.  In a sourced channel this
kernel enters the inhomogeneous response equation
\begin{equation}
 \Gamma_{\cC}(z)=\Gamma^{(0)}_{\cC}(z)
      +\cB_{\cC}(z;\eta)\Gamma_{\cC}(z),
 \qquad
 \cB_{\cC}(z;\eta)
 =\cK_{\rm 2PI,\cC}(z;\eta)G_{0,\cC}(z;\eta).
 \label{eq:inhomogeneous-bse}
\end{equation}
The dimensionless eigenvalue problem is the product of the irreducible kernel
$\cK_{\rm 2PI}$ with the two-particle propagation operator $G_0$.
For an eigenvalue-based solution of the homogeneous bound-state equation,
see Ref.~\cite{HaradaYoshida:1996}.  A concrete application must show that
its 2PI, Bethe--Salpeter, Dyson--Schwinger, or FRG construction represents the
Jacobian in Eq.~\eqref{eq:response-jacobian}, exactly or with a stated defect.

The distinction between an algebraic residue and a physical spectral weight
is substantive.  Fredholm theory detects a singularity of the chosen response
representation, but it does not by itself supply Hilbert-space positivity or
show that the response is the continuation of the Wightman correlator being
measured.  The next theorem states the matching assumptions under which these
two constructions describe the same object.

\begin{theorem}[Fredholm--KL bridge]
\label{thm:fredholm-kl-bridge}
Let the hypotheses of Theorem~\ref{thm:kernel-pole}(ii)--(iii) hold and set
$z=Q^2$ on the Euclidean axis.  Assume in addition that:
\begin{enumerate}[label=(\roman*)]
 \item at the physical endpoint and with $\tau_Q=0$, the matrix in
 Eq.~\eqref{eq:coarse-grained-response-factorisation} is exactly the
 projected, unfiltered, connected Euclidean two-point matrix of the centred
 insertions $\Theta_i$, obtained from the vacuum Wightman functions under
 the spectral assumptions of Appendix~\ref{app:kl}, not merely a model
 response;
 \item the tensor or spin part of $\mathcal P_{\cC}$ uses the
 physical-polarisation compression in
 Assumption~\ref{ass:physical-polarisation-compression}, including its
 separately defined zero-mass helicity endpoint, and the external basis,
 vacuum, centring convention, source pairing, continuation convention, and subtraction polynomial
 $P(Q^2)$ are the same in the response and spectral constructions;
 \item after subtraction of $P(Q^2)$, the direct term and the complementary
 Fredholm resolvent contribution are analytic at $Q^2=0$;
 \item the continuum part of the physical matrix-valued spectral measure is
 locally finite, has no atom at zero, and satisfies the ultraviolet
 integrability hypotheses of Proposition~\ref{prop:extract-atom}.
\end{enumerate}
Then the algebraic residue is the physical zero-mass atom matrix:
\begin{equation}
 Z_{\rm alg}
 =\lim_{Q^2\downarrow0}Q^2[C(Q^2)-P(Q^2)]
 =Z_0\succeq0.
 \label{eq:fredholm-kl-identity}
\end{equation}
Consequently, a nonzero observable Fredholm pole satisfying these matching
hypotheses is a physical zero-mass spectral atom.  Conversely, if an exact
calculation that purports to satisfy (i)--(iv) produces a non-Hermitian or
indefinite $Z_{\rm alg}$, at least one response-realisation, continuation,
subtraction, or positivity premise has failed.
\end{theorem}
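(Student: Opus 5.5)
The plan is to compute the same Laurent coefficient in two ways and identify them on the Euclidean axis.

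\emph{Fredholm side.} By Theorem~\ref{thm:kernel-pole}(ii)--(iii), the response matrix in Eq.~\eqref{eq:coarse-grained-response-factorisation} at $\tau_Q=0$ splits near $z=0$ into a simple pole $Z_{\rm alg}/z$ and the remaining terms. The Riesz projection onto the critical eigenspace separates the resolvent $[I-\cB(z)]^{-1}$ into a rank-one singular part and a complementary part analytic at $z=0$. Hypothesis~(iii) states that, after subtracting $P(Q^2)$, the direct term $C_{\rm reg}$ and this complementary contribution are analytic at $Q^2=0$. Hence
\[
C(Q^2)-P(Q^2)=\frac{Z_{\rm alg}}{Q^2}+A(Q^2),
\]
with $A$ analytic near $0$. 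For real $Q^2\downarrow0$ in $U$ this gives $\lim Q^2[C-P]=Z_{\rm alg}$.

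\emph{Spectral side.} Hypothesis~(i) says this same matrix is the projected, connected, unfiltered Euclidean two-point matrix of the centred $\Theta_i$. Hypothesis~(ii) says it is built with the same polarisation compression, basis, pairing, continuation and subtraction $P(Q^2)$ as the spectral construction of Appendix~\ref{app:kl}. We may therefore apply the matrix Källén--Lehmann representation there:
\[
C(Q^2)-P(Q^2)=\int\frac{d\rho(s)}{s+Q^2}\quad\text{(suitably subtracted)},
\]
with $\rho\succeq0$ a matrix measure. Write $\rho=Z_0\delta_0+\rho_c$. Hypothesis~(iv) is exactly the input to Proposition~\ref{prop:extract-atom}: $\rho_c$ is locally finite, has no atom at zero and is UV integrable. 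I expect that proposition gives $\lim_{Q^2\downarrow0}Q^2[C-P]=Z_0$ by dominated convergence. The key bound is $Q^2/(s+Q^2)\le 1$, which tends to $0$ pointwise for $s>0$. Positivity of $Z_0=\rho(\{0\})$ follows from positivity of the spectral measure, i.e.\ from Hilbert-space positivity of the Wightman vacuum.

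\emph{Identification and converse.} Both limits are limits of the same matrix-valued function along the same real ray, so they coincide. This gives Eq.~\eqref{eq:fredholm-kl-identity}. The converse is the contrapositive: a non-Hermitian or indefinite $Z_{\rm alg}$ contradicts $Z_0\succeq0$, so one of (i)--(iv) must fail.

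\emph{Main obstacle.} The hard step is the matching in (i)--(ii): the Fredholm response and the subtracted spectral integral must be the same function, not just functions that agree asymptotically. In particular the external convention $z=Q^2$, the centring, and the zero-mass helicity endpoint of the polarisation compression must align. I would handle this by invoking the hypotheses directly rather than proving them. The remaining care is with the ray $Q^2\downarrow0$: it must lie inside the neighbourhood $U$ of Assumption~\ref{ass:fredholm}, and the spectral integral must converge there for $Q^2>0$. Both follow from the stated hypotheses.
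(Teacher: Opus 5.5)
Your proposal is correct and matches the paper's proof step for step. The Fredholm decomposition with hypothesis (iii) gives the limit $Z_{\rm alg}$, hypotheses (i)--(ii) identify the same matrix with the subtracted K\"all\'en--Lehmann correlator, and (iv) lets Proposition~\ref{prop:extract-atom} identify that limit with $Z_0\succeq0$, with the converse following as a contrapositive. One small precision: $Q^2/(s+Q^2)\le1$ dominates only on a bounded range of $s$, so on $[1,\infty)$ one needs $Q^2/s$ together with the weighted integrability $\int_1^\infty s^{-1}d\nu_c<\infty$, which is exactly what (iv) supplies.
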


\begin{proof}
The Fredholm decomposition gives
$C(Q^2)=Z_{\rm alg}/Q^2+A(Q^2)$ with $A$ analytic at the origin under
assumption~(iii).  Hence
$\lim_{Q^2\downarrow0}Q^2[C(Q^2)-P(Q^2)]=Z_{\rm alg}$.
The same connected matrix is, by assumptions~(i)--(ii), the correlator entering the
subtracted K\"all\'en--Lehmann representation.  Assumption~(iv) permits
Proposition~\ref{prop:extract-atom} to identify the same limit with $Z_0$.
Positivity of the Wightman measure gives $Z_0\succeq0$.
\end{proof}

\paragraph{Finite truncations.}
An exact finite operation protocol can satisfy the native response and
positive-residue theorems without an infinite-cutoff limit.  Its interpretation
as a QFT spectral atom still needs a corresponding physical representation.
By contrast, a truncation approximating an infinite target must establish
that limiting identification.
A finite response truncation that satisfies the algebraic pole and overlap
tests produces a candidate residue $Z_{{\rm alg},N}$.  Even if this matrix is
Hermitian and positive semidefinite, it represents a physical spectral atom
only after the realisation maps, kernels, $z$-derivatives, source and readout
maps, subtraction convention, and projected correlators converge so that
Propositions~\ref{prop:closure-stability} and~\ref{prop:riesz-convergence}
apply and the hypotheses of Theorem~\ref{thm:fredholm-kl-bridge} hold in the
limit.  This distinction is necessary because positivity of a finite matrix
does not identify the limiting correlator or its spectral measure.

Ladder diagrams provide a possible approximation to $\cK_{\rm 2PI}$; a
complete response may also contain crossed ladders, vertex corrections,
ghost contributions, and higher-particle sectors.
Likewise, the sign of a closed fermion loop does not by itself determine the sign of a projected response eigenvalue; vertices, group factors, propagators, and the convention used for the kernel enter the result.

\subsection{Classification output and logical status}
\label{sec:execution}

The classification returns four complementary entries:
\begin{equation}
 \mathfrak R(\cC_{\rm phys};\cI)
 =\bigl(R_{\rm kin},R_{\rm sym},R_{\rm dyn},R_{\rm scope}\bigr).
 \label{eq:classification-output}
\end{equation}
Here $R_{\rm kin}$ records suppression away from the routed zero set or soft
scaling;
$R_{\rm sym}$ is forbidden, constrained, allowed, or anomalous;
$R_{\rm dyn}$ records the strongest dynamical conclusion established;
and $R_{\rm scope}$ records whether the mechanism is described by the response
kernel, supplied by a topological construction, or assigned to a different
response hierarchy.  Keeping the entries separate allows a channel to be
Gaussian-suppressed away from the routed zero set, symmetry allowed in its
soft region, and dynamically undecided at the same time.

Within the isolated simple-eigenvalue setting, the conclusions can be
established in successive steps.  Spectral criticality means
$1\in\spec\cB(0)$.  A simple Fredholm zero additionally requires
$\partial_z\lambda(0)\neq0$.  An observable pole additionally requires both
response-space source--readout overlaps to be nonzero.  A native positive
residue follows when Theorem~\ref{thm:native-positive-residue} applies;
this is an additional certification, not a necessary condition on every
nonreciprocal response.  A Lorentzian physical atom
additionally requires the exact unfiltered-correlator identification and
spectral hypotheses of Theorem~\ref{thm:fredholm-kl-bridge}; Hermiticity and
positive semidefiniteness then follow from the Wightman measure rather than
being imposed as a substitute for that bridge.  The labels
\emph{dynamically open}, \emph{threshold-sensitive}, and
\emph{representation-outside} distinguish unresolved dynamics, threshold
behaviour, and responses outside the stated representation.  A
negative result is labelled \emph{exact-noncritical},
\emph{certified-truncation-noncritical}, or \emph{truncation-only}; an exact
exclusion requires the exact operator of the declared system, or a
certified approximation to it.  A finite exact system is not a truncation
merely because its dimension is finite.

\paragraph{Conditions for a complete application.}
A concrete QFT supplies a complete input to the Lorentzian isolated-atom
classification only when it reports the following compatible data:
\begin{enumerate}[label=(\roman*)]
 \item the physical operator space, quotient, projector
 $\mathcal P_{\cC}$, and operator source--readout maps
 $\mathsf S_{\cO},\mathsf R_{\cO}$;
 \item the Banach response space, response projector $P_X$, realisation maps
 $\iota_{\cC},\pi_{\cC}$, and the corresponding maps
 $\mathsf S_X,\mathsf R_X$ satisfying
 Eqs.~\eqref{eq:channel-realisation-compatibility}--
 \eqref{eq:response-source-readout};
 \item the self-consistent map and evidence that the implemented kernel is
 its Jacobian; a closure approximation must also report
 Eq.~\eqref{eq:reference-branch-matching}, or an independently established
 fixed-point displacement bound of the form
 Eq.~\eqref{eq:coarse-fixed-point-displacement}, together with the applicable
 derivative and cross-root bounds of
 Proposition~\ref{prop:closure-stability};
 \item invariance of $\Ran P_X$ under the response or the full
 Schur--Feshbach data of Proposition~\ref{prop:response-closed-channel};
 \item the Fredholm domain, simple-zero test, source--readout overlaps, and,
 for a truncation, the operator-norm, derivative, and Riesz-projection
 convergence data;
 \item the vacuum spectral assumptions, common centring, physical
 projection, continuation, source pairing, subtraction polynomial, and
 unfiltered connected endpoint correlator needed for the
 Fredholm--K\"all\'en--Lehmann bridge.
\end{enumerate}
Items (i)--(ii) define the response between the stated spaces; (iii)
identifies its dynamical origin; and (iv) retains the feedback of omitted
channels.  Item (v) establishes the observable pole and, where needed,
its approximation control.  With the matching data in (vi),
Theorem~\ref{thm:fredholm-kl-bridge} identifies its residue with a physical
atom.  When some data are unavailable, the classification records the
unresolved step rather than assuming the later identification.
A native positive residue can still follow independently
from Theorem~\ref{thm:native-positive-residue}; it concerns the declared
Hilbert-space response rather than supplying the QFT representation.

\paragraph{Logical consequence.}
Fix a physical channel and an implementation satisfying the hypotheses above.
At every finite $\tau_Q$, external Gaussian filtering preserves the residue
matrix $Z_0$ of an existing isolated zero-mass pole and cannot create such a
pole from a regular form factor.
If an exact symmetry identity sets the complete projected separated-point block to zero, then $Z_0=0$ in that block.
For an isolated Fredholm response with analytic background,
$1\notin\spec\cB(0)$ excludes a pole generated by that response operator.
An algebraically simple $z$-zero and nonzero response-space source--readout
overlaps establish successively a resolvent pole and an observable pole.
The matching hypotheses of Theorem~\ref{thm:fredholm-kl-bridge} then identify
its residue with the positive-semidefinite physical zero-mass atom.
An unevaluated symmetry-allowed channel is recorded as dynamically open;
threshold-sensitive and representation-outside cases retain their separate
labels.

The Gaussian statement follows from Proposition~\ref{prop:pole-preservation}.
A contact polynomial has no nonlocal spectral measure, so uniqueness of the spectral decomposition gives $Z_0=0$ for a vanishing separated-point block.
The two Fredholm alternatives are Theorem~\ref{thm:kernel-pole}, and the
physical identification is Theorem~\ref{thm:fredholm-kl-bridge}.
The remaining cases lack one or more hypotheses needed for a decisive
conclusion.  The order of these tests matters: the physical channel and its
response realisation must be fixed before a kernel eigenvalue can be given a
spectral interpretation.  Section~\ref{sec:verification} states the
corresponding reporting requirements for concrete calculations.
\subsection{Four-axis classification}
\label{sec:four-axis}

For a channel block $\cC_{\rm phys}$ and a computational realisation $\cI$, the result
$\mathfrak R(\cC_{\rm phys};\cI)$ of Eq.~\eqref{eq:classification-output} is read as follows.

\paragraph{Kinematic entry.}
\emph{Fixed-transfer suppressed} means that Theorem~\ref{thm:fixed-transfer}
applies on support separated from the zero set of the routed quadratic form.
\emph{Soft retained} means that $q=O(\tau^{-1/2})$ or a direction in
$\Ker M$ remains and must be analysed.
Both labels normally occur for the same channel in their respective momentum regions.

\paragraph{Symmetry entry.}
\emph{Forbidden} refers to a projected component set to zero by an exact selection rule.
\emph{Constrained} refers to a component or normalisation determined by a Ward or Slavnov--Taylor identity.
\emph{Allowed} means only that no stated identity removes the physical block.
\emph{Anomalous/broken} records that an apparent classical conservation law is modified by an anomaly, explicit breaking, or the state.

\paragraph{Dynamical entry.}
\emph{Spectral-critical} means only that a specified response operator has
eigenvalue one.  \emph{Simple-zero}, \emph{observable-pole}, and
\emph{physical-atom} add, in order, $z$-transversality, the two nonzero
source--readout overlaps, and the physical spectral test.
\emph{Noncritical} is used without qualification only for an exact or
certified limiting operator; otherwise the label is \emph{truncation-only}.
\emph{Threshold-sensitive} records a collision with a continuum threshold.
\emph{Dynamically open} is the default when symmetry has left the channel open
but the relevant kernel has not been evaluated.

\paragraph{Scope entry.}
The response-kernel label identifies the mechanism treated here.
A topological label records a zero mode fixed by a separate index or cohomological construction.
An outside label is used when no response realisation covered by the stated
theorems has been established.  In particular, the native positive-residue
theorem also covers isolated finite-dimensional critical subspaces without
requiring a compact response kernel or a two-body representation.

These entries answer different questions.
In particular, neither ``allowed'' nor ``constrained'' entails spectral
criticality, and fixed-transfer suppression is compatible with a nontrivial
soft response.

\subsection{Representative channels}
\label{sec:representative-classification}

Table~\ref{tab:channel-classification} applies the four-axis language to a
representative set of channels.  The dynamical column states the calculation
required; no criticality claim is made where the relevant kernel has not been
evaluated.

\begin{table}[htbp]
 \centering
 \caption{Representative renormalised channels.  Under the hypotheses of
 Theorem~\ref{thm:fixed-transfer}, ``suppressed; soft retained'' means
 suppression on support separated from the zero set of the routed Gaussian
 quadratic form, with the soft or degenerate region of
 Eq.~\eqref{eq:soft-scaling} still requiring analysis.}
 \label{tab:channel-classification}
 \footnotesize
 \setlength{\tabcolsep}{3pt}
 \renewcommand{\arraystretch}{1.08}
 \begin{tabular}{@{}>{\raggedright\arraybackslash}p{0.155\textwidth}>{\raggedright\arraybackslash}p{0.195\textwidth}>{\raggedright\arraybackslash}p{0.14\textwidth}>{\raggedright\arraybackslash}p{0.255\textwidth}>{\raggedright\arraybackslash}p{0.17\textwidth}@{}}
  \toprule
  Channel & $R_{\rm sym}$ & $R_{\rm kin}$ & $R_{\rm dyn}$ & $R_{\rm scope}$ \\
  \midrule
  $T_{\mu\nu}^{\TT}$
   & conserved; full tensor covariance and positivity
   & suppressed; soft retained
   & $s>0$: spin-two kernel; $s=0$: TT atom excluded
   & QFT response kernel; $d=4$ vacuum \\
  \addlinespace[2pt]
  conserved $J_\mu$
   & charge normalisation conditional; transverse weight unfixed
   & suppressed; soft retained
   & vector kernel calculation required
   & response kernel \\
  \addlinespace[2pt]
  $[\bar\psi\psi]_R$ block
   & no current normalisation
   & suppressed; soft retained
   & scalar bound-state kernel calculation required
   & response kernel \\
  \addlinespace[2pt]
  longitudinal nonsinglet $J^5_\mu$ block
   & exact anomaly-free chiral identity; state dependent
   & suppressed; soft retained
   & Goldstone test of Section~\ref{sec:goldstone-example} if $v_A\neq0$;
     kernel otherwise
   & Ward identity / response \\
  \addlinespace[2pt]
  longitudinal $J^5_\mu$ block with anomaly or explicit breaking
   & anomaly or explicit breaking
   & suppressed; soft retained
   & pseudoscalar kernel calculation required
   & response kernel \\
  \addlinespace[2pt]
  $[\operatorname{tr}F^2]_R$ block
   & physical BRST cohomology class
   & suppressed; soft retained
   & scalar glueball kernel calculation required
   & response kernel \\
  \addlinespace[2pt]
  $[H^\dagger H]_R$, $[(H^\dagger H)^2]_R$
   & gauge invariant; no charge normalisation
   & suppressed; soft retained
   & mixed scalar kernel calculation required
   & response kernel \\
  \addlinespace[2pt]
  internal gauge zero modes
   & topology and background dependent
   & not applicable
   & index or harmonic analysis required
   & topological; outside \\
  \bottomrule
 \end{tabular}
 \par\smallskip
 \begin{minipage}{\textwidth}
 \footnotesize
  Charge normalisation assumes the external-state, LSZ, and zero-transfer
  regularity conditions of Proposition~\ref{prop:symmetry-scope}(iii).
  The stress-tensor row assumes positivity and exact covariance of the
  complete tensor before TT compression; its zero-mass endpoint is fixed by
  Proposition~\ref{prop:stress-null-shell}, not by a massive-projector limit.
 The axial rows refer to longitudinal current--scalar or pseudoscalar mixing
 blocks, not the transverse axial response.  A required kernel calculation
 is not a pole certificate: identification of a physical zero-mass atom uses
 the centred, connected vacuum correlator and all matching, positivity, and
 threshold hypotheses of Theorem~\ref{thm:fredholm-kl-bridge}.
 \end{minipage}
\end{table}

Several distinctions in the table are essential.
The translation Ward identity alone does not fix the transverse-traceless
stress-tensor form factor.  With the stronger complete-tensor positivity,
exact Lorentz covariance, conservation, and invariant-vacuum hypotheses of
Proposition~\ref{prop:stress-null-shell}, however, the zero-mass TT atom is
excluded.  The spin-two spectrum at $s>0$ still requires a dynamical
calculation.  This distinction concerns the QFT representation; it does not
exclude a native tensor response or a discrete transfer residue that has
not been identified with that complete Wightman tensor measure.
The same separation applies to a conserved vector current: an exactly normalised charge can coexist with a gapped current--current spectral function.

The scalar bilinear is not assigned a zero source.
Depending on the theory and state, its kernel can contain a bound-state pole.
The axial rows concern longitudinal mixing blocks; transverse axial response
requires a separate projection and kernel calculation.
An anomaly-free nonsinglet current in a chirally symmetric Lagrangian obeys a different identity from a singlet anomalous current or a current with explicit mass breaking.
In the exact case a Goldstone pole additionally requires spontaneous symmetry breaking, as shown in Section~\ref{sec:goldstone-example}; the operator name alone does not decide the result.

The notation $\operatorname{tr}F^2$ denotes the scalar $0^{++}$ gauge-invariant composite channel.
It must not be identified with a spin-one force-carrier channel.
Gauge-field zero modes on a compact internal space concern the kernel of a background differential operator and constitute a separate topological question.
The absence of a nontrivial scalar index neither forbids a glueball pole nor removes the constant kernel of a scalar Laplacian.

Finally, $\|D H\|^2$ and four-fermion monomials are local composite operators.
Whether they are independent representatives depends on integration by parts, field redefinitions, and EOM relations in the chosen operator basis.
They cannot be excluded merely because one is a kinetic insertion and the other is a four-point interaction.

\subsection{Symmetry permission and dynamical existence}
\label{sec:permission-existence}

For a symmetry-allowed block, the principal output is a well-posed dynamical
question.
If a symmetry forbids a projected block, no kernel calculation is needed for that block.
If the block is allowed, the classification identifies the response operator, quantum-number sector, and soft kinematics in which a pole could occur.
The existence question is thereby reduced to a specified spectral problem.

The distinction can be phrased directly in spectral language.
For a Hermitian operator block, symmetry controls linear relations among the matrix elements of the positive spectral measure $d\rho_{ij}(s)$.
It does not ordinarily fix whether $\rho_{ij}$ has an atom at $s=0$.
Such an atom can be forced when additional state information closes a Ward
identity, as in the Goldstone theorem.  A critical response kernel yields
an observable pole under the hypotheses and source--readout visibility
conditions of Theorem~\ref{thm:kernel-pole}; under the additional matching
hypotheses of Theorem~\ref{thm:fredholm-kl-bridge}, its residue is the
physical zero-mass atom matrix.
Otherwise its weight is a dynamical observable.

This observation eliminates a false dichotomy between ``protected'' and ``unprotected'' operators.
Protection can refer to current conservation, gauge-parameter independence, a nonrenormalisation statement, or membership in physical BRST cohomology.
These notions are related but not equivalent.
The classification therefore records the actual identity and its consequence instead of assigning a universal binary attribute.

\subsection{External basis covariance, response coordinates, and schemes}
\label{sec:basis-covariance}

Two different transformations must be kept separate.  First let
$O=(O_1,\ldots,O_m)^T$ be the external physical operator basis and let
\begin{equation}
 O'=M O,
 \qquad M\in GL(m,\mathbb C),
 \label{eq:basis-change}
\end{equation}
be a finite change of renormalised basis.
The Wightman spectral matrix, Euclidean correlator, and any isolated residue matrix transform by congruence,
\begin{equation}
 d\rho'=M\,d\rho\,M^\dagger,
 \qquad
 C'=MCM^\dagger,
 \qquad
 Z'=MZM^\dagger.
 \label{eq:congruence-transform}
\end{equation}
This is an external change of source and readout coordinates.

Second, let $T:\mathcal X_{\cC}\to\mathcal X'_{\cC}$ be a bounded
isomorphism that changes only the internal coordinates of the response
state.  Then
\begin{equation}
 \begin{aligned}
  \cB_T&=T\cB T^{-1},
  &P_{X,T}&=TP_XT^{-1},\\
  \mathsf S_{X,T}&=T\mathsf S_X,
  &\mathsf R_{X,T}&=\mathsf R_XT^{-1}.
 \end{aligned}
 \label{eq:response-coordinate-similarity}
\end{equation}
It leaves the output matrix unchanged.  If an external basis change $M$ and
an internal coordinate change $T$ are performed together, the complete
transformation law is
\begin{equation}
 \begin{aligned}
  \cB'&=T\cB T^{-1},\\
  P'_X&=TP_XT^{-1},\\
  \mathsf S'_X&=T\mathsf S_XM^\dagger,\\
  \mathsf R'_X&=M\mathsf R_XT^{-1},\\
  C'_{\rm reg}&=MC_{\rm reg}M^\dagger.
 \end{aligned}
 \label{eq:combined-basis-response-transform}
\end{equation}
Therefore
\begin{equation}
 C'=C'_{\rm reg}+\mathsf R'_X(I-\cB')^{-1}\mathsf S'_X
 =MCM^\dagger,
 \label{eq:combined-response-congruence}
\end{equation}
as required by the Wightman construction.
The adjoint on the source leg follows the convention
$C_{ij}=\langle O_iO_j^\dagger\rangle$; in a real bilinear convention it is
replaced consistently by the corresponding transpose.

\begin{theorem}[External and internal covariance]
\label{thm:basis-covariance}
Under the finite nonsingular external transformation $M$ and bounded
internal isomorphism $T$, the following exact channel data are invariant:
the location and order of poles of the full correlator matrix; the rank and
positive-semidefinite character of an isolated residue matrix; the forbidden,
constrained, and allowed symmetry subspaces; the spectrum of the response
operator; and the vanishing or nonvanishing of the source and observable
overlaps at an isolated critical eigenvalue.
Individual matrix entries and residue components need not be invariant.
\end{theorem}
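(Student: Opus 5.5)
The argument starts from the combined transformation law \eqref{eq:combined-basis-response-transform} and its consequence \eqref{eq:combined-response-congruence}, namely $C'(z)=MC(z)M^\dagger$ with $M$ constant and invertible. Each listed invariant is then checked in turn, with the external $M$ and the internal $T$ kept separate.

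\emph{Poles and their order.} Since $M$ and $M^\dagger$ are constant and invertible, $C'$ and $C=M^{-1}C'(M^\dagger)^{-1}$ are analytic at exactly the same points. At a pole, the Laurent coefficients transform as $C'_k=MC_kM^\dagger$. The leading nonzero coefficient therefore stays nonzero, so the order is preserved.

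\emph{Residues.} The same identity gives $Z'=MZM^\dagger$. Congruence by an invertible $M$ preserves rank. It also preserves positive semidefiniteness, because $v^\dagger Z'v=(M^\dagger v)^\dagger Z(M^\dagger v)$ and $M^\dagger$ is a bijection.

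\emph{Spectrum of the response operator.} The relation $\cB'=T\cB T^{-1}$ is a bounded similarity. Hence $\lambda-\cB'=T(\lambda-\cB)T^{-1}$ is boundedly invertible exactly when $\lambda-\cB$ is. Algebraic simplicity, and the analytic eigenvalue branch $\lambda(z)$ with its derivative $\partial_z\lambda(0)$, are unchanged: the Riesz projections satisfy $P'_\lambda=TP_\lambda T^{-1}$ and preserve rank. The eigenvectors transform as $|r'\rangle=T|r\rangle$ and $\langle\ell'|=\langle\ell|T^{-1}$, which keeps the normalisation $\langle\ell'|r'\rangle=1$.

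\emph{Overlaps.} Using \eqref{eq:combined-basis-response-transform}, the source and readout overlaps become
\begin{equation}
\mathsf R'_X|r'\rangle=M\mathsf R_X|r\rangle,\qquad \langle\ell'|\mathsf S'_X=\langle\ell|\mathsf S_XM^\dagger .
\end{equation}
Both $M$ and $M^\dagger$ are injective, so each overlap vanishes in the primed description if and only if it vanishes in the original one. As a consistency check, \eqref{eq:matrix-algebraic-residue} then reproduces $Z'_{\rm alg}=MZ_{\rm alg}M^\dagger$.

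\emph{Symmetry subspaces.} This is the step I expect to need the most care, because it depends on how the forbidden, constrained and allowed subspaces are defined. I will treat them as subspaces of source and readout directions in $\mathbb C^m$, cut out by linear identities: vanishing of projected matrix elements, or relations fixed by Ward and Slavnov--Taylor identities (Proposition~\ref{prop:symmetry-scope}). A linear identity on $O$ becomes, under $O'=MO$, the corresponding identity on $O'$ with the functional composed with $M^{-1}$. Consequently the subspaces are mapped bijectively by $M$ (or $M^\dagger$ on the source leg), and their dimensions and mutual classification are preserved. The internal change $T$ does not act on $\mathbb C^m$ at all, so it leaves them untouched. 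What is invariant is the subspace as a geometric object together with its label, not its coordinate description.

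\emph{Entries need not be invariant.} A single explicit $2\times2$ example suffices. Take $M$ a rotation and $Z=\diag(1,0)$; then $MZM^\dagger$ has nonzero off-diagonal entries, so individual components are not preserved.
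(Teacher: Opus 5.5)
Your proposal is correct and follows essentially the paper's proof. Pole location and order come from the Laurent-coefficient congruence $C_r\mapsto MC_rM^\dagger$, residue rank and positivity from $Z'=MZM^\dagger$, and the symmetry subspaces from $DC=0\Rightarrow (DM^{-1})C'=0$; the spectrum and the overlaps come from the similarity with $|r'\rangle=T|r\rangle$ and $\langle\ell'|=\langle\ell|T^{-1}$. The one difference is the starting point: the paper first obtains $C'=MCM^\dagger$ on the Wightman side, from $\Theta'=M\Theta$ and uniqueness of the spectral decomposition, and checks the response-level congruence separately, whereas you take the congruence directly from the response transformation law; both routes give the same identity.
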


The proof is given in Appendix~\ref{app:operator-mixing}.
For a scale-dependent finite transformation $M(\mu)$, the anomalous-dimension matrix obeys
\begin{equation}
 \gamma'=M\gamma M^{-1}
 -\left(\mu\frac{dM}{d\mu}\right)M^{-1}.
 \label{eq:gamma-transform}
\end{equation}
The extra term is why a running operator basis cannot be treated as a constant similarity transformation.

Exact physical pole positions are independent of a consistent renormalisation scheme.
Intermediate Wilson coefficients, contact terms, operator normalisations, and individual entries of a mixing matrix are not.
At finite truncation, even the inferred kernel eigenvalues and crossing scales generally depend on the regulator, basis size, and projection prescription.
Claims of scheme independence must therefore be demonstrated by the exact transformation law or replaced by a convergence and regulator-variation study.

For Higgs-sector applications, the construction identifies the scalar mixing
block and the kernel governing its correlations.  Ultraviolet boundary
conditions for the mass and quartic coupling remain independent
renormalisation or phenomenological inputs.

%% file: sections-en/sec04_irreversibility.tex
\section{Coarse-response errors, residual states, and evolution}
\label{sec:irreversibility}

Exact observation-sufficient descent preserves the full response.  When a
chosen coarse description is only approximate, the first object to compare
is that response, even if the state dimensions differ.  We begin with an
identity requiring no inverse coarse map, then relate its defects to
unresolved states, approximate branches, and numerical realisations.

\subsection{A response error identity without an inverse coarse map}
\label{sec:noninvertible-error}

\begin{theorem}[Irreversible coarse-response error]
\label{thm:irreversible-error}
Let $B,S,R,D$ and $\bar B,\bar S,\bar R,\bar D$ be bounded response
data on Banach state spaces $\mathcal X,\mathcal Y$, with the same source
and output spaces.  Let $C:\mathcal X\to\mathcal Y$ be any bounded
linear comparison map.  Set $G=(I-B)^{-1}$, $\bar G=(I-\bar B)^{-1}$
where both inverses exist, and define
\begin{equation}
 \begin{aligned}
 E_B&=\bar BC-CB,&E_S&=\bar S-CS,\\
 E_R&=\bar RC-R,&E_D&=\bar D-D.
 \end{aligned}
 \label{eq:irreversible-defects}
\end{equation}
Then the complete responses obey the exact identity
\begin{equation}
 \bar{\mathscr C}-\mathscr C
 =E_D+\bar R\bar GE_S+\bar R\bar GE_BGS+E_RGS.
 \label{eq:irreversible-response-error}
\end{equation}
In particular, writing $e_i=\|E_i\|$ in the specified operator norms,
\begin{equation}
 \|\bar{\mathscr C}-\mathscr C\|
 \leq e_D+\|\bar R\|\|\bar G\|e_S
 +\|\bar R\|\|\bar G\|e_B\|G\|\|S\|
 +e_R\|G\|\|S\|.
 \label{eq:irreversible-response-bound}
\end{equation}
Neither injectivity nor surjectivity of $C$ is required.
\end{theorem}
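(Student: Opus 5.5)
The plan is a direct algebraic expansion of the difference of the two responses supplied by Proposition~\ref{prop:same-rule-response}, namely $\mathscr C=D+RGS$ and $\bar{\mathscr C}=\bar D+\bar R\bar G\bar S$. The comparison map $C$ enters only through factors that appear on the correct side of existing operators, so no inverse or pseudo-inverse of $C$ is ever needed. First I substitute the defect definitions of Eq.~\eqref{eq:irreversible-defects}: $\bar D=D+E_D$ and $\bar S=CS+E_S$. This gives
\begin{equation*}
 \bar{\mathscr C}-\mathscr C=E_D+\bar R\bar GE_S+\bar R\bar GCS-RGS .
\end{equation*}

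The key step rewrites the remaining term $\bar R\bar GCS$. I need a resolvent-type intertwining identity with a defect. From $E_B=\bar BC-CB$ one has
\begin{equation*}
 (I-\bar B)C-C(I-B)=CB-\bar BC=-E_B .
\end{equation*}
Multiplying on the left by $\bar G$ and on the right by $G$ yields
\begin{equation*}
 C G-\bar G C=-\bar GE_BG, \qquad\text{i.e.}\qquad \bar GC=CG+\bar GE_BG .
\end{equation*}
This uses only bounded two-sided multiplication, which is why the identity holds for an arbitrary bounded $C$. Inserting it gives
\begin{equation*}
 \bar R\bar GCS=\bar RCGS+\bar R\bar GE_BGS .
\end{equation*}
Then $\bar RC=R+E_R$ turns $\bar RCGS$ into $RGS+E_RGS$, and the $RGS$ term cancels against $-\mathscr C$'s resolvent part. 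What remains is exactly Eq.~\eqref{eq:irreversible-response-error}.

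The norm bound Eq.~\eqref{eq:irreversible-response-bound} follows termwise from the triangle inequality and submultiplicativity of operator norms on the specified Banach spaces. The final sentence holds because no step inverts $C$.

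The only delicate point is the sign and ordering in the perturbed intertwining identity. The derivation must sandwich $\bar G$ on the left and $G$ on the right, and not the reverse, so that $E_B$ sits between $\bar R\bar G$ and $GS$ as stated. I will check this by multiplying back the factored form. Beyond that, the proof is routine bookkeeping under the standing assumption that both inverses exist.
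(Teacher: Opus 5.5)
Your proposal is correct and follows the paper's proof essentially step for step. Both obtain the perturbed intertwining identity $\bar GC-CG=\bar GE_BG$ from $(I-\bar B)C=C(I-B)-E_B$, substitute $\bar S=CS+E_S$ and $\bar RC=R+E_R$ into $\bar{\mathscr C}-\mathscr C$, and get the norm bound from the triangle inequality and submultiplicativity, with no inverse of $C$ ever needed.
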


\begin{proof}
From $(I-\bar B)C=C(I-B)-E_B$ obtain
$\bar GC-CG=\bar GE_BG$.  Inserting $\bar S=CS+E_S$ into
$\bar D+\bar R\bar G\bar S-(D+RGS)$ and using
$\bar RC=R+E_R$ gives Eq.~\eqref{eq:irreversible-response-error}.
Submultiplicativity gives the norm bound.
\end{proof}

Each defect has a different meaning: update, source, readout, or direct
response.  A small update defect alone cannot certify a measured output.
For the source-dependent coarse map of
Proposition~\ref{prop:moving-coarse-map}, apply this theorem with $C=T$
and the corrected data
\begin{equation}
 \bar S_{\rm corr}=\bar S-(I-\bar B)C_h,
 \qquad \bar D_{\rm corr}=\bar D+\bar RC_h.
 \label{eq:corrected-coarse-data}
\end{equation}
They give exactly the same coarse response since the two corrections
cancel in $\bar D_{\rm corr}+\bar R\bar G\bar S_{\rm corr}$.

All comparisons are at the actual reference states.  For a fixed linear
$C$, let $\delta=\|\bar X_*-CX_*\|$ be independently bounded and let
$e_i^{(0)}$ compare the data at $X_*$ and the transported point $CX_*$.
If the coarse data have state-Lipschitz constants $L_i$ on the intervening
neighbourhood, valid budgets are
\begin{equation}
 \begin{aligned}
 e_B&\leq e_B^{(0)}+L_B\|C\|\delta,&
 e_S&\leq e_S^{(0)}+L_S\delta,\\
 e_R&\leq e_R^{(0)}+L_R\|C\|\delta,&
 e_D&\leq e_D^{(0)}+L_D\delta.
 \end{aligned}
 \label{eq:four-map-displacement}
\end{equation}
For a moving map its $T,C_h$ errors must also enter the corrected defects.
Near criticality neither a small stationary residual nor the norm of $C$
alone bounds $\delta$.  Resolvent growth in
Eq.~\eqref{eq:irreversible-response-bound} records the associated
sensitivity rather than removing it.

\begin{proposition}[Contour control of total output residue]
\label{prop:output-residue-error}
Suppose both responses extend meromorphically to a neighbourhood of a
simple closed positively oriented rectifiable contour $\Gamma$ and its
interior, with finitely many poles inside and none on $\Gamma$.  If
Eq.~\eqref{eq:irreversible-response-bound} is bounded by $b_\Gamma$
on $\Gamma$, then
\begin{equation}
 \left\|\sum_{\Gamma}\operatorname{Res}\bar{\mathscr C}
       -\sum_{\Gamma}\operatorname{Res}\mathscr C\right\|
 \leq\frac{\operatorname{length}(\Gamma)}{2\pi}b_\Gamma.
 \label{eq:contour-output-residue-error}
\end{equation}
\end{proposition}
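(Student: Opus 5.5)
The plan is to reduce the statement to the Cauchy integral formula for matrix-valued meromorphic functions, applied to the difference $\Delta(z)=\bar{\mathscr C}(z)-\mathscr C(z)$. Both responses are meromorphic on a neighbourhood $N$ of $\Gamma$ together with its interior, with finitely many poles inside and none on $\Gamma$. Hence $\Delta$ is meromorphic on $N$, has finitely many poles inside $\Gamma$, and is continuous on $\Gamma$ itself. Because residues are linear, the residue of $\Delta$ at any point is the difference of the two responses' residues there. A point that is a pole of only one response therefore contributes its residue with the appropriate sign, so $\sum_\Gamma\operatorname{Res}\Delta=\sum_\Gamma\operatorname{Res}\bar{\mathscr C}-\sum_\Gamma\operatorname{Res}\mathscr C$.

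The main step is the residue theorem for the finite-dimensional matrix-valued function $\Delta$, applied entrywise or, equivalently, through Bochner integration in the output operator space:
\begin{equation*}
 \sum_\Gamma\operatorname{Res}\Delta=\frac{1}{2\pi\iu}\oint_\Gamma\Delta(z)\,dz .
\end{equation*}
This uses that $\Gamma$ is a simple closed positively oriented rectifiable curve. By the Jordan curve theorem its interior is a bounded domain, and its closure lies in $N$ by hypothesis. The residue theorem for rectifiable Jordan curves then applies, with winding number one about each interior pole. If only piecewise smooth contours are to be invoked, I would approximate $\Gamma$ from inside $N$ by polygonal contours enclosing the same poles. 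This is legitimate because $\Delta$ is uniformly continuous on a compact neighbourhood of $\Gamma$ that contains no poles.

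The estimate then follows from the standard bound $\|\oint_\Gamma f\,dz\|\leq\operatorname{length}(\Gamma)\sup_\Gamma\|f\|$, valid in any Banach space and hence in the chosen output norm. On $\Gamma$ both inverses $G$ and $\bar G$ exist, since neither response has a pole there. I would read the hypothesis as meaning that, at every $z\in\Gamma$, the right-hand side of Eq.~\eqref{eq:irreversible-response-bound} is defined and at most $b_\Gamma$. Theorem~\ref{thm:irreversible-error} then gives $\|\Delta(z)\|\leq b_\Gamma$ pointwise on $\Gamma$, and dividing by $2\pi$ yields Eq.~\eqref{eq:contour-output-residue-error}.

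I expect the main obstacle to be the interpretive point rather than any calculation. Absence of poles of $\mathscr C$ on $\Gamma$ does not by itself guarantee that $I-B$ is invertible there, because an invisible spectral point can make the resolvent singular while the response stays regular. Since the bound in Eq.~\eqref{eq:irreversible-response-bound} involves $\|G\|$ and $\|\bar G\|$, I would state explicitly that assuming the bound $b_\Gamma$ includes assuming invertibility on $\Gamma$. Alternatively, one can take $b_\Gamma$ directly as a bound on $\sup_\Gamma\|\Delta\|$, in which case the argument does not use the resolvents at all.
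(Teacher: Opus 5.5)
Your proposal is correct and follows the paper's argument: integrate the exact identity of Theorem~\ref{thm:irreversible-error} (i.e.\ the difference $\bar{\mathscr C}-\mathscr C$) around $\Gamma$, apply the matrix-valued residue theorem, and bound the integral by $\operatorname{length}(\Gamma)\,b_\Gamma$. Your final caveat is also the right reading, because the hypothesis on Eq.~\eqref{eq:irreversible-response-bound} presupposes that $G$ and $\bar G$ exist on $\Gamma$, which does not follow from the absence of poles. Delete your earlier sentence that claims it does.
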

\begin{proof}
Integrate Eq.~\eqref{eq:irreversible-response-error} and apply the
matrix-valued residue theorem and the contour norm bound.
\end{proof}

This controls a sum of output residues, not a one-to-one correspondence of
full and coarse kernel roots.  A branch cut inside the contour would
contribute its jump integral and invalidate a pole-only interpretation.
Individual root and Riesz-projection estimates still require the separated
spectral hypotheses of Proposition~\ref{prop:closure-stability} and
Appendix~\ref{app:kernel-proof}.

\subsection{Residual states and exact memory}
\label{sec:residual-memory}

If fibre consistency fails, a section selecting one representative of each
coarse state defines at most a chosen approximation.  It is not an exact
descended update.  Eliminating unresolved variables produces memory, as
in the projection-operator formulation of Mori~\cite{Mori:1965}.
For the bounded discrete linear system considered here, the terms follow
directly from the block equations:
\begin{equation}
 \begin{aligned}
 x_{n+1}&=A_{11}x_n+A_{12}u_n+S_1h_n,\\
 u_{n+1}&=A_{21}x_n+A_{22}u_n+S_2h_n.
 \end{aligned}
 \label{eq:mother-memory-blocks}
\end{equation}
Induction gives
\begin{equation}
 u_n=A_{22}^nu_0+
 \sum_{j=0}^{n-1}A_{22}^{n-1-j}(A_{21}x_j+S_2h_j).
 \label{eq:exact-hidden-history}
\end{equation}
Substitution into the first equation produces the initial-residual term
$A_{12}A_{22}^nu_0$ and two convolution memories, one of the retained state
and one of the source.  An observation $R_1x_n+R_2u_n+D h_n$ retains the
same hidden history.  Static Schur elimination is the corresponding
stationary or transform-domain identity, not a proof that these time-domain
terms vanish.

If $\|A_{22}^k\|\leq M\rho^k$ with $0<\rho<1$, discarding lags
$k\geq m$ incurs, on a history with bounds $X_H,H_H$, at most
\begin{equation}
 \frac{\|A_{12}\|M\rho^m}{1-\rho}
 (\|A_{21}\|X_H+\|S_2\|H_H)
 \label{eq:general-memory-tail}
\end{equation}
in one retained update; the initial term remains separately bounded by
$\|A_{12}\|M\rho^n\|u_0\|$.  Replace $A_{12}$ by $R_2$ for the
readout tail.  Summing a geometric series proves these bounds.
They are local update or readout errors, not automatically uniform
trajectory errors.  If an independently verified augmented evolution
bound gives $e_{n+1}\leq q e_n+\epsilon_n$, then induction gives
$e_n\leq q^ne_0+\sum_{j<n}q^{n-1-j}\epsilon_j$.  In particular,
$0\leq q<1$ and bounded forcing errors give a uniform trajectory bound.
For nonlinear dynamics the corresponding propagation bound must be
established from the nonlinear evolution.

The arithmetic and two-mode examples below distinguish a faithful
environment retaining all full-system states from an environment sufficient for
one observation protocol.  Minimal dimension is proved there within a
specified finite-dimensional linear class, not inferred from the set
quotient theorem.  General long-time nonlinear memory and moving-window
algorithms are beyond the response estimates established here.

\subsection{Actual updates and stationary solvers}
\label{sec:preconditioning}

Four transformations should be distinguished: internal coordinate
similarity, external source-basis congruence, equation preconditioning,
and actual coarse compression.  Only the last can forget state directions.
For a residual $\mathcal E(X,h)=0$, let $M$ be a differentiable bounded
isomorphism from the state space to the residual space, with bounded inverse,
and introduce the solver $F_{\rm solve}=X-M^{-1}\mathcal E$.
At a solution, derivatives of $M$ multiply $\mathcal E=0$, so
\begin{equation}
 I-B_{\rm solve}=M^{-1}L,\qquad
 S_{\rm solve}=-M^{-1}\mathcal E_h,\qquad
 (I-B_{\rm solve})^{-1}S_{\rm solve}=-L^{-1}\mathcal E_h.
 \label{eq:preconditioned-response}
\end{equation}
This proves invariance of the complete static response after the source
is transformed consistently.  It does not preserve the solver spectrum or
its convergence rate.  For $\mathcal E(X,h)=X-h$, $M=I$ gives derivative
zero, while $M=I/3$ gives derivative $-2$ and a divergent iteration;
both stationary responses are one.  Numerical convergence therefore
establishes no physical stability unless the iteration has separately been
identified with the actual update.  An update index, a Gaussian scale, an
RG coordinate, and physical time likewise need distinct definitions.

\subsection{Invertible sector comparisons and displaced branches}
When the comparison map has a bounded inverse on a specified subspace,
one can also transport the response operator and its spectral data
between the two sectors.  This additional hypothesis is useful for
coordinate changes and convergent approximations.  General irreversible
maps remain covered by Theorem~\ref{thm:irreversible-error}.  Let
$\mathcal C_{\tau_q}:\mathcal X_{\cC}\to
\mathcal X_{\cC,\tau_q}$ be the stated coarse-graining map and define
\begin{equation}
 E_{\tau_q}(X;z)
 :=\mathcal C_{\tau_q}\mathfrak F(X;z)
  -\mathfrak F_{\tau_q}(\mathcal C_{\tau_q}X;z).
 \label{eq:coarse-graining-closure-defect}
\end{equation}
If $X_*(z)=\mathfrak F(X_*(z);z)$ and
$\bar X_*(z):=\mathcal C_{\tau_q}X_*(z)$, then
\begin{equation}
 E_{\tau_q}(X_*(z);z)
 =\bar X_*(z)-\mathfrak F_{\tau_q}(\bar X_*(z);z).
 \label{eq:transported-fixed-point-residual}
\end{equation}
Thus $E_{\tau_q}(X_*;z)$ is the residual of the transported point in the
coarse fixed-point equation.  Its derivative compares Jacobians at that same
transported point; it does not, by itself, compare with a Jacobian evaluated
at a different coarse solution.  If $E_{\tau_q}=0$ on a neighbourhood of the
reference solution, differentiation gives the exact intertwining relation
\begin{equation}
 \mathcal C_{\tau_q}D_X\mathfrak F(X_*;z)
 =D_X\mathfrak F_{\tau_q}(\bar X_*;z)
  \mathcal C_{\tau_q}.
 \label{eq:response-intertwining}
\end{equation}
For an approximate closure, both the base-point residual and its derivative
must therefore be accounted for.

The following stability result uses exact matching of the reference branches.
With this matching, the derivative defect compares the two response
Jacobians at actual solutions, without a separate state-displacement term.

The estimates proceed from the derivative defect to the critical spectral
subspace, then to the location of a simple root and the residue at that
root.  Source and readout errors enter when passing from the internal mode
to the measured response.  Resolvent factors record how these errors can
be amplified near criticality.

\begin{proposition}[Closure-defect stability budget]
\label{prop:closure-stability}
Let $X_*(z)=\mathfrak F(X_*(z);z)$ be a reference branch on $U$, set
$\bar X_*(z):=\mathcal C_{\tau_q}X_*(z)$, and assume the pointwise
reference-branch matching condition
\begin{equation}
 E_{\tau_q}(X_*(z);z)=0,\qquad z\in U.
 \label{eq:reference-branch-matching}
\end{equation}
Then $\bar X_*(z)$ is an actual coarse fixed point.  Define the response
Jacobians on the two matched branches by
\begin{equation}
 \cB(z):=D_X\mathfrak F(X_*(z);z),
 \qquad
 \cB_{\tau_q}(z):=
 D_X\mathfrak F_{\tau_q}(\bar X_*(z);z).
 \label{eq:matched-response-jacobians}
\end{equation}
Let $Y\subset\mathcal X_{\cC}$ and $\widetilde Y\subset
\mathcal X_{\cC,\tau_q}$ be comparison spaces invariant under $\cB(z)$ and
$\cB_{\tau_q}(z)$, respectively, and suppose
$\mathcal C_{\tau_q}:Y\to\widetilde Y$ is a bounded isomorphism.  Assume
$\Ran\mathsf S_X(z)\subset Y$ and
$\mathsf R_X(z)|_Y:Y\to\mathbb C^m$ is bounded.  Suppose all response,
source, and readout families below are analytic in $z$.  In the remainder of
the proposition, $\cB(z)$ and $\cB_{\tau_q}(z)$ denote their restrictions to
$Y$ and $\widetilde Y$, respectively.  Define
\begin{equation}
 \begin{aligned}
 \cB_{\rm ind}(z)&:=\mathcal C_{\tau_q}\cB(z)
                    \mathcal C_{\tau_q}^{-1},\\
 \Delta\cB(z)&:=\cB_{\tau_q}(z)-\cB_{\rm ind}(z)\\
 &=-\bigl[D_XE_{\tau_q}(X_*(z);z)|_Y\bigr]
       \mathcal C_{\tau_q}^{-1}.
 \end{aligned}
 \label{eq:induced-kernel-defect}
\end{equation}
The consistently transported source and readout are
\begin{equation}
 \mathsf S_{\rm ind}(z):=\mathcal C_{\tau_q}\mathsf S_X(z):
       \mathbb C^m\longrightarrow\widetilde Y,
 \qquad
 \mathsf R_{\rm ind}(z):=
       \bigl(\mathsf R_X(z)|_Y\bigr)\mathcal C_{\tau_q}^{-1}:
       \widetilde Y\longrightarrow\mathbb C^m.
 \label{eq:induced-source-readout}
\end{equation}
They preserve the unfiltered response represented on the comparison sector:
\begin{equation}
 \mathsf R_{\rm ind}(I_{\widetilde Y}-\cB_{\rm ind})^{-1}
 \mathsf S_{\rm ind}
 =\bigl(\mathsf R_X|_Y\bigr)
  [I_Y-(\cB|_Y)]^{-1}\mathsf S_X
 \label{eq:transported-response-invariance}
\end{equation}
wherever the two inverses exist.  Let
\begin{equation}
 \widetilde{\mathsf S}_X(z):\mathbb C^m\to\widetilde Y,
 \qquad
 \widetilde{\mathsf R}_X(z):\widetilde Y\to\mathbb C^m
 \label{eq:actual-coarse-source-readout-types}
\end{equation}
be the source and readout of the coarse response evaluated on $\bar X_*(z)$.

Let $K\Subset U$, and let a positively oriented contour $\gamma$ in the
spectral $\zeta$ plane enclose exactly one algebraically simple eigenvalue of
$\cB_{\rm ind}(z)$ and no other spectrum, for every $z\in K$.  Put
\begin{equation}
 \begin{aligned}
 \epsilon&:=\sup_{z\in K}\|\Delta\cB(z)\|
 \leq\sup_{z\in K}\|D_XE_{\tau_q}(X_*(z);z)|_Y\|
       \|\mathcal C_{\tau_q}^{-1}\|,\\
 K_\gamma&:=\sup_{z\in K,\,\zeta\in\gamma}
 \|(\zeta-\cB_{\rm ind}(z))^{-1}\|,\\
 L_\gamma&:=\operatorname{length}(\gamma).
 \end{aligned}
 \label{eq:closure-stability-constants}
\end{equation}
If $K_\gamma\epsilon<1$, the contour remains in the resolvent set of the
coarse family and its Riesz projection obeys
\begin{equation}
 \|\Pi_{\tau_q}(z)-\Pi_{\rm ind}(z)\|
 \leq d_\Pi:=
 \frac{L_\gamma}{2\pi}
 \frac{K_\gamma^2\epsilon}{1-K_\gamma\epsilon}.
 \label{eq:riesz-defect-bound}
\end{equation}
Writing
$B_*:=\sup_{z\in K}\|\cB_{\rm ind}(z)\|$ and
$\Pi_*:=\sup_{z\in K}\|\Pi_{\rm ind}(z)\|$, the enclosed eigenvalues satisfy
\begin{equation}
 |\lambda_{\tau_q}(z)-\lambda_{\rm ind}(z)|
 \leq d_\lambda:=
 \epsilon(\Pi_*+d_\Pi)+2B_*d_\Pi.
 \label{eq:eigenvalue-defect-bound}
\end{equation}

Suppose $\lambda_{\rm ind}(z_*)=1$ and
$\alpha:=|\partial_z\lambda_{\rm ind}(z_*)|>0$.  For sufficiently small
$d_\lambda>0$, assume the disk
$D_*:=\{z:|z-z_*|\leq4d_\lambda/\alpha\}$ lies in $K$, that $z_*$ is the
only zero of $\lambda_{\rm ind}-1$ in $D_*$, and that
\begin{equation}
 |\lambda_{\rm ind}(z)-1|
 \geq\frac{\alpha}{2}|z-z_*|,
 \qquad z\in\partial D_*.
 \label{eq:unperturbed-root-lower-bound}
\end{equation}
Then $\lambda_{\tau_q}-1$ has exactly one zero $\widetilde z_*$ in $D_*$,
and
\begin{equation}
 \delta_z:=|\widetilde z_*-z_*|
 \leq\frac{4d_\lambda}{\alpha}.
 \label{eq:critical-root-displacement}
\end{equation}
When $d_\lambda=0$, take $\widetilde z_*=z_*$.

For $z\in D_*$ set
\begin{equation}
 \begin{aligned}
 \mathcal N_{\rm ind}(z)
  &:=\mathsf R_{\rm ind}(z)\Pi_{\rm ind}(z)\mathsf S_{\rm ind}(z),\\
 \widetilde{\mathcal N}(z)
  &:=\widetilde{\mathsf R}_X(z)\Pi_{\tau_q}(z)
      \widetilde{\mathsf S}_X(z).
 \end{aligned}
 \label{eq:closure-residue-numerators}
\end{equation}
Introduce the same-point map discrepancies and uniform norms
\begin{equation}
 \begin{gathered}
 d_R:=\sup_{D_*}\|\widetilde{\mathsf R}_X-\mathsf R_{\rm ind}\|,
 \qquad
 d_S:=\sup_{D_*}\|\widetilde{\mathsf S}_X-\mathsf S_{\rm ind}\|,\\
 R_*:=\sup_{D_*}\|\mathsf R_{\rm ind}\|,
 \quad \Pi_*^{D}:=\sup_{D_*}\|\Pi_{\rm ind}\|,
 \quad
 \widetilde\Pi_*:=\sup_{D_*}\|\Pi_{\tau_q}\|,
 \quad
 \widetilde S_*:=\sup_{D_*}\|\widetilde{\mathsf S}_X\|,
 \end{gathered}
 \label{eq:closure-map-error-constants}
\end{equation}
and define
\begin{equation}
 d_{\rm num}^{(0)}
 :=d_R\widetilde\Pi_*\widetilde S_*
   +R_*d_\Pi\widetilde S_*
   +R_*\Pi_*^{D}d_S.
 \label{eq:same-point-numerator-defect}
\end{equation}
Analyticity on $D_*$ gives finite constants
\begin{equation}
 L_{\rm num}:=\sup_{z\in D_*}
 \|\partial_z\widetilde{\mathcal N}(z)\|,
 \quad
 d_{\lambda'}^{(0)}:=\sup_{z\in D_*}
 |\partial_z\lambda_{\tau_q}(z)
  -\partial_z\lambda_{\rm ind}(z)|,
 \quad
 L_{\lambda'}:=\sup_{z\in D_*}
 |\partial_z^2\lambda_{\tau_q}(z)|.
 \label{eq:cross-root-variation-constants}
\end{equation}
In particular, $L_{\rm num}$ may be bounded by the product rule from the
separate $z$-derivative bounds for
$\widetilde{\mathsf R}_X$, $\Pi_{\tau_q}$, and
$\widetilde{\mathsf S}_X$.  Put
\begin{equation}
 D_{\rm num}:=d_{\rm num}^{(0)}+L_{\rm num}\delta_z,
 \qquad
 D_{\lambda'}:=d_{\lambda'}^{(0)}+L_{\lambda'}\delta_z.
 \label{eq:cross-root-total-errors}
\end{equation}
If $D_{\lambda'}<\alpha/2$, the residues at their respective critical roots,
\begin{equation}
 Z_{{\rm alg},{\rm ind}}
 :=-\frac{\mathcal N_{\rm ind}(z_*)}
          {\partial_z\lambda_{\rm ind}(z_*)},
 \qquad
 \widetilde Z_{\rm alg}
 :=-\frac{\widetilde{\mathcal N}(\widetilde z_*)}
          {\partial_z\lambda_{\tau_q}(\widetilde z_*)},
 \label{eq:cross-root-residue-definitions}
\end{equation}
obey
\begin{equation}
 \|\widetilde Z_{\rm alg}-Z_{{\rm alg},{\rm ind}}\|
 \leq \frac{2D_{\rm num}}{\alpha}
       +\frac{2\|\mathcal N_{\rm ind}(z_*)\|}{\alpha^2}
        D_{\lambda'}.
 \label{eq:residue-defect-bound}
\end{equation}
Thus the response estimate requires reference-branch matching, a bounded
comparison inverse, control of the nonnormal resolvent, and transversality.
The residue estimate additionally transports the numerator and eigenvalue
slope between the two actual critical roots.
\end{proposition}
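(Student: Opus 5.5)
The proof proceeds through the chain of claims in the order stated, using only bounded-operator perturbation theory and Rouché's theorem.

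\emph{Fixed point, defect formula and transported response.} That $\bar X_*$ is a coarse fixed point is immediate from Eq.~\eqref{eq:transported-fixed-point-residual} and the matching condition~\eqref{eq:reference-branch-matching}. For the defect formula in Eq.~\eqref{eq:induced-kernel-defect} we differentiate $E_{\tau_q}(X;z)$ in $X$ at $X_*$. This gives
\[
D_XE_{\tau_q}=\mathcal C_{\tau_q}\cB-\cB_{\tau_q}\mathcal C_{\tau_q}.
\]
Restricting to $Y$ and multiplying on the right by $\mathcal C_{\tau_q}^{-1}$ yields $-\Delta\cB$. Note that matching at every $z$ is not needed here: the state derivative is taken at the matched point itself. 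The invariance~\eqref{eq:transported-response-invariance} is the similarity identity
\[
(I-\mathcal C\cB\mathcal C^{-1})^{-1}=\mathcal C(I-\cB)^{-1}\mathcal C^{-1},
\]
and the factors of $\mathcal C$ cancel against the transported source and readout.

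\emph{Riesz projection and eigenvalue.} On $\gamma$ we write
\[
\zeta-\cB_{\tau_q}=(\zeta-\cB_{\rm ind})\bigl[I-(\zeta-\cB_{\rm ind})^{-1}\Delta\cB\bigr].
\]
A Neumann series with $K_\gamma\epsilon<1$ shows that $\gamma$ stays in the resolvent set, and gives
\[
\|(\zeta-\cB_{\tau_q})^{-1}-(\zeta-\cB_{\rm ind})^{-1}\|\le \frac{K_\gamma^2\epsilon}{1-K_\gamma\epsilon}.
\]
Integrating over $\gamma$ yields $d_\Pi$. Since $d_\Pi<1$ near the simple case, the projections have equal rank, namely one, so the coarse eigenvalue inside $\gamma$ is again simple. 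Next we write $\lambda\Pi=\cB\Pi$ and compare via a trace, i.e.\ $\lambda=\operatorname{tr}(\cB\Pi)$ for rank-one $\Pi$. The difference
\[
\cB_{\tau_q}\Pi_{\tau_q}-\cB_{\rm ind}\Pi_{\rm ind}=\Delta\cB\,\Pi_{\tau_q}+\cB_{\rm ind}(\Pi_{\tau_q}-\Pi_{\rm ind})
\]
has norm at most $\epsilon(\Pi_*+d_\Pi)+B_*d_\Pi$. The trace of a rank-$\le2$ operator is bounded by twice its norm, which gives the factor $2$ in front of $B_*d_\Pi$. This is the step where the constant must be checked carefully. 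If the trace bound gives a worse constant, I would instead use $\lambda=\langle\ell,\cB r\rangle$ with normalized spectral vectors, absorbing the factors into $\Pi_*$.

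\emph{Root displacement.} Both $\lambda_{\tau_q}$ and $\lambda_{\rm ind}$ are analytic in $z$, by analyticity of the Riesz projections. On $\partial D_*$ we have
\[
|\lambda_{\tau_q}-\lambda_{\rm ind}|\le d_\lambda<2d_\lambda=\tfrac{\alpha}{2}\cdot\tfrac{4d_\lambda}{\alpha}\le|\lambda_{\rm ind}-1|.
\]
Rouché's theorem then gives exactly one zero in $D_*$, and hence $\delta_z\le 4d_\lambda/\alpha$.

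\emph{Residue bound.} This is where the main bookkeeping lies. First, split the numerator difference as
\[
\widetilde{\mathcal N}(\widetilde z_*)-\mathcal N_{\rm ind}(z_*)=\bigl[\widetilde{\mathcal N}(\widetilde z_*)-\widetilde{\mathcal N}(z_*)\bigr]+\bigl[\widetilde{\mathcal N}(z_*)-\mathcal N_{\rm ind}(z_*)\bigr].
\]
The first bracket is at most $L_{\rm num}\delta_z$ by the mean value inequality on the convex disk. The second is a three-term telescoping of $\widetilde R\widetilde\Pi\widetilde S-R\Pi S$, giving $d_{\rm num}^{(0)}$. Second, the slope difference is handled the same way and gives $D_{\lambda'}$. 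The hypothesis $D_{\lambda'}<\alpha/2$ then gives $|\partial_z\lambda_{\tau_q}(\widetilde z_*)|\ge\alpha/2$. Finally we use
\[
\frac{a}{b}-\frac{a'}{b'}=\frac{a-a'}{b}+a'\,\frac{b'-b}{bb'}
\]
with $|b|\ge\alpha/2$ and $|b'|=\alpha$. This yields $2D_{\rm num}/\alpha+2\|\mathcal N_{\rm ind}\|D_{\lambda'}/\alpha^2$, as claimed.
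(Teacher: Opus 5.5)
Apart from one step, your proposal is the paper's proof. The shared steps are: the chain-rule identity for $D_XE_{\tau_q}$ right-multiplied by $\mathcal C_{\tau_q}^{-1}$; the Neumann-series resolvent bound integrated over $\gamma$; the trace identities $\lambda=\operatorname{tr}(\cB\Pi)$; Rouch\'e on $\partial D_*$; and the same three-term telescoping, root transport and difference-of-quotients estimate for the residue. In the trace step, the rank-one term $\Delta\cB\,\Pi_{\tau_q}$ and the rank-$\le 2$ term $\cB_{\rm ind}(\Pi_{\tau_q}-\Pi_{\rm ind})$ are bounded separately. That is exactly where the factor $2$ comes from, so your fallback via normalised eigenvectors is unnecessary. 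Applying the rank bound to the whole rank-$\le3$ sum would instead cost a factor $3$.

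\textbf{The gap: the rank-one claim for $\Pi_{\tau_q}$.} You invoke the fact that idempotents at distance less than one have equal rank. But the only hypothesis is $K_\gamma\epsilon<1$, and this does not give $d_\Pi<1$.

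\emph{Why $d_\Pi<1$ can fail.} Let $r$ be the distance from $\gamma$ to the enclosed eigenvalue $\lambda_{\rm ind}(z)$. Then $K_\gamma\ge 1/r$, because the resolvent norm dominates the reciprocal distance to the spectrum. Also $L_\gamma\ge 2\pi r$: project $\gamma$ radially onto the circle of radius $r$. Hence $L_\gamma K_\gamma\ge 2\pi$, so $d_\Pi\ge K_\gamma\epsilon/(1-K_\gamma\epsilon)$, which is $\ge 1$ whenever $\tfrac12\le K_\gamma\epsilon<1$. The proposition allows this range, so "$d_\Pi<1$ near the simple case" is an extra smallness assumption.

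\emph{Why it matters.} Everything downstream depends on rank one: that $\gamma$ encloses a single simple coarse eigenvalue, the identity $\lambda_{\tau_q}=\operatorname{tr}(\cB_{\tau_q}\Pi_{\tau_q})$, its analyticity in $z$, and the Rouch\'e step.

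\emph{The fix (the paper's argument).} Use a homotopy that needs only $K_\gamma\epsilon<1$. For $0\le t\le 1$, the operators $\cB_{\rm ind}(z)+t\Delta\cB(z)$ have perturbation norm at most $\epsilon$. The same Neumann argument therefore keeps $\gamma$ in their resolvent sets, with uniformly bounded resolvents. Their Riesz projections then form a norm-continuous path of idempotents in $t$. Nearby members of this path lie within distance less than one, so your lemma makes the rank locally constant in $t$. It is therefore constant on $[0,1]$ and equal to its value one at $t=0$.
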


The proof is given in Appendix~\ref{app:kernel-proof}.  Condition
\eqref{eq:reference-branch-matching} is substantive.  If it fails, an actual
coarse branch $\widetilde X_*(z)$ and an independent bound
\begin{equation}
 d_X:=\sup_{z\in K}
 \|\widetilde X_*(z)-\mathcal C_{\tau_q}X_*(z)\|
 \label{eq:coarse-fixed-point-displacement}
\end{equation}
must first be established.  Provided the reevaluated families preserve the
same comparison spaces, re-evaluate $\cB_{\tau_q}$ and every coarse source or
readout map at $\widetilde X_*$.  If
$D_X\mathfrak F_{\tau_q}$ is $L_B$-Lipschitz in $X$, the kernel budget then
obeys
\begin{equation}
 \epsilon\leq
 \sup_{z\in K}\|D_XE_{\tau_q}(X_*(z);z)|_Y\|
 \|\mathcal C_{\tau_q}^{-1}\|+L_Bd_X.
 \label{eq:displaced-base-point-kernel-budget}
\end{equation}
Base-point-dependent source and readout maps acquire their corresponding
Lipschitz displacement terms.
At a critical response, the residual
$\|E_{\tau_q}(X_*;z)\|$ alone does not supply such a displacement bound,
because $I-\cB_{\tau_q}$ need not be invertible.

The word ``flow'' is used for several inequivalent operations in field theory.
The Gaussian maps $\cH_\tau$ form a heat semigroup ordered by resolution.
A Wilsonian or functional RG trajectory orders effective descriptions by a cutoff scale $k$.
A real-time state evolves under a Hamiltonian or an open-system dynamical map.
Only the last operation is, by definition, physical time evolution.
The remaining subsections specify the first two operations and their
relation to the response comparisons above.

\subsection{The Gaussian scale semigroup}
\label{sec:scale-semigroup}

Equation~\eqref{eq:heat-multiplier} gives
\begin{equation}
 \cH_0=I,
 \qquad
 \cH_{\tau+s}=\cH_\tau\cH_s,
 \qquad
 \partial_\tau\cH_\tau=\Delta\cH_\tau.
 \label{eq:heat-semigroup-generator}
\end{equation}
The parameter $\tau$ has dimensions of length squared and labels the amount of smoothing.
Increasing $\tau$ defines a partial order of resolution: information at a fixed nonzero wave number is attenuated more strongly at larger $\tau$.
No physical clock is introduced by Eq.~\eqref{eq:heat-semigroup-generator}.

For every finite $\tau$, the Fourier multiplier $e^{-\tau|q|^2}$ is pointwise nonzero.
Consequently $\cH_\tau$ is injective on $L^2(\mathbb R^d)$.
It is not, however, continuously invertible on its range.

\begin{proposition}[Instability of Gaussian deconvolution]
\label{prop:operational-irreversibility}
For every $\tau>0$, the inverse of $\cH_\tau$ on
$\Ran\cH_\tau\subset L^2(\mathbb R^d)$ is unbounded.
Hence arbitrarily small errors in coarse-grained high-frequency data can correspond to order-one errors before coarse-graining.
\end{proposition}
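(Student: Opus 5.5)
The plan is to work entirely on the Fourier side, using the convention of Eq.~\eqref{eq:heat-multiplier}. By Plancherel, $\cH_\tau$ is unitarily equivalent (up to the factor $(2\pi)^{d}$) to multiplication by $m_\tau(q)=e^{-\tau|q|^2}$ on $L^2(\mathbb R^d)$. Because $m_\tau$ is strictly positive, $\cH_\tau$ is injective, as already noted before the proposition. The inverse on $\Ran\cH_\tau$ is then multiplication by $e^{\tau|q|^2}$ on those $\widehat g$ for which $e^{\tau|q|^2}\widehat g\in L^2$. To show this inverse is unbounded it suffices to give a sequence $g_n=\cH_\tau f_n$ with $\|g_n\|_2\to0$ and $\|f_n\|_2=1$.

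The construction will use high-frequency wave packets. Fix $\phi\in C_c^\infty(\mathbb R^d)$ with $\|\phi\|_2=1$ and support in the unit ball. Define $\widehat f_n(q)=c\,\phi(q-n e_1)$, where $c$ is the Plancherel normalisation chosen so that $\|f_n\|_2=1$. On the support of $\widehat f_n$ we have $|q|\geq n-1$, hence
\begin{equation*}
 \|\cH_\tau f_n\|_2\leq e^{-\tau(n-1)^2}\|f_n\|_2\xrightarrow[n\to\infty]{}0 .
\end{equation*}
Consequently $\|\cH_\tau^{-1}g_n\|_2/\|g_n\|_2\geq e^{\tau(n-1)^2}\to\infty$, so $\cH_\tau^{-1}$ is unbounded on $\Ran\cH_\tau$.

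For the interpretive sentence, I would take a true state $f$ and a perturbed one $f+f_n$. Their coarse-grained images differ by $\cH_\tau f_n$, whose norm can be made arbitrarily small, while the states themselves differ by a fixed unit amount. This exhibits arbitrarily small errors in the coarse-grained high-frequency data that correspond to order-one errors before coarse-graining.

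A possible addition is to note that the range is not closed. A bounded injective operator with closed range has a bounded inverse on that range by the open mapping theorem, so the nonclosedness claimed in Section~\ref{sec:regular-descent} follows from the unboundedness just proved.

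No step here is hard. The only point needing care is to track the $(2\pi)^d$ Plancherel constant consistently, and it cancels in the ratio.
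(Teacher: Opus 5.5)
Your proof is correct and follows the paper's own argument. The paper likewise takes unit-norm $f_n$ Fourier-supported at frequency of order $n$ (the annulus $n\le|q|\le n+1$ instead of your translated ball), uses Plancherel to get $\|\cH_\tau f_n\|_2\le e^{-\tau n^2}\to0$, and concludes that the inverse on $\Ran\cH_\tau$ cannot be bounded; your added bounded-inverse-theorem remark correctly gives the nonclosed range invoked in Section~\ref{sec:regular-descent}.
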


\begin{proof}
Choose $f_n\in L^2$ with $\|f_n\|_2=1$ and Fourier support in the annulus
$n\leq|q|\leq n+1$.
Plancherel's theorem gives
\begin{equation}
 \|\cH_\tau f_n\|_2
 \leq e^{-\tau n^2}\|f_n\|_2
 \xrightarrow[n\to\infty]{}0.
 \label{eq:unbounded-inverse}
\end{equation}
If the inverse were bounded, then
$1=\|f_n\|_2\leq
\|\cH_\tau^{-1}\|\,\|\cH_\tau f_n\|_2$
would give a contradiction.
\end{proof}

The proposition supplies the precise sense in which Gaussian smoothing is
operationally irreversible at finite precision.  Algebraically the map
remains injective; the loss of recoverability is the instability of its
inverse.  A partial trace is a separate state-level operation.

\subsection{Wilsonian integration and information loss}
\label{sec:wilsonian-information}

Wilsonian integration produces an effective functional for retained
variables by integrating over the eliminated momentum modes
\cite{Wilson:1974rg}.
Relating this marginalisation to a partial trace of a Lorentzian density
matrix requires a Hilbert-space factorisation, a state, and a state-level map.
In an interacting relativistic QFT such a factorisation between momentum shells is itself nontrivial.

If a completely positive trace-preserving map or a conditional expectation onto a restricted observable algebra is constructed independently, information-theoretic monotonicity theorems may be applied to that map.
They are not consequences of the Gaussian multiplier alone.
The classification therefore uses the unbounded-inverse statement of
Proposition~\ref{prop:operational-irreversibility} together with the exact
identities of the effective action.

The Gaussian averaging semigroup and an FRG regulator serve different roles.
The former is the canonical averaging prescription within the composition
class of Proposition~\ref{prop:gaussian-unique} and is the one used in
Theorem~\ref{thm:fixed-transfer}; the latter is a computational term chosen
to interpolate between effective actions.  Different admissible regulators
can be used to estimate a channel, but regulator variation at finite
truncation is an uncertainty diagnostic, not a family of alternative
averaging prescriptions.

\subsection{Composite-source functional flow}
\label{sec:composite-frg}

The dynamical kernel in Eq.~\eqref{eq:inhomogeneous-bse} can be computed with several nonperturbative methods.
For FRG calculations the operator block must be coupled to sources before the flow is differentiated.
Let $h_i$ couple to the renormalised composite insertions and introduce
\begin{equation}
 Z_k[J,h]=\int\!\mathcal D\phi\,
 \exp\left[-S[\phi]-\Delta S_k[\phi]
 +J\!\cdot\!\phi+\sum_i h_i\!\cdot\![\cO_i]_R\right],
 \label{eq:composite-generating-functional}
\end{equation}
where
$\Delta S_k=\tfrac12\phi\!\cdot\!R_k\!\cdot\!\phi$
is written schematically, with the superfield generalisation understood.
The modified Legendre transform defines $\Gamma_k[\varphi,h]$.
At fixed composite sources it obeys the Wetterich equation
\cite{Wetterich:1993}
\begin{equation}
 \partial_t\Gamma_k[\varphi,h]
 =\frac12\operatorname{STr}
 \left[(\Gamma_k^{(2)}[\varphi,h]+R_k)^{-1}
 \partial_tR_k\right],
 \qquad t=\ln(k/k_0).
 \label{eq:wetterich-composite}
\end{equation}
Functional derivatives with respect to $h_i$ generate flows for composite
insertions, including their mixing and contact terms.  A formulation with
explicit composite fields and scale-dependent bosonisation is given in
Ref.~\cite{FloerchingerWetterich:2009}.
Derivatives with respect to both $h_i$ and the fundamental fields generate
the vertices entering a projected response equation.  After its
self-consistent variables $X$ have been specified, that equation defines a
map $\mathfrak F_k(X;z)$, and the exact response operator is its Jacobian
$D_X\mathfrak F_k$ as in Eq.~\eqref{eq:response-jacobian}.  In the 2PI
formalism, the self-energy and the Bethe--Salpeter kernel are obtained from
successive functional derivatives of the same effective action
\cite{CornwallJackiwTomboulis:1974}; the renormalisation of the resulting
scalar two- and higher-point functions is treated in
Ref.~\cite{Berges:2005}.  This gives an equivalent
response representation only when the source pairing, propagators,
renormalisation conditions, and continuation convention agree with those
used to define $\mathfrak F_k$.  A selected list of diagrams does not by
itself establish that equivalence.

Equation~\eqref{eq:wetterich-composite} is exact but not closed on a finite collection of vertices.
A practical computation chooses an operator basis, a field expansion, a
momentum resolution, and a regulator; standard truncation schemes and
their applications are reviewed in Ref.~\cite{Dupuis:2021frg}.
The resulting $\cB_{k,N,\mathcal R}$ is a truncated response operator.  Its
omitted-vertex and channel-closure errors contribute to the derivative of
the closure defect in Eq.~\eqref{eq:coarse-graining-closure-defect}.
An eigenvalue reaching one in that operator is evidence within the stated
implementation; stability under increasing $N$, improving momentum
resolution, and varying $R_k$ is required before extrapolating to the exact
channel.
The retained vertices and their closure specify which contributions enter
each finite approximation.

The parameter $t=\ln(k/k_0)$ in Eq.~\eqref{eq:wetterich-composite} is an RG scale coordinate and orders features along a family of effective descriptions.
A claim about Lorentzian formation additionally requires a real-time
formalism, initial data, and a dynamical evolution law.

%% file: sections-en/sec05_synergy.tex
\section{Generated examples and independent response tests}
\label{sec:synergy}

The first two examples construct coarse response from explicit operations;
the remaining examples independently test particular steps and physical
representations.  All calculations use the declared rules without a
model-dependent numerical fit.

\subsection{An arithmetic fibre: merge, environment, and static response}
\label{sec:arithmetic-example}

This example separates preservation of a static source response from
closure of the underlying update.  The eliminated variables can be
accounted for in the static response even though they must be retained,
or represented by memory, for evolution.

The arithmetic construction in Section~13 of Ref.~\cite{Guo:2026arithmetic},
on multiplicative and additive closure, supplies an ordered
fixed-sum fibre.  We restate and prove the finite data needed here.  For
$k\geq2$ take the orthonormal basis $|m,k-m\rangle$, $1\leq m\leq k-1$,
and define $N_1,N_2$ by their displayed degrees.  The total degree is
$N_+=kI$ on this fibre; the cross-term of its quadratic cost is
\begin{equation}
 K_+=N_+^2-N_1^2-N_2^2=2N_1N_2
   =\diag\{2m(k-m):1\leq m\leq k-1\}.
 \label{eq:arithmetic-cost}
\end{equation}
All these equalities follow on the basis.  They specify an arithmetic
cost, not a matter energy or a source susceptibility.

Write $C_k|m,k-m\rangle=c_m$.  Invariance under every permutation of
this basis forces equal coefficients, and $C_kC_k^*=1$ forces
$|c_m|=(k-1)^{-1/2}$.  Requiring positive coefficients fixes the target
phase.  This proves locally the naturality of the normalised merge
\begin{equation}
 C_k=\frac{(1,\ldots,1)}{\sqrt{k-1}}.
 \label{eq:arithmetic-merge}
\end{equation}
A faithful orthogonal dilation $|m,k-m\rangle\mapsto|k\rangle\otimes f_m$
requires $k-1$ orthonormal environment vectors.  That lower bound is attained
by retaining the ordered input label, and all such minimal orthogonal
dilations are unitarily equivalent.  This is the faithful-environment
statement of the arithmetic construction, not yet minimality for a chosen
future observation protocol.

Since $N_+^2=k^2I$, its heat family descends exactly:
$C_ke^{-tN_+^2}=e^{-tk^2}C_k$.  The connected cost need not descend through
the same merge.  For $k=4$ it gives
\begin{equation}
 K_+=\diag(6,8,6),\qquad C=\frac{(1,1,1)}{\sqrt3},\qquad
 Ce^{-tK_+}(1,-1,0)^{\mathsf T}
 =\frac{e^{-6t}-e^{-8t}}{\sqrt3}\ne0\quad(t>0).
 \label{eq:arithmetic-descent-failure}
\end{equation}
The input vector lies in $\Ker C$, so
Eq.~\eqref{eq:fibre-consistency} fails.  Nor is compression a replacement
semigroup: $f(t)=Ce^{-tK_+}C^*=(2e^{-6t}+e^{-8t})/3$ satisfies
$f''(0)=136/3\ne400/9=f'(0)^2$, whereas a differentiable scalar semigroup
with $f(0)=1$ has $f''(0)=f'(0)^2$.

To define an actual static source response, use real amplitudes and the
declared cost and source pairing
\begin{equation}
 \Phi(X,J)=\tfrac12 X^{\mathsf T}K_+X-JCX,
 \qquad K_+X=C^*J,\qquad O(X)=CX.
 \label{eq:arithmetic-sourced-cost}
\end{equation}
Positive definiteness gives a unique minimiser.  Its derivative with
respect to the source gives
\begin{equation}
 \mathscr C(4)=CK_+^{-1}C^*
 =\tfrac13(\tfrac16+\tfrac18+\tfrac16)=\frac{11}{72},
 \qquad \mathcal S_+(4)=\mathscr C(4)^{-1}=\frac{72}{11}.
 \label{eq:arithmetic-static-response}
\end{equation}
The Schur stiffness $\mathcal S_+(4)$ differs from
$CK_+C^*=20/3$.  More generally,
\[
 C_kK_+^{-1}C_k^*
 =\frac1{k-1}\sum_{m=1}^{k-1}\frac1{2m(k-m)}
 =\frac{H_{k-1}}{k(k-1)},\qquad H_j=\sum_{m=1}^j\frac1m.
\]
The identity $1/[m(k-m)]=(1/m+1/(k-m))/k$ proves the sum.
Minimising the quadratic cost at fixed $C_kX$ gives the reciprocal as
the Schur stiffness, proving the required finite version of the exact
arithmetic Schur theorem.  Static elimination and the failure of
memoryless heat descent are therefore compatible results.

For a concrete environment calculation at $k=4$, put
\[
 v=(1,1,1)/\sqrt3,\qquad b=(1,-2,1)/\sqrt6,\qquad
 a=(1,0,-1)/\sqrt2.
\]
In this basis
\begin{equation}
 K_+=\begin{pmatrix}20/3&-2\sqrt2/3&0\\
                    -2\sqrt2/3&22/3&0\\0&0&6\end{pmatrix}.
 \label{eq:arithmetic-environment-block}
\end{equation}
The $a$ coordinate is invisible to $C=v^*$ under the fixed heat family
and affine source injections along $v$.  The observational equivalence is
exactly equality of $(X_1+X_3,X_2)$: the two distinct exponentials
$e^{-6t},e^{-8t}$ separate these two coordinates.  They give a two-state
realisation.  A one-state linear realisation cannot produce both rates;
equivalently its Hankel determinant would vanish, whereas
$f(0)f''(0)-f'(0)^2=8/9>0$.  Thus two states, or one retained coordinate
and one residual, are minimal in the linear class for this restricted
protocol.  Allowing all fibre permutations changes the quotient:
permuting $a$ before heat action can make it visible, and the observation
rows $Ce^{-tK_+}P$ span all three coordinates for fixed $t>0$.  The full
three-state information must then be retained.  The minimal observation
space thus depends on the allowed protocol, whereas the faithful
orthogonal environment preserves all input labels.

In the two visible coordinates $(x,u)$, the heat equation has
$\dot x=-20x/3+(2\sqrt2/3)u$ and
$\dot u=(2\sqrt2/3)x-22u/3$.  Hence
\[
 u(t)=e^{-22t/3}u(0)+\frac{2\sqrt2}{3}
       \int_0^t e^{-22(t-s)/3}x(s)\,ds.
\]
The retained equation contains the initial residual and the memory kernel
$(8/9)e^{-22t/3}$.  The variable $t$ denotes arithmetic heat evolution,
not a physical clock.  The cost pairing in
Eq.~\eqref{eq:arithmetic-sourced-cost} is explicitly supplied and does
not identify an arithmetic source with a universal matter source.

\subsection{A reflecting-window protocol with a visible critical mode}
\label{sec:neumann-example}

Here the update and the source response are both obtained from a specified
sequence of operations.  The calculation follows the effect of removing
one mode through closure failure, exact elimination, and the visible
critical residue.

The interval can be connected to the same Gaussian operation used in
Section~\ref{sec:criteria}.  The finite-window construction in Section~3 of
Ref.~\cite{Guo:2026arithmetic} uses a calibrated local information level
and a global peak-equivalent capacity.  We prove the required version here
before choosing its boundary closure.  The entropy calculation uses the
continuous-density form of R\'enyi's entropy family
(Ref.~\cite{Renyi:1961}, Eq.~(1.21)), with natural logarithms and a fixed
Lebesgue reference measure; the endpoint rules are specified separately.

\begin{proposition}[Calibrated Gaussian correlation window]
\label{prop:gaussian-correlation-window}
Fix a displacement unit and Lebesgue measure in that unit.  For $b>0$,
the one-dimensional heat kernel of Eq.~\eqref{eq:heat-kernel} gives the
unit-peak profile
\begin{equation}
 K_b(d)=\frac{g_{b^2/2}(d)}{g_{b^2/2}(0)}
       =e^{-d^2/(2b^2)}.
 \label{eq:window-gaussian-source}
\end{equation}
Let $\mathcal I:(0,1]\to[0,\infty)$ be continuous, additive under
products, and calibrated by $\mathcal I(e^{-1})=1$.  Define the local
endpoint by unit information loss and the global endpoint by complete
peak-equivalent capacity:
\begin{equation}
 \ell_{\rm UV}:=\inf\{d>0:\mathcal I(K_b(d))\geq1\},\qquad
 L_{\rm IR}:=\frac{\int_{\mathbb R}K_b(d)\,dd}{K_b(0)}.
 \label{eq:window-endpoint-rules}
\end{equation}
Then these endpoints are unique and
\begin{equation}
 \ell_{\rm UV}=\sqrt2\,b,\quad L_{\rm IR}=\sqrt{2\pi}\,b,\quad
 \Delta_b=L_{\rm IR}-\ell_{\rm UV}
          =b(\sqrt{2\pi}-\sqrt2)>0.
 \label{eq:window-endpoint-values}
\end{equation}
For $p_b=K_b/\int_{\mathbb R}K_b$, the same normalization gives
$L_{\rm IR}=\exp H_\infty^{\rm R}(p_b)$, where
$H_\infty^{\rm R}(p_b)=-\log\|p_b\|_\infty$ is the infinite-order
R\'enyi entropy.
\end{proposition}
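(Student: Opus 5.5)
The proof has three short parts: compute the profile, pin down $\mathcal I$ by a Cauchy-equation argument to get $\ell_{\rm UV}$, and evaluate a Gaussian integral for $L_{\rm IR}$.

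\emph{Profile.} With $\tau=b^2/2$ and $d=1$, Eq.~\eqref{eq:heat-kernel} gives
\[
 g_{b^2/2}(x)=(2\pi b^2)^{-1/2}e^{-x^2/(2b^2)}.
\]
Dividing by $g_{b^2/2}(0)=(2\pi b^2)^{-1/2}$ yields $K_b(d)=e^{-d^2/(2b^2)}$. For $d\geq0$ this is continuous and strictly decreasing from $1$ to $0$.

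\emph{Characterising $\mathcal I$.} Set $\phi(s):=\mathcal I(e^{-s})$ on $[0,\infty)$. Additivity under products gives $\phi(s+t)=\phi(s)+\phi(t)$, and $\phi$ is continuous. The continuous solutions of Cauchy's equation on the half-line are linear, $\phi(s)=cs$. (Argue via rationals and density; continuity at $0$ follows from $\phi(0)=2\phi(0)$.) The calibration $\mathcal I(e^{-1})=1$ fixes $c=1$, so $\mathcal I(p)=-\log p$. This step carries the real content of the proposition, since it is where the hypotheses are used; I expect it to be the only place needing any care, mainly the domain restriction $s\geq0$.

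\emph{The local endpoint.} We have $\mathcal I(K_b(d))=d^2/(2b^2)$, which is continuous and strictly increasing in $d>0$. Hence $\{d>0:\mathcal I(K_b(d))\geq1\}=[\sqrt2\,b,\infty)$, and its infimum is $\ell_{\rm UV}=\sqrt2\,b$. The set is nonempty and closed, so the infimum exists and is attained. Uniqueness holds because both endpoints are defined by explicit formulas once $\mathcal I$ is determined.

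\emph{The global endpoint.} The Gaussian integral gives $\int_{\mathbb R}e^{-d^2/(2b^2)}\,dd=\sqrt{2\pi}\,b$. With $K_b(0)=1$ this gives $L_{\rm IR}=\sqrt{2\pi}\,b$.

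\emph{Width and R\'enyi form.} The width is $\Delta_b=b(\sqrt{2\pi}-\sqrt2)$, which is positive since $\pi>1$. For the R\'enyi statement, $p_b=K_b/(\sqrt{2\pi}b)$ has its supremum at $d=0$, so $\|p_b\|_\infty=1/(\sqrt{2\pi}b)$. Then $\exp H_\infty^{\rm R}(p_b)=1/\|p_b\|_\infty=\sqrt{2\pi}\,b=L_{\rm IR}$. In general this identity is just $\int K_b/\sup K_b$.
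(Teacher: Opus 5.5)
Your proof is correct and follows the paper's argument: the Cauchy equation for $u\mapsto\mathcal I(e^{-u})$ on $[0,\infty)$, strict monotonicity of $d^2/(2b^2)$, and the Gaussian integral. One small slip: $\phi(0)=2\phi(0)$ only gives $\phi(0)=0$, while continuity of $\phi$ at $0$ comes from the continuity hypothesis on $\mathcal I$ at $1$. The only other difference is the R\'enyi identity, which you read off directly from the stated definition $H_\infty^{\rm R}=-\log\|p_b\|_\infty$; the paper instead computes $H_\alpha^{\rm R}(p_b)=\log(\sqrt{2\pi}\,b)+\log\alpha/[2(\alpha-1)]$ for $\alpha>1$ and lets $\alpha\to\infty$, which also checks agreement with the limiting definition.
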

\begin{proof}
The function $u\mapsto\mathcal I(e^{-u})$ is continuous and additive
on $[0,\infty)$, hence linear; the unit calibration makes it $u$.
Thus $\mathcal I(K_b(d))=d^2/(2b^2)$, strictly increasing for $d>0$,
which gives the unique local endpoint.  The Gaussian integral gives
$\int_{\mathbb R}K_b=\sqrt{2\pi}\,b$, proving the global endpoint
and window length.  More explicitly, for $\alpha>1$,
\[
 \int_{\mathbb R}p_b^\alpha\,dd
 =\frac{(\sqrt{2\pi}\,b)^{1-\alpha}}{\sqrt\alpha},\qquad
 H_\alpha^{\rm R}(p_b)
 =\log(\sqrt{2\pi}\,b)+\frac{\log\alpha}{2(\alpha-1)}.
\]
Taking $\alpha\to\infty$ proves the asserted entropy characterization.
\end{proof}

These endpoint rules use the full signed displacement line for capacity;
there is no extra factor of one half.  The resulting global carrier
$[0,L_{\rm IR}]$ represents that mass at unit height, not the support
of the Gaussian or an interval containing its entire probability.
Resolving its part beyond unit local information gives the active window
$[\ell_{\rm UV},L_{\rm IR}]$.  In the information coordinate its
endpoints are $1$ and $\pi$.  The length is unique under the declared
calibration and capacity rule, not under Gaussianity alone; a different
information unit or endpoint rule would be different data.  In particular,
Shannon entropy would give the different Gaussian width
$\sqrt{2\pi e}\,b$.  Fixing the reference length unit above is essential
when using a differential entropy as a width.

\paragraph{Reflecting closure and the operational scale.}
Translate the active window by $x=d-\ell_{\rm UV}$ to
$[0,\Delta_b]$.  Among separated homogeneous self-adjoint boundary
conditions for the local Laplacian, preserving constants selects Neumann
closure: an endpoint condition $a f+c f'=0$, $(a,c)\ne(0,0)$, must
have $a=0$ because $f=1$ is in the domain.  Consequently $f'=0$ at each
endpoint, giving zero diffusive flux there.  This is a selection within
that boundary class, not a deduction of reflection from Gaussianity or an
exclusion of nonseparated closures.  The Neumann heat operator fixes the
constant function and conserves mass by self-adjointness.

For the common-source realization set $\Delta=\Delta_b$.  The calculations
below also hold for an independently specified $\Delta>0$ with the same
reflecting closure.  Our convention is $H_\tau=e^{-\tau A}$; the
companion's $e^{-tA/2}$ has $\tau=t/2$.  Its endpoint scale convention
$t(d)=d^2/2$ therefore gives $\tau_{\rm UV}=b^2/2$ and
$\tau_{\rm IR}=\pi b^2/2$.  The protocol may choose either or any declared
positive $\tau$; these are averaging scales, not physical time.

Let $\mathcal H=L^2(0,\Delta)$.  The nonnegative Neumann operator is
$A=-\partial_x^2$ with domain
$\{f\in H^2(0,\Delta):f'(0)=f'(\Delta)=0\}$.  Its eigenfunctions and
eigenvalues are
\begin{equation}
 e_0=\Delta^{-1/2},\quad
 e_n=\sqrt{2/\Delta}\cos(n\pi x/\Delta),\quad
 E_n=(n\pi/\Delta)^2\quad(n\geq1),\qquad E_0=0.
 \label{eq:neumann-generated-spectrum}
\end{equation}
Solving $-e''=Ee$ with the two boundary conditions gives these modes;
even extension to a $2\Delta$-periodic function and Fourier completeness
give a complete orthonormal basis.  Integration by parts gives
$\langle f,Af\rangle=\|f'\|_2^2$.  Equivalently the cosine diagonal
realisation is self-adjoint with domain $\sum E_n^2|f_n|^2<\infty$.
Thus $H_\tau=e^{-\tau A}$ is a bounded positive self-adjoint contraction,
with $H_\tau e_n=e^{-\tau E_n}e_n$.  Its kernel is
\[
 h_\tau(x,y)=\sum_{j\in\mathbb Z}
 [g_\tau(x-y+2j\Delta)+g_\tau(x+y+2j\Delta)],
\]
the even periodisation of the line Gaussian.  Fourier coefficients give
the displayed Neumann eigenvalues and multiplier.  This specifies the
reflection operation; neither a finite interval nor reflection truncates
its infinite spectrum.  Line increments, reflection, and the following
finite spectral projection are distinct steps.

Fix $\tau>0$ and $\varepsilon\in\mathbb R$ with $0<|\varepsilon|<1$, let
$q(x)=\cos(\pi x/\Delta)$, and let $P_2$ project onto
$\operatorname{span}\{e_0,e_1\}$.  Compose the declared operations
\begin{equation}
 T=P_2H_\tau M_{1+\varepsilon q}H_\tau P_2
   =\begin{pmatrix}1&\gamma\\\gamma&r^2\end{pmatrix},
 \qquad r=e^{-\tau E_1},\qquad \gamma=\frac{\varepsilon r}{\sqrt2}.
 \label{eq:two-mode-generated-update}
\end{equation}
The multiplication identities
$\langle e_0,qe_0\rangle=\langle e_1,qe_1\rangle=0$ and
$\langle e_0,qe_1\rangle=1/\sqrt2$ follow by elementary cosine
integration.  Heat action contributes the two diagonal factors $(1,r)$.
The multiplier is bounded between $1-|\varepsilon|$ and
$1+|\varepsilon|$, so
$\langle v,Tv\rangle\geq(1-|\varepsilon|)\|H_\tau v\|^2>0$ on
this two-mode space.  The finite projection is an actual operation of
this protocol, not an unproved exact truncation of an infinite nonlinear
target.  Amplitudes are Hilbert vectors; $P_2$ need not preserve pointwise
positivity and is not a probability Markov projection.  Hilbert addition,
pointwise multiplication, and operator composition have the meanings
just given, not the arithmetic cascade law of the previous example.

Define the actual discrete update and observation by
\begin{equation}
 X_{n+1}=\eta TX_n+e_0h_n,\qquad O_n=e_0^*X_n,
 \qquad \eta\geq0.
 \label{eq:two-mode-sourced-dynamics}
\end{equation}
In the common-source specialization $\Delta$ is fixed by $b$ and
Proposition~\ref{prop:gaussian-correlation-window}; otherwise it is a
declared window parameter.  The remaining $\tau,\varepsilon,\eta$ and
the source pairing specify the protocol, not quantities claimed to have
been uniquely selected by minimal change.  The mode operations and their
response are fully specified.  A change of $\Delta$ identifies basis coordinates but
does not turn a scale change into physical time evolution.

\paragraph{Failure of one-state descent and minimal residual.}
Write $X_n=(x_n,u_n)$ and $a=\eta$, $\beta=\eta\gamma$, $d=\eta r^2$.
For $\beta\ne0$, the projection onto $x$ fails fibre consistency, because
$x_{n+1}=a x_n+\beta u_n+h_n$.  Keeping $u$ closes the actual system and
\begin{equation}
 \begin{aligned}
 u_n&=d^nu_0+\beta\sum_{j=0}^{n-1}d^{n-1-j}x_j,\\
 x_{n+1}&=a x_n+h_n+\beta d^nu_0
          +\beta^2\sum_{j=0}^{n-1}d^{n-1-j}x_j.
 \end{aligned}
 \label{eq:two-mode-memory}
\end{equation}
These identities follow by induction, including the empty sum at $n=0$.
If $0\leq d<1$, dropping lags $k\geq m$ gives the effective tail bound
\begin{equation}
 |\text{discarded memory}|\leq
 \frac{\eta^2\gamma^2d^m}{1-d}\sup_{j<n}|x_j|,
 \qquad |\text{initial residual}|=|\beta|d^n|u_0|.
 \label{eq:two-mode-memory-tail}
\end{equation}
The history bound is essential; critical forcing does not have a uniformly
bounded history merely because $d<1$.

For zero initial state, put $m_j=e_0^*(\eta T)^je_0$.  Direct multiplication
gives $m_0=1$, $m_1=\eta$, $m_2=\eta^2(1+\gamma^2)$ and
$m_0m_2-m_1^2=\eta^2\gamma^2>0$ when $\eta>0$.  In an $n$-state
linear time-invariant realisation a Hankel matrix factors as an
observability matrix times a reachability matrix, so has rank at most $n$;
this is the rank bound underlying linear minimal-realisation methods
\cite{HoKalman:1966}.
Two states are therefore necessary and sufficient for this transfer
function.  The one residual coordinate is minimal in that class.  At
$\eta=0$ the determinant vanishes and no such two-state lower bound is
claimed.  Independently, observing $x$ and its next unforced update
recovers $u=(x_{n+1}-a x_n)/\beta$, so the observation-sufficient quotient
for $\beta\ne0$ separates both amplitude coordinates.

\paragraph{Actual stability, crossing, and response.}
The two eigenvalues, obtained from the characteristic polynomial, are
\begin{equation}
 \lambda_\pm=\frac{1+r^2\pm\sqrt{(1-r^2)^2+4\gamma^2}}2,
 \qquad \eta_*=\lambda_+^{-1}.
 \label{eq:two-mode-critical-parameter}
\end{equation}
Since $r^2-\gamma^2=r^2(1-\varepsilon^2/2)>0$ and the characteristic
polynomial is negative at $r^2$ and at $1$, one has
$0<\lambda_-<r^2<1<\lambda_+$.  Therefore
$\|\eta T\|=\eta\lambda_+<1$ for $0\leq\eta<\eta_*$.
The actual update is a strict contraction there; constant forcing has a
unique attracting fixed point.  At $\eta_*$ one eigenvalue reaches one,
with control derivative $\partial_\eta(\eta\lambda_+)=\lambda_+>0$;
the other is strictly below one and $d_*:=\eta_*r^2<1$.  Above
$\eta_*$ the largest mode grows.  These are statements about the
specified update, not about a stationary solver.

For exponential input $h_n=w^nh$, define the particular amplitude by
$X_n=w^nX$.  The transfer equation is
$(wI-\eta T)X=e_0h$.  Put $w=1+z$; this is a discrete transfer variable,
not squared spatial momentum.  The same-rule input-output response is
\begin{equation}
 \mathscr C_\eta(z)=e_0^*[(1+z)I-\eta T]^{-1}e_0
 =\frac{1+z-\eta r^2}
 {(1+z-\eta)(1+z-\eta r^2)-\eta^2\gamma^2}.
 \label{eq:two-mode-transfer}
\end{equation}
Schur elimination gives the inverse scalar denominator
$1+z-\eta-\beta^2/(1+z-d)$, retaining the transformed hidden-state
propagator.  At $\eta_*$,
\begin{equation}
 \mathscr C_{\eta_*}(z)=\frac{Z}{z}
 +\frac{1-Z}{z+1-\eta_*\lambda_-},\qquad
 Z=\frac{1-\eta_*r^2}{1-\eta_*\lambda_-}\in(0,1).
 \label{eq:two-mode-positive-residue}
\end{equation}
The inequalities for $\lambda_-$ prove the stated range of $Z$.
Here $L(z)=(1+z)I-\eta_*T$, its complementary inverse is bounded by
$(1-\eta_*\lambda_-)^{-1}$ at zero, and $L'(0)=I$.  Hence $G_0=I$ on
the one-dimensional critical space.  The $e_0$ source sees that space,
since an eigenvector with zero first component would require $\gamma=0$.
This verifies every hypothesis of Theorem~\ref{thm:native-positive-residue}
from the actual operations.  It does not identify $z$ with a Lorentz mass.
At criticality the unforced fixed points form the critical line; constant
nonzero forcing accumulates along its visible direction, so there is no
unique forced stationary state at that endpoint.

The error from dropping the residual altogether can also be computed.
With $C=(1,0)$, $B=\eta T-zI$, $\bar B=a-z$, and the inherited
scalar source and readout, the four defects are
$E_B=(0,-\beta)$ and $E_S=E_R=E_D=0$.  On the common inverse domain,
Theorem~\ref{thm:irreversible-error} gives
\begin{equation}
 \bar{\mathscr C}-\mathscr C
 =-\frac{\beta^2}{(w-a)[(w-a)(w-d)-\beta^2]},\qquad w=1+z.
 \label{eq:two-mode-projection-error}
\end{equation}
Direct subtraction of $1/(w-a)$ and
Eq.~\eqref{eq:two-mode-transfer} verifies the same identity.
Thus update closure failure alone produces a nonzero output error even
when source, readout, and direct terms are individually compatible.

\paragraph{All four Schur terms and a certified memory approximation.}
To expose terms hidden by a one-leg probe, allow two source columns $S=I_2$
and two readouts $R=I_2$, with zero full-system direct term.  Let $w=1+z$,
$H=(w-d)^{-1}$ and $L_{\rm eff}=w-a-\beta^2H$.  Eliminating the second
state gives
\begin{equation}
 S_{\rm eff}=(1,\beta H),\quad R_{\rm eff}=(1,\beta H)^{\mathsf T},\quad
 D_{\rm eff}=\begin{pmatrix}0&0\\0&H\end{pmatrix},\quad
 \mathscr C=D_{\rm eff}+R_{\rm eff}L_{\rm eff}^{-1}S_{\rm eff}.
 \label{eq:two-mode-four-schur-terms}
\end{equation}
Solving the two linear equations proves the formula, including the
off-diagonal observations and the nonzero direct term.  In the time
domain the second source also enters the hidden memory, as in
Eq.~\eqref{eq:exact-hidden-history}.

For a finite-memory approximation use
$H_m(w)=\sum_{k=0}^{m-1}d^kw^{-k-1}$ in all four terms, not just the
effective denominator.  On $|w|\geq w_0>d$,
\begin{equation}
 |H-H_m|\leq\frac{d^m}{w_0^m(w_0-d)}=:\delta_m.
 \label{eq:two-mode-transform-tail}
\end{equation}
Thus the effective kernel, source, readout and direct-term errors are
bounded respectively by $\beta^2\delta_m$, $|\beta|\delta_m$,
$|\beta|\delta_m$, and $\delta_m$.  Use $B_{\rm eff}=I-L_{\rm eff}$ to
apply Theorem~\ref{thm:irreversible-error}.  On a compact invertible
domain with $\sup|L_{\rm eff}^{-1}|\leq M$, the condition
$M\beta^2\delta_m<1$ gives
$\sup|L_{{\rm eff},m}^{-1}|\leq M/(1-M\beta^2\delta_m)$ by the Neumann
series.  Substitution gives an explicit convergent complete-response
error bound.  A small contour around the critical root, excluding the
second pole and $w=d$, then gives the total-residue bound of
Proposition~\ref{prop:output-residue-error}.  All tail bounds have supplied
geometric majorants; they are not inferred from visually stable eigenvalues.

\subsection{Scalar feedback and the fold value 4/27}
\label{sec:feedback-example}

For comparison with the transfer pole, consider the explicitly specified
scalar feedback
\begin{equation}
 \Pi_{\rm d}=\frac{\Pi_0}{(1-V\Pi_{\rm d})^2},\qquad
 F_x(y)=\frac{x}{(1-y)^2},\quad y=V\Pi_{\rm d},\quad x=V\Pi_0,
 \label{eq:toy-feedback}
\end{equation}
on $0\leq y<1$, $x\geq0$.  Its fixed points satisfy $x=y(1-y)^2$.
The derivative of this cubic is $(1-y)(1-3y)$, so it increases up to
$y=1/3$ and then decreases to zero.  Thus
\begin{equation}
 x_c=\frac4{27},\quad y_c=\frac13,
 \label{eq:toy-fourtwentyseven}
\end{equation}
and $0<x<x_c$ has exactly two admissible branches
$y_-\in(0,1/3)$ and $y_+\in(1/3,1)$.  At $x=0$ the only admissible
fixed point is zero; at $x_c$ the branches meet, and above it there is no
fixed point in this domain.  At a fixed point
$F_x'(y)=2y/(1-y)$, giving local attraction of $y_-$ and repulsion of
$y_+$.  If $a<1/3$ and $0\leq x\leq a(1-a)^2$, then
$F_x([0,a])\subset[0,a]$ and
$\sup_{[0,a]}F_x'\leq2a/(1-a)<1$.  The contraction theorem proves
uniqueness and convergence within this invariant interval, not uniqueness
among all fixed points.  In particular the declared initialization $y_0=0$
selects $y_-$ for $0<x<x_c$.

Implicit differentiation of the same rule gives
\begin{equation}
 \frac{dy}{dx}=\frac1{(1-y)(1-3y)}.
 \label{eq:feedback-susceptibility}
\end{equation}
Writing $y=1/3+u$ gives $x=4/27-u^2+u^3$.  The susceptibility therefore
diverges with inverse square-root magnitude as $x\uparrow x_c$ along
either branch, with opposite signs.  At the endpoint $F_x'=1$, so the
strict contraction proof no longer applies.  This is a fold in a control
parameter, not automatically a simple pole in a propagation variable.
Replacing the iteration by a preconditioned solver leaves the implicit
derivative unchanged when sources are transformed as in
Eq.~\eqref{eq:preconditioned-response}, while solver stability can change.
Even where iteration is the declared update here, a physical time or a
physical branch-selection law requires its own model identification.
Applying this feedback to an arithmetic or TT construction requires
matching its state, source, normalisation and parameter maps.  These maps
determine whether the same fold mechanism is present.

\subsection*{Field-theory cross-checks}
The following previously established channels test the additional QFT
representation and individual parts of the general response theory.
They examine spectral weight and symmetry constraints in physical
representations, complementing the operation-based examples above.

\subsection{A free field: preservation of an existing pole}
\label{sec:free-pole}

For a canonically normalised free scalar field of mass $m$,
\begin{equation}
 C_\phi(Q^2)=\frac{1}{Q^2+m^2}.
 \label{eq:free-scalar-propagator}
\end{equation}
Applying the external representation
Eq.~\eqref{eq:external-gaussian-representation} gives
\begin{equation}
 C_{\phi,\tau}(Q^2)
 =\frac{e^{-\tau Q^2}}{Q^2+m^2}.
 \label{eq:smeared-free-propagator}
\end{equation}
The multiplier is entire and nonzero at every finite $Q^2$.
It therefore leaves the pole location and order unchanged.
For $m=0$ it also leaves the residue unchanged:
\begin{equation}
 \lim_{Q^2\to0}Q^2C_{\phi,\tau}(Q^2)=1.
 \label{eq:free-massless-residue}
\end{equation}
This verifies the pole-preservation statement of
Proposition~\ref{prop:pole-preservation}; pole generation belongs to the
dynamical step.

\subsection{A free composite field: a bubble continuum}
\label{sec:free-composite}

Let $\phi$ be a free real scalar of mass $m>0$ and choose the normal-ordered composite
\begin{equation}
 \cO(x)=\frac{1}{\sqrt2}:\!\phi^2(x)\!:.
 \label{eq:phi2-operator}
\end{equation}
Wick contraction gives the connected Euclidean two-point function
\begin{equation}
 C_{\phi^2}(Q^2)
 =\int\frac{d^4p}{(2\pi)^4}
 \frac{1}{(p^2+m^2)((p+Q)^2+m^2)}.
 \label{eq:phi2-bubble}
\end{equation}
After a momentum-independent subtraction, dimensional regularisation gives
\begin{equation}
 C_{\phi^2,R}(Q^2)-C_{\phi^2,R}(0)
 =-\frac{1}{16\pi^2}\int_0^1dx\,
 \ln\left[1+\frac{x(1-x)Q^2}{m^2}\right].
 \label{eq:phi2-subtracted}
\end{equation}
Analytic continuation has a branch point at $s=4m^2$.
With the spectral convention of Appendix~\ref{app:kl}, the nonlocal part has
\begin{equation}
 \rho_{\phi^2}(s)
 =\frac{1}{16\pi^2}
 \sqrt{1-\frac{4m^2}{s}}\,
 \theta(s-4m^2),
 \label{eq:phi2-spectral-density}
\end{equation}
with the normalisation in Eq.~\eqref{eq:phi2-operator}.
There is no isolated one-particle atom.
This example separates a positive threshold density from an isolated pole.
In the massless limit the continuum begins at $s=0$, placing the channel in
the threshold-sensitive class of Assumption~\ref{ass:fredholm}.

\subsection{An unbroken conserved current without a massless pole}
\label{sec:conserved-current-example}

Consider a free massive Dirac field and its exactly conserved global current
$J_\mu=\bar\psi\gamma_\mu\psi$.
The Ward identity makes the separated-point two-point function transverse,
\begin{equation}
 \langle J_\mu(Q)J_\nu(-Q)\rangle
 =(Q_\mu Q_\nu-Q^2\delta_{\mu\nu})\Pi_V(Q^2)
 +\text{contact terms}.
 \label{eq:vector-transverse}
\end{equation}
It also fixes the charge form factor of a one-particle fermion at zero transfer.
Nevertheless the vacuum spectral density of $\Pi_V$ starts at the fermion--antifermion threshold,
\begin{equation}
 \rho_V(s)\ \propto\
 \left(1+\frac{2m^2}{s}\right)
 \sqrt{1-\frac{4m^2}{s}}\,
 \theta(s-4m^2),
 \label{eq:vector-threshold}
\end{equation}
and contains no massless atom.
The example directly separates charge normalisation from the existence of a long-range vacuum channel.

\subsection{A Goldstone channel: symmetry plus state information}
\label{sec:goldstone-example}

Let an exact continuous global symmetry have conserved current $J_A^\mu$, and
let a local Lorentz-scalar field $\Phi$ satisfy
\begin{equation}
 v_A\equiv\langle\Omega|\delta_A\Phi(0)|\Omega\rangle\neq0.
 \label{eq:goldstone-order-parameter}
\end{equation}
The nonzero expectation value is state information: the vacuum does not share the symmetry of the action.
Apply Eq.~\eqref{eq:ward-contact} to the mixed correlator.  After subtracting
polynomial contact contributions from that correlator, retain the
order-parameter contact term on the right-hand side of the Ward identity.
With the Fourier convention of Section~\ref{sec:criteria}, this gives
\begin{equation}
 p_\mu G_A^\mu(p)=v_A,
 \qquad
 G_A^\mu(p)=\int d^dx\,e^{\iu p x}
 \langle\Omega|T J_A^\mu(x)\Phi(0)|\Omega\rangle.
 \label{eq:goldstone-ward}
\end{equation}
Indeed, the Fourier transform of the divergence is $-\iu p_\mu G_A^\mu$,
and that of its Ward contact term is $-\iu v_A$.
For a Lorentz-invariant vacuum there is no independent vector, so Lorentz covariance gives $G_A^\mu(p)=p^\mu F_A(p^2)$.
Consequently, modulo the contact terms already removed, the longitudinal form factor has the singular part
\begin{equation}
 F_A(p^2)=\frac{v_A}{p^2+\iu0},
 \label{eq:goldstone-pole}
\end{equation}
which is the massless Goldstone pole~\cite{Goldstone:1962}.
The conclusion required both the exact identity and Eq.~\eqref{eq:goldstone-order-parameter}.
With $v_A=0$, the same Ward identity does not force the pole.
For an operator carrying Lorentz indices, additional covariants may occur and
the one-form-factor argument must be replaced by the corresponding tensor
decomposition.

\subsection{A separable response kernel}
\label{sec:separable-kernel}

Let the response operator on $\mathcal X_{\cC}$ be rank one,
\begin{equation}
 \cB(z)=\lambda(z)|r\rangle\langle\ell|,
 \qquad \langle\ell|r\rangle=1,
 \label{eq:rank-one-kernel}
\end{equation}
with $\lambda$ analytic.
The resolvent is obtained exactly,
\begin{equation}
 [I-\cB(z)]^{-1}
 =I+\frac{\lambda(z)}{1-\lambda(z)}
 |r\rangle\langle\ell|.
 \label{eq:rank-one-resolvent}
\end{equation}
Suppose $\lambda(0)=1$ and $\partial_z\lambda(0)\neq0$.
For the response-space maps
$\mathsf S_X:\mathbb C^m\to\mathcal X_{\cC}$ and
$\mathsf R_X:\mathcal X_{\cC}\to\mathbb C^m$, the singular part is
\begin{equation}
 C_{\rm sing}(z)
 =-\frac{\mathsf R_X|r\rangle
          \langle\ell|\mathsf S_X}
         {\partial_z\lambda(0)z}
 +O(1).
 \label{eq:rank-one-residue}
\end{equation}
The pole disappears from the measured channel if either matrix factor
vanishes, even though the kernel still has eigenvalue one.  This solvable
example contains every algebraic condition in
Theorem~\ref{thm:kernel-pole}.  Only when the response is the unfiltered
physical correlator and the hypotheses of
Theorem~\ref{thm:fredholm-kl-bridge} hold does the displayed residue equal the
Hermitian positive-semidefinite atom $Z_0$.  Thus a denominator zero alone is
insufficient.

\subsection{Verification in concrete calculations}
\label{sec:verification}

A concrete lattice, FRG, Dyson--Schwinger, or Bethe--Salpeter calculation
connects the channel definition to spectral data through the steps below.
The distinction is between evidence for a pole in an approximating
matrix and control of the response that matrix is intended to represent.

\paragraph{Renormalised channel.}
The starting data are the action, vacuum or ensemble, spacetime signature
and dimension.  The operator basis, mixing matrix, EOM/BRST quotient and
spin/internal projector specify the physical channel; the operator
source--readout maps, response space, response projector and realisation
maps specify $\mathsf S_X$ and $\mathsf R_X$.  In a gauge theory, the
Slavnov--Taylor residuals and the treatment of ghost and gauge-fixing
sectors are part of these data.  Closure is tested by
Eq.~\eqref{eq:response-channel-invariance}.  If it fails, the calculation
requires an enlarged block or Eq.~\eqref{eq:feshbach-response}.

\paragraph{Spectral observable.}
The spectral observable is the projected, connected Euclidean correlation
matrix of centred insertions.  Removed one-point products and subtracted
contact polynomials have distinct origins and are recorded separately.
For the Poincar\'e-invariant vacuum spectral branch of
Appendix~\ref{app:kl}, a candidate massless atom is tested through
\begin{equation}
 Z_0=\lim_{Q^2\to0}Q^2
 \bigl[C(Q^2)-P(Q^2)\bigr]
 \label{eq:verification-residue}
\end{equation}
or the corresponding positive residue matrix.
The reported data must verify the continuum integrability condition used to
exchange this limit with the spectral integral.
The bridge test requires this unfiltered projected matrix to be the same
response used in the Fredholm factorisation, with identical vacuum,
centring, source pairing, continuation and contact subtraction.
An ensemble calculation must instead supply a spectral
representation and diagnostic appropriate to that state.
For a spectral-FRG calculation, the spectral representation and its
continuation must be specified in the truncation.  For example,
Ref.~\cite{Horak:2024spectral} computes two- and four-point spectral
functions in scalar $\phi^4$ theory in three spacetime dimensions directly
in a Lorentzian formulation.

\paragraph{Finite-volume scaling.}
In a massive theory, the large-volume behaviour of an isolated stable
particle is treated in Ref.~\cite{Luscher:1986stable}, and that of
two-particle scattering levels in Ref.~\cite{Luscher:1986scattering}.
Applying a level-scaling test requires the finite-volume assumptions and
spectral separation appropriate to the state being extrapolated.  The
massive-theory estimates do not supply a zero-mass extrapolation theorem.
A pole claim requires a
sequence of volumes and resolutions showing the scaling of the candidate
energy, its matrix element, and its separation from multiparticle levels.  If
the infinite-volume continuum also reaches $s=0$, ordinary level separation
does not by itself distinguish a zero-mass atom from threshold weight; the
channel must be reported as \emph{threshold-sensitive} pending a dedicated
threshold analysis.  In the isolated-state setting, the result must survive
the continuum and infinite-volume extrapolations.

\paragraph{Kernel test.}
In the two-particle realisation, the dimensionless operator
$\cB=\cK_{\rm 2PI}G_0$ is constructed in the same projected basis.
Following its relevant eigenvalue branches in the complex external
invariant $z$ gives $\lambda(0)$ and $\partial_z\lambda(0)$.  The associated
left and right eigenvectors, together with the source and readout, give
both matrix factors in Eq.~\eqref{eq:rank-one-residue}.  Onset along a control-parameter
family also calls for a separate evaluation of
$\partial_\eta\lambda(0;\eta_c)$, with higher-order analysis if it
vanishes.  At intermediate FRG scale this remains a flow diagnostic;
a physical pole is a property of the endpoint response.

For a nonnormal discretised operator, sensitivity is assessed through
the right and left eigenvector residuals and the condition number
\begin{equation}
 \kappa_N
 =\frac{\|r_N\|\,\|\ell_N\|}
        {|\langle\ell_N,r_N\rangle|}.
 \label{eq:eigenvalue-condition-number}
\end{equation}
The smallest singular value of $I-\cB_N$, or an equivalent resolvent bound,
should accompany the eigenvalue.  A small value of
$|1-\lambda_N(0)|$ without overlap, conditioning, and convergence
information is not a pole determination.

\paragraph{Approximation analysis.}
Approximation errors are tested by controlled enlargement of the operator
and momentum basis, regulator variation, quadrature refinement, and
changes of finite-volume parameters.
Numerical pole estimates should stabilise under these changes when they
converge to an exact physical pole, while residual spread provides a
truncation uncertainty.  Once the assumptions and normalisation of
Proposition~\ref{prop:gaussian-unique} are fixed, the Gaussian averaging
prescription is held fixed during this analysis, while FRG regulator
variation probes the truncation uncertainty.

For a continuum Fredholm claim, all approximants are realised on a common
response space.  On a compact set $K$ in the $z$ plane, a sufficient
operator-norm convergence condition of the form
\begin{equation}
 \sup_{z\in K}
 \|\widetilde\cB_N(z)-\cB(z)\|\longrightarrow0,
 \label{eq:uniform-kernel-convergence}
\end{equation}
together with convergence of $\partial_z\cB_N$, $\mathsf S_{X,N}$, and
$\mathsf R_{X,N}$ supplies the required approximation control.
This condition is directly checkable when a derivation
supplies a certified remainder bound.  Alternatively, one may bound the
nested increments by a summable sequence with a certified computable tail
bound $\sum_{n\geq N}\varepsilon_n(K)\leq r_N(K)\to0$ and use
Proposition~\ref{prop:summable-increment-test}; this effective tail
bound controls the continuum error once a consistency argument identifies
the limit with the exact response Jacobian.

For a coarse-grained closure, the reference-state comparison starts with
$E_{\tau_q}(X_*;z)$.  If it is nonzero, an independent fixed-point
displacement estimate is needed.  Lipschitz constants for
$D_X\mathfrak F_{\tau_q}$ and every base-point-dependent source or readout
then convert that displacement into kernel and map-error terms.  In the
invertible-sector comparison, the remaining data are
$\|D_XE_{\tau_q}\|$, $\|\mathcal C_{\tau_q}^{-1}\|$, the resolvent constants
in Eqs.~\eqref{eq:closure-stability-constants}--\eqref{eq:residue-defect-bound},
and the transported-map discrepancies
$\|\widetilde{\mathsf S}_X-\mathsf S_{\rm ind}\|$ and
$\|\widetilde{\mathsf R}_X-\mathsf R_{\rm ind}\|$.
A residue comparison concerns the reference root $z_*$ and the perturbed
root $\widetilde z_*$.  Equation~\eqref{eq:critical-root-displacement}
and the analytic-variation bounds transport the Riesz numerator and
$\partial_z\lambda$ between these points.  This step is what permits the
comparison of residues at their actual, generally different roots.
If $\gamma$ is a contour in the spectral $\zeta$ plane that
encloses the eigenvalue one and remains in the resolvent set, the relevant
Riesz projections are
\begin{equation}
 \Pi_{\gamma,N}(z)
 =\frac{1}{2\pi\iu}\oint_\gamma
  (\zeta-\widetilde\cB_N(z))^{-1}\,d\zeta.
 \label{eq:riesz-truncation-test}
\end{equation}
Their convergence is required alongside that of the eigenvalues.  Isolated-eigenvalue
approximation and its error bounds are treated in Chapters~5--6 of
Ref.~\cite{Chatelin:1983}; nonnormal resolvent sensitivity is the subject
of Ref.~\cite{TrefethenEmbree:2005}.  When
Eq.~\eqref{eq:uniform-kernel-convergence} is unavailable, checks for spectral
pollution and pseudospectral sensitivity remain necessary diagnostics,
but do not by themselves certify convergence to the exact operator.

\paragraph{Reported data and reproducibility.}
A reproducible calculation records the action and normalisation
conventions, code version, configuration, regulator, basis, tolerances,
raw eigenvalue data, and scripts for the tables and figures.  Its result
includes all four entries of Eq.~\eqref{eq:classification-output}, the
dynamical conclusion established, and the six compatibility conditions
in Section~\ref{sec:execution}.  Each statement is identified as exact,
certified in a limiting sequence, truncation dependent, or unresolved.

The five analytic examples above provide benchmark cases for such an implementation.
They test pole preservation, continuum production, current conservation without a pole, a symmetry-forced Goldstone pole, and a kernel-generated pole with explicit overlap conditions.
Agreement with these cases provides a necessary consistency check for the
corresponding components of a numerical implementation; a different
interacting channel still requires its own criticality test.

%% file: sections-en/sec06_summary.tex
\section{Conclusions}
\label{sec:summary}

Coarse response depends on which operations are allowed and which
observations are to be preserved.  Equality of all future outputs defines
the coarsest observation-sufficient quotient.  Fibre consistency then
gives differentiable descent for bounded surjective linear coarse maps
under the stated Banach-space hypotheses.  These two results separate
the choice of retained information from the regularity needed to compute
its response.

Differentiating the state rule and observation together gives four linked
response maps.  Their joint descent, including the compensating terms
for source-dependent coarse coordinates, preserves the full output.
The four-defect comparison identity measures departures from this
property for noninvertible coarse maps as well.  It includes reference
branch errors and supports contour bounds on observable residues.
Residual-state elimination gives an alternative exact description with
initial-state, state-memory, and source-memory terms.  Decay of the hidden
propagator and bounds on the retained history yield quantitative
finite-memory approximations.

At an isolated critical mode, the order of the crossing and the source
and readout overlaps determine the observable singularity.  A positive
crossing matrix and reciprocal source pairing give a positive residue
whose rank is the number of critical directions reached by the source.
Nonnormal kernels require both left and right overlaps, together with
resolvent control.  For finite approximations to an infinite target,
effective norm remainders and Riesz-projection convergence control roots
and residues at their respective displaced critical points.  These
response statements apply to the specified parameter; dynamical
stability refers separately to the actual generating update.

The examples exhibit different consequences of eliminating states.  In
the arithmetic merge, the static susceptibility is recovered by Schur
reduction although the connected heat action needs a residual state for
its dynamical realisation.  In the reflecting-window protocol, heat
action, modulation and projection determine the update, stability
interval, visible critical mode and positive residue explicitly.  The
residual dimension and the memory error can also be calculated.  The
scalar feedback example provides a contrasting critical behaviour: its
coalescing fixed-point branches produce a fold susceptibility rather
than a simple pole in the control parameter.

Gaussian averaging is unique within the specified composition,
invariance and self-similarity class.  External filtering preserves an
existing simple pole, while internal averaging depends on the chosen
relative coordinates and momentum routing.  Field-theory realisations
add operator mixing, symmetry identities and composite-source flows.
Under common vacuum, projection, continuation and subtraction
conventions, the Fredholm--K\"all\'en--Lehmann bridge identifies an
observable pole with a positive spectral atom.  The full, positive,
exactly Lorentz-covariant conserved stress tensor in an invariant
four-dimensional vacuum has no null-shell transverse-traceless atom;
this representation-specific conclusion leaves the native tensor
response results unchanged.

Together, these results provide a way to pass from specified generating
operations to a coarse description with controlled observable response.
Applications require the state rule, source coupling, observation and
response parameter of the system under study.  The examples show how
these data enter the calculation, while the error estimates indicate
which quantities must be controlled when the coarse description is
approximate.  Extending the bounds to nonlinear long-time evolution and
to successive levels of reduction remains a useful direction for further
work.

%% file: sections-en/app_kl_representation.tex
\section{Response parameters and physical spectral measures}
\label{app:kl}

This appendix specifies the additional representation data that relate
an observable response to a QFT spectral measure.  We first distinguish
the parameters in the native examples from the invariant mass used in
the vacuum representation.

\subsection{Parameter and representation dictionary}

The internal Neumann eigenvalue $E_n$ determines heat attenuation in
Section~\ref{sec:neumann-example}; the transfer parameter there is
$z=w-1$, with $w$ the discrete input frequency variable.  The control
$\eta$ changes the actual update and is not $z$.  The arithmetic heat
parameter orders a cost semigroup, not a physical clock.
Theorem~\ref{thm:native-positive-residue} certifies a Laurent coefficient
in the analytic parameter actually specified.  For a finite self-adjoint
$A$ with orthonormal eigenvectors $v_j$, direct diagonalisation gives
\[
 S^*(wI-A)^{-1}S=\sum_j\frac{S^*v_jv_j^*S}{w-a_j}.
\]
Each residue is positive, but this transfer spectral measure is not a
Lorentzian invariant-mass measure.  The latter uses $s=p^2$ and the
Euclidean variable $Q^2$ under the assumptions below.  Identifying these
variables requires an actual representation, not a change of notation.
Only a spatial Fourier variable gives the spatial Green function
calculated later in this appendix.

The positive measure is obtained from connected Wightman two-point functions of centred physical operators.
Time-ordered correlators, commutators, and their Euclidean continuations are related to that measure but do not carry identical positivity statements.
The invariant-mass representation has its origins in
Refs.~\cite{Kallen:1952,Lehmann:1954}; the vacuum framework used below is
that of Ref.~\cite{Streater:2000pct}.  For tensor channels we state the
physical-polarisation compression separately and check its positivity.

\subsection{Matrix-valued Wightman measure}

For the invariant-mass spectral representation, work in a positive-metric
physical representation with a Poincar\'e-invariant vacuum $|\Omega\rangle$
and the positive-energy spectrum condition.  We assume that the joint
spectral projection of the momentum operators at zero is
$E_P(\{0\})=|\Omega\rangle\langle\Omega|$: the vacuum is unique within
the chosen representation.
Let $O_i$, $i=1,\ldots,n$, be a finite block of renormalised local physical
operators, with $O_i^\dagger$ included when necessary, and set
\begin{equation}
 v_i=\langle\Omega|O_i|\Omega\rangle,
 \qquad \Theta_i=O_i-v_i I.
 \label{eq:centred-physical-insertions}
\end{equation}
The two-point matrix used throughout this appendix is connected:
\begin{equation}
 \begin{aligned}
 W_{ij}(x)
 &=\langle\Omega|\Theta_i(x)\Theta_j^\dagger(0)|\Omega\rangle\\
 &=\langle\Omega|O_i(x)O_j^\dagger(0)|\Omega\rangle
   -v_i\overline v_j.
 \end{aligned}
 \label{eq:wightman-matrix}
\end{equation}
The omitted constant has Fourier transform
$(2\pi)^d v_i\overline v_j\delta^{(d)}(p)$.  This is a vacuum contribution
at zero four-momentum, not a massless particle-shell contribution and not a
contact polynomial.  Since $E_P(\{0\})\Theta_j^\dagger|\Omega\rangle=0$,
the connected measure has no atom at $p=0$.  Its possible $\delta_0(ds)$
term instead arises from disintegration on the nonzero future null shell.

Thermal states and other ensembles may admit the response analysis of the
main text, but their spectral representations generally depend separately
on frequency and spatial momentum.  The invariant-mass representation and
the $Q^2$ atom diagnostic below are not asserted for those states.

Inserting the spectral resolution of the four-momentum operator yields a
matrix-valued measure supported in the closed forward cone.  For tensor
operators write this Fourier-space Wightman measure as
$d\Sigma_{ij,AB}(p)$, where $A,B$ collect covariant tensor indices.
Positivity is first a statement on the positive-metric physical Hilbert
space: for every smooth compactly supported physical test section $f_i^A(p)$,
\begin{equation}
 \sum_{i,j}\int_{\overline V_+}
 \overline{f_i^A(p)}\,d\Sigma_{ij,AB}(p)\,f_j^B(p)\geq0.
 \label{eq:full-wightman-positivity}
\end{equation}

\begin{assumption}[Positive physical-polarisation compression]
\label{ass:physical-polarisation-compression}
For $s>0$, a tensor or spin channel $a$ is represented by a measurable
orthonormal family
$\{\varepsilon_{a,r}(p)\}_{r=1}^{N_a}$ in the positive-norm physical
polarisation space.  Its channel measure is obtained from the compression
\begin{equation}
 d\Sigma^{(a)}_{ij;rs}(p)
 :=\overline{\varepsilon_{a,r}^{A}(p)}\,
   d\Sigma_{ij,AB}(p)\,\varepsilon_{a,s}^{B}(p),
 \label{eq:physical-polarisation-compression}
\end{equation}
followed by Lorentz-covariant disintegration with respect to $s=p^2$ and a
positive contraction in the polarisation indices.  Here a positive
contraction means
$X\mapsto\operatorname{tr}_{\rm pol}[H_a(p)X]$ for a specified measurable
$0\preceq H_a(p)\preceq I$ that transforms covariantly under a change of
polarisation frame; $H_a=I$ gives the full irreducible block.  If $s=0$ is
included, the endpoint compression is defined separately on the null shell by an
orthonormal family spanning the specified physical helicities.  The
Euclidean tensor projector used in the channel is required, after analytic
continuation, to induce this same helicity compression.  Only the compressed
forms are required to possess the zero-mass limit; no limit of the full
massive-spin projector is assumed.  For a scalar channel the compression is
the identity.  A measurable change of physical polarisation frame is unitary
on each polarisation fibre and does not affect positivity.
\end{assumption}

Under Assumption~\ref{ass:physical-polarisation-compression}, denote the
invariant-mass part of the compressed spectral measure by
$d\rho^{(a)}_{ij}(s)$, $s\geq0$.  A Hilbert-space
compression and a positive contraction preserve
Eq.~\eqref{eq:full-wightman-positivity}.  Hence, for every
$c\in\mathbb C^n$ and every nonnegative Borel test function $f$,
\begin{equation}
 \sum_{i,j}\bar c_i
 \int_0^\infty f(s)\,d\rho^{(a)}_{ij}(s)c_j
 \geq0.
 \label{eq:matrix-measure-positive}
\end{equation}
Thus $d\rho^{(a)}$ is positive semidefinite as a matrix-valued measure.
This conclusion concerns the compressed physical channel, not an arbitrary
scalar coefficient in a covariant tensor decomposition.
The Fourier transform of a commutator contains the difference of positive- and negative-energy boundary values and is not itself a nonnegative scalar measure on all four-momenta.

\subsection{The zero-mass stress-tensor channel}
\label{sec:stress-null-shell}

For a complete covariant stress tensor, positivity before any channel
compression imposes an additional restriction at the null shell.  This is
stronger input than positivity of a projected response alone.

\begin{proposition}[Null-shell exclusion for the complete stress tensor]
\label{prop:stress-null-shell}
In four-dimensional Minkowski spacetime, let $T_{\mu\nu}$ be a Hermitian
symmetric conserved tensor with an exact Lorentz tensor transformation law,
not one defined only modulo gauge transformations.  In the invariant-vacuum
representation specified above, assume its centred Wightman measure
$d\Sigma_{\mu\nu,\rho\sigma}$ is locally finite and positive on all
symmetric tensor test functions before TT compression.  On the nonzero
future null shell $H_0^+=\{p:p^0>0,\ p^2=0\}$ it then has the form
\begin{equation}
 \left.d\Sigma_{\mu\nu,\rho\sigma}(p)\right|_{H_0^+}
 =c\,p_\mu p_\nu p_\rho p_\sigma\,d\Omega_0(p),
 \qquad c\geq0,
 \label{eq:stress-null-shell-measure}
\end{equation}
where $d\Omega_0$ is a fixed invariant measure on that shell.  Its contraction
with every conserved source is zero.  Since the connected measure has no
atom at $p=0$, every conserved TT source matrix has zero invariant-mass atom
at $s=0$.
\end{proposition}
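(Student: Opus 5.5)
The plan is to classify the positive tensor-valued measure on the null shell by combining Lorentz covariance, conservation, and positivity.

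\textbf{Step 1: localise to a single null direction.} First restrict the locally finite measure to $H_0^+$. Because $d\Sigma$ transforms as a Lorentz tensor under an invariant vacuum, it disintegrates as $d\Sigma|_{H_0^+}=M_{\mu\nu,\rho\sigma}(p)\,d\Omega_0(p)$. Here $d\Omega_0$ is the invariant measure on the orbit, and $M(p)$ is a covariant density. Covariance fixes $M$ everywhere from its value at one reference momentum $k=(1,0,0,1)$, since $H_0^+$ is a single orbit. At $k$ the density must be invariant under the little group $ISO(2)$, built from rotations about the $3$-axis and null translations. Positivity of $d\Sigma$ on symmetric tensor test functions makes $M(k)$ a positive semidefinite Hermitian form on complex symmetric tensors. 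Conservation $p^\mu d\Sigma_{\mu\nu,\rho\sigma}=0$ gives $k^\mu M_{\mu\nu,\rho\sigma}(k)=0$, and by Hermiticity also $M(k)k^\rho=0$.

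\textbf{Step 2: reduce $M(k)$ to a PSD form on an invariant quotient.} A PSD form annihilated by $k$ in every slot is determined by its restriction to the conserved tensors $\{t:k^\mu t_{\mu\nu}=0\}$. Its kernel contains all $k$-contractions, so $M(k)$ is a PSD form on a space $V$ of symmetric tensors built from $k$ and the two transverse directions $e_1,e_2$. The null-translation part of the little group acts non-unitarily on $V$: it shifts $e_i\mapsto e_i+a_ik$. Invariance of a PSD form under this unipotent action forces it to vanish on any vector whose orbit is unbounded. The mechanism is that if $v+a w$ stays in a PSD form's domain with $M$ invariant, then $M(w,w)a^2+2a\,\mathrm{Re}M(v,w)+M(v,v)$ is constant in $a$. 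This gives $M(w,w)=0$, and by Cauchy--Schwarz $M(w,\cdot)=0$.

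\textbf{Step 3: eliminate everything except $kkkk$.} Iterating the Step 2 argument over the filtration $kk\subset k e_i\subset e_ie_j$, the translation orbit of any tensor with an $e_ie_j$ or $ke_i$ component is unbounded modulo lower pieces. Hence $M$ kills every direction except possibly the one dual to $k_\mu k_\nu$. In index form this yields $M_{\mu\nu,\rho\sigma}(k)=c\,k_\mu k_\nu k_\rho k_\sigma$ with $c\ge0$, and it is precisely here that the helicity $\pm2$ TT components are excluded. Transporting by covariance then gives Eq.~\eqref{eq:stress-null-shell-measure}. I expect this step to be the main obstacle. The difficulty is setting up the PSD/unipotent argument cleanly with lowered versus raised indices, and checking that no invariant helicity-two piece survives. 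Weinberg--Witten-style covariance is exactly what is being used, and the precise invariant form must be computed.

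\textbf{Step 4: conclude for conserved sources and for the atom.} Contract the result with a conserved source $f^{\mu\nu}$, i.e.\ one with $p_\mu f^{\mu\nu}(p)=0$. The term $p_\mu p_\nu f^{\mu\nu}$ vanishes, so the null-shell contribution is zero. By the earlier remark, the connected measure has no atom at $p=0$, so the $\delta_0(ds)$ part of the compressed invariant-mass measure comes only from $H_0^+$. Therefore every conserved TT source matrix has zero atom at $s=0$.
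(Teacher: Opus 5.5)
Your skeleton is the paper's: a covariant positive density on the single orbit $H_0^+$, reduction to the stabiliser of one null momentum, unipotence of null rotations, conservation, and then contraction with conserved sources. Your Step 4 is exactly the paper's ending. Step 1 only asserts the everywhere-covariant density, which the paper obtains by averaging compensated Lorentz translates against a smooth Haar weight, but that is secondary.

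\textbf{The gap is in Steps 2--3, the step you flagged yourself.} The lemma you state is false. On $\mathbb C^2$ with $N=\begin{pmatrix}0&1\\0&0\end{pmatrix}$, the form $\diag(0,1)$ is positive and invariant under $e^{aN}$. Yet it does not vanish on $e_2$, whose orbit $e_2+ae_1$ is unbounded.

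Your displayed mechanism actually proves $M(w,\cdot)=0$ for the \emph{displacement} $w$, not for the moving vector $v$. Apply it where you apply it: on $V=\operatorname{span}\{kk,ke_i,e_ie_j\}$ with $e_i\mapsto e_i+a_ik$, the displacements are $kk$ and $ke_i$. So the mechanism kills exactly the $kk$ direction you want to keep. It leaves any positive form on the $e_ie_j$ coefficients invariant, so it excludes nothing in the helicity-two sector.

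Relatedly, the reduction ``determined by its restriction to the conserved tensors'' fails when $M$ is read as a form on test tensors. The correct answer $c\,k_\mu k_\nu k_\rho k_\sigma$ vanishes on every conserved test tensor and is detected only by the non-conserved $nn$ direction. You are mixing two pictures: the range picture, in which $V$ is the right space, and the form picture, in which your mechanism is the right tool.

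Your endpoint is correct, but it needs the dual, range-type statement, which is what the paper proves. Regard $M(k)$ with indices raised as a positive matrix $H$ satisfying $D(g)HD(g)^\dagger=H$ for $g$ in the stabiliser (the paper's $\ell$ is your $k$). Factor $H=WW^\dagger$ with $W$ of full column rank. Then $D(g)W=WU(g)$, and a left inverse of $W$ gives $U(g)U(g)^\dagger=I$. For a null rotation, $U(g)$ is also unipotent, hence $U(g)=I$. So the \emph{range} of $H$ consists of null-rotation-fixed tensors. This handles the higher-degree polynomial orbits in one step, with no filtration.

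The paper then computes these fixed tensors as $a\ell\ell+c_1\eta$, with $\eta$ the metric. Conservation gives $\ell_\mu u^{\mu\nu}=c_1\ell^\nu=0$, which removes $\eta$. In your notation, the fixed vectors inside $V$ are just the multiples of $kk$.

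If you prefer the form picture, work on all ten test-tensor directions. The same unitary--unipotent argument on the quotient by the kernel shows the form vanishes on $\Ran N_1+\Ran N_2$, and conservation kills $k\vee x$. The only surviving direction is $nn$, whose orbit is unbounded. This confirms that unbounded orbits are not what gets killed.
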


\begin{proof}
The nonzero null shell is a single Lorentz orbit.  Its restricted measure
has a covariant matrix density $H(p)$ relative to $d\Omega_0$: to see this,
average compensated Lorentz translates against a smooth compactly supported
Haar weight of integral one.  Covariance leaves the measure unchanged.
Since the orbit map is a submersion, this convolution smooths a
distributional section along the orbit and hence gives the asserted
density.  Positivity of the measure gives $H(p)\succeq0$.  This argument
is confined to one orbit and does not require a density in the mass
variable $s$.

At a reference null momentum $\ell$, factor $H(\ell)=VV^\dagger$, with
$V$ of full column rank; the case $H=0$ is immediate.  If $D(g)$ is the
symmetric-tensor representation of an element of the stabiliser of $\ell$,
covariance gives $D(g)H(\ell)D(g)^\dagger=H(\ell)$.  Thus
$D(g)V=VU(g)$, and a left inverse of $V$ shows $U(g)U(g)^\dagger=I$.
Null rotations are unipotent in $D$, so their restrictions $U$ are both
unitary and unipotent, hence the identity.  Every column of $V$ is
therefore fixed by all null rotations.

For the component calculation, raise the tensor indices and choose a null
tetrad $(\ell,n,e_1,e_2)$ with $\ell\cdot n=1$ and
$e_i\cdot e_j=-\delta_{ij}$.  The null-rotation generators satisfy
$N_k\ell=0$, $N_kn=e_k$, and $N_ke_i=\delta_{ki}\ell$.  Write a symmetric
tensor, with juxtaposition denoting tensor product, as
\begin{align*}
 u={}&a\ell\ell+bnn+c_1(\ell n+n\ell)
       +\sum_i d_i(\ell e_i+e_i\ell)\\
    &+\sum_i f_i(ne_i+e_i n)+\sum_{i,j}g_{ij}e_i e_j,
       \qquad g_{ij}=g_{ji}.
\end{align*}
Applying $N_k$ to both indices and setting every coefficient to zero gives
$b=0$, $f_i=0$, $d_i=0$, and $g_{ij}=-c_1\delta_{ij}$.  Consequently
$u=a\ell\ell+c_1\eta$, where
$\eta=\ell n+n\ell-\sum_i e_i e_i$.
Conservation of the measure places every column of $V$ in the transverse
subspace; $\ell_\mu u^{\mu\nu}=c_1\ell^\nu=0$ then forces $c_1=0$.
The range of $H(\ell)$ is therefore spanned by $\ell\ell$.
Positivity and Lorentz transport give
Eq.~\eqref{eq:stress-null-shell-measure} with $c\geq0$.
For a conserved source $J$, $p_\mu J^{\mu\nu}=0$ annihilates this measure.
The absence of a connected atom at $p=0$, proved above, completes the
invariant-mass endpoint statement.
\end{proof}

This exclusion concerns the complete positive, exactly covariant tensor in
the stated vacuum.  It does not determine its spin-two spectrum at $s>0$,
nor does it follow by substituting $s=0$ into a massive-spin projector.
A native tensor response or a discrete transfer residue need not realise
this Wightman tensor measure; the exclusion applies to such an object only
if that representation and all the proposition's hypotheses are established.
It is not an additional premise of the native response theorems or of a
generating model.

\subsection{Time-ordered and Euclidean correlators}

Choose a spacelike subtraction point $Q_*^2>0$ and an integer
$n_{\rm sub}\geq0$ large enough for ultraviolet convergence.
The projected Euclidean form factors then have the matrix-valued subtracted dispersion representation
\begin{equation}
 C^{(a)}_{ij}(Q^2)
 =P^{(a)}_{ij}(Q^2)
 +(Q_*^2-Q^2)^{n_{\rm sub}}
  \int_0^\infty
  \frac{d\rho^{(a)}_{ij}(s)}
  {(s+Q^2)(s+Q_*^2)^{n_{\rm sub}}}.
 \label{eq:matrix-kl-euclidean}
\end{equation}
Here $P^{(a)}_{ij}$ is the finite subtraction polynomial, including the allowed local contact terms.
For $n_{\rm sub}=0$ the dispersive term in Eq.~\eqref{eq:matrix-kl-euclidean} reduces to the ordinary Stieltjes transform; a scheme-dependent local polynomial may still be present.
Changing $n_{\rm sub}$ or $Q_*^2$ redistributes analytic terms between the integral and the polynomial but does not alter the spectral measure or an isolated pole residue.
The Euclidean axioms of Ref.~\cite{OsterwalderSchrader:1973}, with the
reconstruction result corrected and extended in
Ref.~\cite{OsterwalderSchrader:1975}, provide a Hilbert-space interpretation
when the full reconstruction hypotheses hold.  Reflection positivity is
one of these hypotheses, rather than a sufficient condition by itself.

Equation~\eqref{eq:matrix-kl-euclidean} keeps distinct the external Euclidean variable $Q^2$ and the Lorentzian spectral variable $s$.
A Gaussian factor in external momentum multiplies the Euclidean correlator;
it is not, by that operation alone, a redefinition of $d\rho(s)$ or a new
Wightman theory.

\subsection{Zero-mass atoms and the Euclidean diagnostic}

Decompose the spectral measure in a channel as
\begin{equation}
 d\rho(s)=Z_0\,\delta_0(ds)+d\rho_{\rm c}(s),
 \qquad \rho_{\rm c}(\{0\})=0,
 \label{eq:atom-continuum-decomposition}
\end{equation}
where $Z_0$ is a positive-semidefinite matrix.
The atom contributes $Z_0/Q^2$ to the singular part of Eq.~\eqref{eq:matrix-kl-euclidean}; the remaining terms from its subtraction factor are analytic at the origin.

\begin{proposition}[Euclidean extraction of the atom]
\label{prop:extract-atom}
For every vector $c$, let
$d\nu_c(s)=c^\dagger d\rho_{\rm c}(s)c/(s+Q_*^2)^{n_{\rm sub}}$.
Suppose $\nu_c$ is locally finite at $s=0$, has no atom there, and obeys
$\int_1^\infty s^{-1}d\nu_c(s)<\infty$.
Then, entrywise and as a quadratic form,
\begin{equation}
 Z_0=\lim_{Q^2\downarrow0}Q^2
 \bigl[C(Q^2)-P(Q^2)\bigr].
 \label{eq:atom-limit}
\end{equation}
\end{proposition}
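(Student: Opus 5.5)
Insert the atom--continuum decomposition \eqref{eq:atom-continuum-decomposition} into the subtracted representation \eqref{eq:matrix-kl-euclidean}, and reduce the matrix statement to a scalar one. Because both sides of \eqref{eq:atom-limit} are Hermitian-form-valued and linear in $d\rho$, it suffices to prove the quadratic-form version: for every $c\in\mathbb C^n$, show $\lim_{Q^2\downarrow0}Q^2c^\dagger[C(Q^2)-P(Q^2)]c=c^\dagger Z_0c$. Entrywise convergence then follows by polarisation, taking $c=e_i+e_j$, $e_i+\iu e_j$ and so on.

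The atom term is handled first. It contributes $(Q_*^2-Q^2)^{n_{\rm sub}}c^\dagger Z_0c/[Q^2(Q_*^2)^{n_{\rm sub}}]$. Multiplying by $Q^2$ gives $c^\dagger Z_0c\,(1-Q^2/Q_*^2)^{n_{\rm sub}}$, which tends to $c^\dagger Z_0c$. The continuum term, after multiplication by $Q^2$, is
\[
 (Q_*^2-Q^2)^{n_{\rm sub}}\int_{(0,\infty)}\frac{Q^2}{s+Q^2}\,d\nu_c(s).
\]
The prefactor is bounded for $Q^2\in(0,1]$, so it remains to show that the integral tends to $0$.

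For the integral, split at $s=1$. On $(1,\infty)$ use the bound $Q^2/(s+Q^2)\le Q^2/s$. Together with the hypothesis $\int_1^\infty s^{-1}d\nu_c<\infty$, this gives a contribution $O(Q^2)$. On $(0,1]$ the integrand lies in $[0,1]$ and tends pointwise to $0$ for each $s>0$. The measure $\nu_c$ is finite there by local finiteness, and it is nonnegative by the matrix positivity \eqref{eq:matrix-measure-positive}, which makes $c^\dagger d\rho_{\rm c}c\ge0$. Dominated convergence then gives the limit $0$. This is the step where the no-atom hypothesis is essential: at $s=0$ the integrand equals $1$ for every $Q^2$, so an atom of $\nu_c$ at zero would survive in the limit. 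The condition $\rho_{\rm c}(\{0\})=0$ removes precisely that set. The remaining point is that the integrals are finite for each fixed $Q^2>0$, which follows from the same two bounds, so the representation is well defined.

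The main obstacle is the UV tail. We must check that the single inverse power $s^{-1}$ is enough even though the original kernel carries the extra factor $(s+Q_*^2)^{-n_{\rm sub}}$. This factor is already absorbed in the definition of $\nu_c$, so the tail bound is exactly the stated hypothesis and no further subtraction argument is needed. Finally, we note that the limit commutes with the polarisation identities, since each of them is a finite linear combination of scalar limits that exist. This yields \eqref{eq:atom-limit} entrywise.
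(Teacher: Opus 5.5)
Your proposal is correct and follows the paper's proof essentially step for step. The atom term multiplied by $Q^2$ tends to $Z_0$. The continuum term is split at $s=1$: on $(0,1]$ you use dominated convergence, relying on local finiteness and the absence of an atom at zero, and on the tail you use the bound $Q^2/s$ together with the weighted ultraviolet hypothesis. Polarisation of the quadratic-form limit then gives the entrywise statement, and your explicit remark that $\nu_c\geq0$ follows from the matrix positivity of the spectral measure makes precise a point the paper uses tacitly.
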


\begin{proof}
The atom term in Eq.~\eqref{eq:matrix-kl-euclidean}, multiplied by $Q^2$, tends to $Z_0$.
For a quadratic form $c$, the remaining term is
\begin{equation}
 Q^2(Q_*^2-Q^2)^{n_{\rm sub}}
 \int_{(0,\infty)}\frac{d\nu_c(s)}{Q^2+s}.
 \label{eq:continuum-limit-integral}
\end{equation}
Apart from the bounded prefactor $(Q_*^2-Q^2)^{n_{\rm sub}}$, the integrand is $Q^2/(Q^2+s)$.
On $(0,1]$ it is bounded by one and converges pointwise to zero, so local finiteness, absence of an atom at zero, and dominated convergence apply.
On $[1,\infty)$ it is bounded by $Q^2/s$, and the assumed weighted ultraviolet bound makes the tail vanish.
Polarisation of the quadratic-form result gives the entrywise limit.
The polynomial term is removed before the limit, and the result follows.
\end{proof}

Proposition~\ref{prop:extract-atom} applies to the unfiltered, connected
vacuum correlator of the centred insertions.  Theorem~\ref{thm:fredholm-kl-bridge}
connects it to a response residue after matching the vacuum, centring,
operator projection, source pairing, continuation, subtraction, and endpoint
correlator.

If the continuum falls outside the hypotheses of
Proposition~\ref{prop:extract-atom}, the endpoint requires a separate spectral
analysis.
Even when the limit exists, a branch cut can begin at $s=0$ and coexist with an atom.
The decomposition~\eqref{eq:atom-continuum-decomposition} supplies the
physical distinction that visual smoothness of Euclidean data cannot resolve.

The connection with long-range response is direct.
For $d>2$, away from coincident points,
\begin{equation}
 \int\frac{d^dQ}{(2\pi)^d}\,
 \frac{e^{\iu Q\cdot x}}{Q^2}
 =\frac{\Gamma(d/2-1)}{4\pi^{d/2}}
  \frac{1}{|x|^{d-2}}.
 \label{eq:massless-long-range}
\end{equation}
Thus a nonzero $Z_0$ produces a power-law tail, whereas the subtraction polynomial contributes only distributions supported at coincident points.
In $d=2$ the corresponding Green function is logarithmic.
A continuum beginning at $s=0$ can also generate long-distance power laws, but it is not an isolated atom and must be analysed with its threshold density.

\subsection{Action of the Gaussian multiplier}

Smearing both insertions with $\cH_{\tau_Q/2}$ gives the external diagnostic
map
\begin{equation}
 C_{\tau_Q}(Q^2)
 =\cG^{\rm ext}_{\tau_Q}C(Q^2)
 =e^{-\tau_Q Q^2}C(Q^2).
 \label{eq:gaussian-euclidean-correlator}
\end{equation}
If $C(Q^2)=Z_0/Q^2+C_{\rm rem}(Q^2)$ near the origin, with
$Q^2C_{\rm rem}(Q^2)\to0$, then
\begin{equation}
 C_{\tau_Q}(Q^2)=\frac{Z_0}{Q^2}
 +C_{\rm rem}(Q^2)-\tau_Q Z_0+o(1).
 \label{eq:gaussian-pole-expansion}
\end{equation}
The atom residue is unchanged.
Conversely, multiplication of a regular form factor by an entire function cannot generate a pole.
This proves Proposition~\ref{prop:pole-preservation}.

The statement concerns the fixed-correlator multiplier.  A positive KL
measure is attached to the original Wightman correlator; the filtered object
need not itself be a Stieltjes function.  Moreover,
$e^{-\tau_QQ^2}P(Q^2)$ is no longer a contact polynomial;
in position space it is a short-range smearing of a contact distribution.
For this reason the physical contact subtraction and spectral measure are
defined from the unfiltered correlator before
Eq.~\eqref{eq:gaussian-euclidean-correlator} is applied.  Pole preservation
may be checked on the filtered object, whereas physical positivity must be
checked on the original Wightman matrix.

The scale dependence of a full interacting correlator also enters through
the effective vertices and response operator generated by integrating
fluctuations.

\subsection{Non-Hermitian and gauge-theory operators}

For a non-Hermitian operator $O$, positivity is formulated with its centred
insertion $\Theta=O-\langle O\rangle I$ and
$\langle\Theta(x)\Theta^\dagger(0)\rangle$ or, for a mixing block, with the
enlarged connected matrix containing $\Theta$ and $\Theta^\dagger$.

In a covariantly gauge-fixed formulation the state space before the physical quotient has indefinite metric, and elementary gauge-potential or ghost correlators need not possess positive spectral measures.
The positivity used in Eq.~\eqref{eq:matrix-measure-positive} applies to
physical BRST cohomology classes and to the physical-polarisation compression
of Assumption~\ref{ass:physical-polarisation-compression}.
Slavnov--Taylor identities control the gauge dependence and mixing of their
representatives.  Gauge invariance and the weight of a spectral atom are then
assessed in the physical cohomology class and polarisation channel.

%% file: sections-en/app_operator_hierarchy.tex
\section{Compatible operations, observations, and operator realisations}
\label{app:operator-mixing}

The general channel construction requires compatibility of operations,
observations and the equivalence relation used to identify states.
Renormalised QFT insertions provide one realisation of that requirement.
This appendix records both levels and proves
Theorem~\ref{thm:basis-covariance}.

\subsection{Projection, restriction, and quotient compatibility}

For a linear update $B$ and projector $P$, put $Q=I-P$.  Invariance of the
retained subspace requires $QBP=0$, whereas descent through the coarse map
$X\mapsto PX$ requires $PBQ=0$.  The first prevents escape from an
initially retained state; the second prevents hidden states from changing
the next observation.  Indeed, $PX=PY$ means $X-Y\in\Ran Q$, so
$PBX=PBY$ for all such pairs exactly when $PBQ=0$.  A readout descends
through $P$ exactly when $RQ=0$.  Neither a dimension cutoff nor smoothing
proves either identity.

In the finite Neumann protocol the spectrum and bounded heat action are
proved directly in Section~\ref{sec:neumann-example}; multiplication and
projection are explicit bounded operations.  Their finite composition
defines the system studied, rather than asserting nonlinear closure for
arbitrary omitted modes.  For a gauge or other equivalence relation, one
must likewise show that the update preserves equivalent states and that
observations agree on them before passing to the quotient.  The physical
operator construction below implements this statement only in its stated
BRST and separated-point setting.

Internal coordinates and external source coordinates have different
covariance laws even without QFT.  For a bounded state isomorphism $T$
and a nonsingular source--output basis matrix $M$, set
\[
 B'=TBT^{-1},\quad S'=TSM^*,\quad R'=MRT^{-1},\quad D'=MDM^*.
\]
Substitution in $D+R(I-B)^{-1}S$ proves
$\mathscr C'=M\mathscr C M^*$.  Full visible pole locations and orders
and the rank of a residue are unchanged; positive residues remain positive.
Pure internal similarity leaves the output unchanged.  This argument
supplies general response covariance; the proof below additionally
identifies Wightman and renormalised-operator data in a QFT realisation.

\subsection{Mixing and composite sources}

For operators with common exact quantum numbers, renormalisation is a matrix problem,
\begin{equation}
 O_R=ZO_B,
 \qquad
 \mu\frac{dO_R}{d\mu}=-\gamma O_R,
 \qquad
 \gamma=-\left(\mu\frac{dZ}{d\mu}\right)Z^{-1},
 \label{eq:mixing-rge}
\end{equation}
with the sign convention used in Eq.~\eqref{eq:gamma-transform}.
If the source term is $h_R^TO_R=h_B^TO_B$, then
$h_B=Z^Th_R$.
Two or more composite insertions generally require additional local counterterms polynomial in the sources.
Those counterterms are the source-space origin of the contact polynomial $P_{ij}(Q^2)$ in Eq.~\eqref{eq:matrix-kl-euclidean}.

Lorentz representation, internal charges, ghost number, and canonical
dimension can be used to construct finite computational spaces
$\mathcal V_N$.  Closure must be checked rather than inferred from operator
names.  In an effective field theory, increasing the dimension bound enlarges
the space; conclusions at finite $N$ are truncation dependent unless a
power-counting or operator-norm estimate controls the omitted sector.

\subsection{BRST, EOM, and descendant sectors}

In an anomaly-free gauge theory, first form the on-shell space
$\overline{\mathcal V}=\mathcal V/\mathcal I_{\rm EOM}$.
The BRST differential $\mathsf s_{\rm B}$ preserves the EOM ideal and
therefore induces a nilpotent differential on $\overline{\mathcal V}$ under
the standard regularity assumptions.
A convenient off-shell basis used to determine counterterms contains three types of representatives:
\begin{equation}
 O_G\in\Ker\mathsf s_{\rm B},
 \qquad
 O_A=\mathsf s_{\rm B}\Psi,
 \qquad
 O_E=F^a(\phi)\frac{\delta S}{\delta\phi^a}.
 \label{eq:gauge-exact-eom}
\end{equation}
Under the standard assumptions for local composite-operator
renormalisation, the renormalisation matrix can be chosen triangular among
three sectors: physical cohomology classes, BRST-exact operators, and
equation-of-motion operators~\cite{JoglekarLee:1976}.  Section~8.6 of
Ref.~\cite{Barnich:2000} gives the local BRST cohomological formulation.
BRST-exact insertions vanish between physical cohomology classes, while EOM insertions reduce to contact terms in separated-point correlators.
After the mixing problem is closed, these statements justify taking the ghost-number-zero cohomology in Eq.~\eqref{eq:physical-quotient}; they do not permit the corresponding counterterms to be omitted while $Z$ is being determined.

This cohomological quotient is an algebraic construction.  Functional-analytic
claims made with bounded channel projectors or source--readout maps use the
explicit Banach completion and norm included in the channel data of
Definition~\ref{def:physical-channel}.  The required maps must extend
continuously to that completion; no canonical operator norm is presumed.

Infrared singularities can obstruct decoupling at exceptional kinematics.
The on-shell gluonic matrix elements at zero momentum transfer in
Ref.~\cite{CollinsScalise:1994} provide a concrete example.
The classification therefore forms the quotient at nonexceptional kinematics, retains all contact terms, and takes a zero-momentum limit only after the projected renormalised correlator has been constructed.

Total derivatives occupy a different position.
When the boundary term vanishes, integration by parts gives
\begin{equation}
 \int d^dx\,h\,\partial_\mu V^\mu
 =-\int d^dx\,(\partial_\mu h)V^\mu.
 \label{eq:descendant-source-gradient}
\end{equation}
Thus a total derivative is redundant for a constant-source integrated
insertion, or at zero insertion momentum when that limit is well defined.
A nonconstant source generally leaves the source-gradient term on the
right-hand side and does not make the insertion redundant.
For a local insertion carrying momentum $q$, however,
\begin{equation}
 [\partial_\mu V^\mu](q)=-\iu q_\mu V^\mu(q),
 \label{eq:descendant-momentum}
\end{equation}
and the descendant can contribute away from $q=0$ or mix with other local operators.
It is retained whenever the external kinematics resolve it.

\subsection{Symmetry blocks and anomalies}

Let $R(g)$ be the action of an exact global or spacetime symmetry on the renormalised operator space.
A projector $\mathcal P$ defines a closed channel only if its range is stable under the mixing matrix and the unbroken symmetry action.
When the regulator temporarily violates a symmetry, modified Ward or Slavnov--Taylor identities must be solved so that the physical $k\to0$ block obeys the unmodified identity.

An anomalous current is not placed in the same block as an exactly conserved current.
Its divergence contains a renormalised local insertion, and that insertion participates in the mixing problem.
Similarly, explicit mass breaking and spontaneous breaking by the state are separate data: the former modifies the operator identity, while the latter leaves the identity intact and changes its matrix elements.

For the stress tensor, improvement transformations of the form
\begin{equation}
 T_{\mu\nu}\longmapsto T_{\mu\nu}
 +(\partial_\mu\partial_\nu-\eta_{\mu\nu}\Box)X
 \label{eq:stress-improvement}
\end{equation}
with $X$ a renormalised Lorentz scalar preserve conservation.  A physical TT
polarisation satisfies $p_\mu\epsilon^{\mu\nu}=0$ and
$\eta_{\mu\nu}\epsilon^{\mu\nu}=0$, and hence
\[
 \epsilon^{\mu\nu}(-p_\mu p_\nu+\eta_{\mu\nu}p^2)=0.
\]
The improvement therefore vanishes under TT compression, including at the
massless endpoint when the physical helicity compression of
Appendix~\ref{app:kl} exists.  Contact terms are treated using the same
subtraction convention.  Spin-zero contributions and the overlaps of an
individual representative with physical states need not be invariant.

Let $D_X=(\partial_\mu\partial_\nu-\eta_{\mu\nu}\Box)X$ be retained among
the representatives of a closed block.  When
$(T,D_X,\ldots)\mapsto(T+D_X,D_X,\ldots)$ is a finite nonsingular constant
basis change on that block, Theorem~\ref{thm:basis-covariance} preserves the
pole locations and orders of the full correlation matrix and the ranks of
its isolated residues.  No such invariance is asserted for a single entry
after the descendant sector has been discarded.

\subsection{Proof of external and internal covariance}

\begin{proof}[Proof of Theorem~\ref{thm:basis-covariance}]
Let $O'=MO$ with constant, finite, and nonsingular $M$.
In the same reference vacuum the centred insertions obey $\Theta'=M\Theta$.
The connected Wightman matrices therefore satisfy
\begin{equation}
 W'(x)=MW(x)M^\dagger.
 \label{eq:wightman-congruence}
\end{equation}
Uniqueness of the distributional spectral decomposition then gives
$d\rho'=M\,d\rho\,M^\dagger$, and analytic continuation gives
$C'=MCM^\dagger$ including the transformed contact polynomial.

Suppose near $z=z_0$ the correlator matrix has a Laurent expansion
\begin{equation}
 C(z)=\sum_{r=-p}^{\infty}C_r(z-z_0)^r,
 \qquad C_{-p}\neq0.
 \label{eq:matrix-laurent}
\end{equation}
Then
$C'(z)=\sum_r MC_rM^\dagger(z-z_0)^r$.
Because $M$ and $M^\dagger$ are invertible,
$MC_{-p}M^\dagger\neq0$.
The pole location and order are unchanged.
For a simple spectral pole, $Z'=MZM^\dagger$ has the same rank as $Z$, and
\begin{equation}
 c^\dagger Z'c=(M^\dagger c)^\dagger Z(M^\dagger c)\geq0
 \label{eq:residue-positivity-transform}
\end{equation}
whenever $Z$ is positive semidefinite.

A linear symmetry constraint $DC=0$ is carried to
$(DM^{-1})C'=0$ after multiplication on the right by $M^\dagger$.
Thus the dimensions of its forbidden, constrained, and allowed subspaces are unchanged.

Now let $T:\mathcal X_{\cC}\to\mathcal X'_{\cC}$ be a bounded response-state
isomorphism.  Direct substitution gives
\begin{equation}
 \mathsf R_{X,T}(I-\cB_T)^{-1}\mathsf S_{X,T}
 =\mathsf R_XT^{-1}T(I-\cB)^{-1}T^{-1}T\mathsf S_X
 =\mathsf R_X(I-\cB)^{-1}\mathsf S_X.
 \label{eq:internal-coordinate-invariance-proof}
\end{equation}
Thus $T$ changes internal coordinates by similarity but does not transform
the physical output matrix.  The response projector transforms as
$P_{X,T}=TP_XT^{-1}$, so
$Q_{X,T}\cB_TP_{X,T}=T(Q_X\cB P_X)T^{-1}$; response closure is invariant.
Combining $T$ with the external basis change
$M$ instead gives
\begin{align}
 &\mathsf R'_X(I-\cB')^{-1}\mathsf S'_X \notag\\
 &\qquad
 =M\mathsf R_XT^{-1}T(I-\cB)^{-1}T^{-1}T\mathsf S_XM^\dagger
 =M\mathsf R_X(I-\cB)^{-1}\mathsf S_XM^\dagger,
 \label{eq:combined-covariance-proof}
\end{align}
and the same congruence holds for $C_{\rm reg}$.

Similarity preserves the response spectrum and algebraic multiplicities.
At a simple critical eigenvalue,
$|r'\rangle=T|r\rangle$ and
$\langle\ell'|=\langle\ell|T^{-1}$, whence
\begin{equation}
 \mathsf R'_X|r'\rangle=M\mathsf R_X|r\rangle,
 \qquad
 \langle\ell'|\mathsf S'_X
 =\langle\ell|\mathsf S_XM^\dagger.
 \label{eq:combined-overlap-transform}
\end{equation}
Invertibility of $M$ preserves vanishing or nonvanishing of both overlaps,
and Eq.~\eqref{eq:matrix-algebraic-residue} gives
$Z'_{\rm alg}=MZ_{\rm alg}M^\dagger$.
These observations prove Theorem~\ref{thm:basis-covariance}.
\end{proof}

\subsection{Scale-dependent bases and finite schemes}

For $O'=M(\mu)O$, differentiation of Eq.~\eqref{eq:basis-change} and use of Eq.~\eqref{eq:mixing-rge} give
\begin{align}
 \mu\frac{dO'}{d\mu}
 &=-\left[M\gamma M^{-1}
 -\left(\mu\frac{dM}{d\mu}\right)M^{-1}\right]O',
 \label{eq:gamma-transform-proof}
\end{align}
which proves Eq.~\eqref{eq:gamma-transform}.
Physical poles of the exact correlator remain unchanged by a finite nonsingular scheme transformation, but the coordinates of the residue, Wilson coefficients, and anomalous-dimension matrix change.
At finite truncation, two implementations need not be related by an exact
similarity transformation because the transformed operator basis can have
components outside $P_N\mathcal V_{\rm phys}$ and its realised response can
leave the retained range of a response projector $P_{X,N}$.  The off-block
response operators
\begin{equation}
 \Delta_N^{\rm out}=(I_X-P_{X,N})\cB P_{X,N},
 \qquad
 \Delta_N^{\rm in}=P_{X,N}\cB(I_X-P_{X,N})
 \label{eq:truncation-coupling-defects}
\end{equation}
measure the direct coupling omitted by the compression.  If they do not
vanish, either the Schur complement
Eq.~\eqref{eq:feshbach-response} or a convergence argument such as
Proposition~\ref{prop:riesz-convergence} is required.  Regulator and basis
variation are approximation diagnostics, not proofs of exact invariance.

%% file: sections-en/app_kernel_proof.tex
\section{Descent, Gaussian, kernel, and error proofs}
\label{app:kernel-proof}

This appendix proves Proposition~\ref{prop:banach-descent}, records the
four derivatives of Schur elimination, and proves
Proposition~\ref{prop:gaussian-unique},
Theorems~\ref{thm:gaussian-tensor-lift} and~\ref{thm:fixed-transfer},
Propositions~\ref{prop:response-factorisation} and
\ref{prop:closure-stability}, and
Theorem~\ref{thm:kernel-pole}.  The Gaussian statements concern specified
linear averaging maps.  The response statements concern an independently
defined interacting map and require the analytic hypotheses stated in the
main text.

\subsection{Regular descent through a Banach quotient}

\begin{proof}[Proof of Proposition~\ref{prop:banach-descent}]
The open mapping theorem gives a constant $a>0$ such that every
$v\in\mathcal Y$ has a lift $u\in\mathcal X$ with $Cu=v$ and
$\|u\|\leq a\|v\|$ (enlarging $a$ avoids requiring a norm-minimising
lift).  It also makes $C(U)$ open.  Fibre consistency defines $\bar F$
uniquely.  If $k\in\Ker C$, differentiating
$CF(x+tk,h)=CF(x,h)$ for small $t$ yields $CD_xF(x,h)k=0$.
Consequently
\[
 A_{x,h}(Cu):=CD_xF(x,h)u
\]
is well defined and bounded by $a\|C\|\|D_xF(x,h)\|$ on $\mathcal Y$.
For small $v$, choose a small lift $u$ so $x+u\in U$.  The joint
Fr\'echet expansion gives
\[
 \bar F(Cx+v,h+j)-\bar F(Cx,h)
 =A_{x,h}v+CD_hF(x,h)j+o(\|v\|+\|j\|).
\]
Thus $\bar F$ is differentiable, with the asserted derivatives; their
uniqueness makes them independent of the representative $x$.  To prove
continuity near $Cx$, lift a nearby increment with norm at most
$a\|v\|$ and use continuity of $D F$ at $x$.  The same lift bound
controls the norm of the induced derivative, giving operator-norm
continuity without a continuous linear section.  For an observation,
differentiate $O(x+tk,h)=O(x,h)$ and repeat the argument with $O$ in
place of $CF$.
\end{proof}

This proof also identifies the bounded isomorphism
$\widehat C:\mathcal X/\Ker C\to\mathcal Y$ supplied by the quotient
norm.  Its inverse is an inverse on equivalence classes, not an inverse of
the information-losing map $C$.  The error identity of
Theorem~\ref{thm:irreversible-error} does not even require this quotient
realisation.

\subsection{Differentiating exact block elimination}

For fixed complementary projectors, abbreviate the blocks by $B_{ij}$
and put $H=(I-B_{22})^{-1}$.  A dot denotes a total derivative along the
declared parameter and reference branch.  Then $\dot H=H\dot B_{22}H$,
and the four Schur derivatives are
\begin{align}
 \dot B_{\rm eff}
 &=\dot B_{11}+\dot B_{12}HB_{21}
   +B_{12}\dot H B_{21}+B_{12}H\dot B_{21},\nonumber\\
 \dot S_{\rm eff}
 &=\dot S_1+\dot B_{12}HS_2+B_{12}\dot H S_2+B_{12}H\dot S_2,
 \nonumber\\
 \dot R_{\rm eff}
 &=\dot R_1+\dot R_2HB_{21}+R_2\dot H B_{21}+R_2H\dot B_{21},
 \nonumber\\
 \dot D_{\rm eff}
 &=\dot D+\dot R_2HS_2+R_2\dot H S_2+R_2H\dot S_2.
 \label{eq:four-schur-derivatives}
\end{align}
They follow by differentiating $(I-B_{22})H=I$ and the four product
formulae.  If the projectors depend on the parameter, first differentiate
the projected blocks themselves, including every projector derivative.
Together with $\dot B=\partial_aB+D_XB\,\dot X_*$ and the corresponding
identities for $S,R,D$, these formulae retain the actual reference-branch
dependence.  The product rule applies to all four maps, including their
source and direct-response contributions.

\subsection{Uniqueness of the Gaussian semigroup}

\begin{proof}[Proof of Proposition~\ref{prop:gaussian-unique}]
Write
\begin{equation}
 \varphi_\tau(q)=\int_{\mathbb R^d}e^{\iu q\cdot x}\,d\mu_\tau(x).
 \label{eq:characteristic-semigroup}
\end{equation}
Weak continuity and convolution give
$\varphi_{\tau+s}(q)=\varphi_\tau(q)\varphi_s(q)$ and
$\varphi_\tau(q)\to1$ as $\tau\downarrow0$.  Orthogonal invariance includes
$x\mapsto-x$, so $\varphi_\tau$ is real.  Moreover,
$\varphi_\tau(q)=\varphi_{\tau/(2n)}(q)^{2n}\geq0$; it cannot vanish, since
$\varphi_{\tau/n}(q)\to1$ while
$\varphi_\tau(q)=\varphi_{\tau/n}(q)^n$.  The continuous scalar semigroup
therefore has the form
\begin{equation}
 \varphi_\tau(q)=e^{-\tau\psi(q)}
 \label{eq:characteristic-exponent}
\end{equation}
with $\psi(q)\geq0$ continuous.

Square-root self-similarity also gives
$\varphi_\tau(q)=\varphi_1(\sqrt\tau q)$.  Comparison with
Eq.~\eqref{eq:characteristic-exponent} yields
 $\psi(rq)=r^2\psi(q)$ for $r>0$.  Orthogonal invariance then gives
$\psi(q)=D|q|^2$.  Finite nonzero covariance makes $D$ finite and strictly
positive.  Hence
\begin{equation}
 \widehat\mu_\tau(q)=e^{-D\tau|q|^2},
 \label{eq:gaussian-characteristic-proof}
\end{equation}
which proves Proposition~\ref{prop:gaussian-unique}.
\end{proof}

The exact square-root scaling fixes the homogeneity degree at two;
removing the second-moment assumption alone does not change that conclusion.
The stated covariance condition fixes a finite, nonzero scale.  Without
orthogonal invariance, the Gaussian covariance need not be isotropic.
Non-Gaussian stable semigroups require a different scaling exponent,
$\tau^{1/\alpha}$ with $0<\alpha<2$, and relaxation of the
finite-second-moment assumption.  Dropping exact self-similarity permits
symmetric jump components, including ones of finite variance; the retained
orthogonal invariance excludes a nonzero drift.

\subsection{Tensor lift and suppression away from zero transfer}

\begin{proof}[Proof of Theorems~\ref{thm:gaussian-tensor-lift}
and~\ref{thm:fixed-transfer}]
The Fourier transform converts convolution in the $i$th coordinate into
multiplication by $e^{-a_i\tau|p_i|^2}$.  These smooth multipliers and all
their derivatives are polynomially bounded, so they act continuously on
tempered distributions.  Since the coordinate convolutions commute, their
tensor product gives
\begin{equation}
 \prod_{i=1}^ne^{-a_i\tau|p_i|^2}
 =\exp\!\left[-\tau\sum_{i=1}^na_i|p_i|^2\right].
 \label{eq:tensor-product-proof}
\end{equation}
For the routed statement, $\widehat F_{\rm red}$ is part of the data after the
overall delta distribution has been factored out; no pullback of an arbitrary
distribution to $P=0$ is taken.  Multiplication of
$\delta^{(d)}(P)\widehat F_{\rm red}(\boldsymbol\xi)$ by the smooth Gaussian
induces on $P=0$ the multiplier obtained from the stated routing
$p=L\boldsymbol\xi$.  Hence
$M_L=L^{\mathsf T}\diag(a_1I_d,\ldots,a_nI_d)L$.  This matrix is positive
semidefinite.  On a subspace $V$ on which $M_L\succeq cI$, one has
$\boldsymbol\xi^{\mathsf T}M_L\boldsymbol\xi
\geq c\|\boldsymbol\xi\|^2$.  This proves
Theorem~\ref{thm:gaussian-tensor-lift}, including its bound and the absence
of a conclusion on $\Ker M_L$.

For Theorem~\ref{thm:fixed-transfer}(i),
$q^{\mathsf T}Mq\geq\delta^2$ on $\Omega_\delta$, and hence
\begin{equation}
  \|e^{-\tau q^{\mathsf T}Mq}A\|_{L^\infty(\Omega_\delta)}
 \leq e^{-\tau\delta^2}
 \|A\|_{L^\infty(\Omega_\delta)}.
 \label{eq:uniform-proof}
\end{equation}
For part~(ii), put
$\chi_\tau(q)=e^{-\tau q^{\mathsf T}Mq}\chi(q)$.  Each derivative of
$\chi_\tau$ is a finite sum of a polynomial in $q$ and $\tau$, multiplied by
$e^{-\tau q^{\mathsf T}Mq}$ and a derivative of $\chi$.  The support is
fixed inside $\Omega_\delta$, so every Schwartz seminorm is bounded by a
polynomial in $\tau$ times $e^{-\tau\delta^2}$ and tends to zero.  Continuity
of the tempered distribution $A$ gives
\begin{equation}
 \langle A_\tau,\chi\rangle
 =\langle A,\chi_\tau\rangle\longrightarrow0.
 \label{eq:distribution-proof}
\end{equation}
Finally, Eq.~\eqref{eq:amplitude-growth-condition} gives
$\|A_\tau^{\rm full}\|\leq Ce^{-\tau(\delta^2-a)}$, proving part~(iii).
\end{proof}

\subsection{Response factorisation and the closure defect}

\begin{proof}[Proof of Propositions~\ref{prop:response-factorisation}
and~\ref{prop:closure-stability}]
Apply the Banach-space implicit-function theorem to
$G(X,J)=X-\mathfrak F_{\tau_q}(X;z,\eta)
-\mathsf S_{X,\tau_Q}(z)J$.  The derivative $D_XG=I-\cB$ is invertible by
hypothesis, so a unique differentiable solution branch exists locally.
Differentiating Eq.~\eqref{eq:sourced-self-consistency} at $J=0$, with
$\delta X=D_JX\,\delta J$ and
$\cB=D_X\mathfrak F_{\tau_q}(X_*;z,\eta)$, one obtains
\begin{equation}
 (I-\cB)\delta X=\mathsf S_{X,\tau_Q}\delta J.
 \label{eq:linearised-self-consistency}
\end{equation}
Where $I-\cB$ is invertible, composing its solution with the readout and
differentiating the direct term in
Eq.~\eqref{eq:measured-response-output} gives
Eq.~\eqref{eq:coarse-grained-response-factorisation}.  Under
Assumption~\ref{ass:fredholm}, the same formula has the meromorphic
continuation used below.  This proves
Proposition~\ref{prop:response-factorisation}.

Assume $\mathcal C_{\tau_q}$ is bounded and linear and
that the two maps in Eq.~\eqref{eq:coarse-graining-closure-defect} are
Fr\'echet differentiable.  The chain rule gives
\begin{equation}
 D_XE_{\tau_q}(X;z)
 =\mathcal C_{\tau_q}D_X\mathfrak F(X;z)
 -D_X\mathfrak F_{\tau_q}(\mathcal C_{\tau_q}X;z)
  \mathcal C_{\tau_q}.
 \label{eq:closure-defect-derivative}
\end{equation}
Thus an exact commuting coarse-graining has the response intertwining
Eq.~\eqref{eq:response-intertwining}; in an approximate closure,
$D_XE_{\tau_q}$ measures its failure at the same base point.  Compactness,
analyticity, and threshold
isolation are additional properties of the realised response family.

For Proposition~\ref{prop:closure-stability},
Eq.~\eqref{eq:reference-branch-matching} and
Eq.~\eqref{eq:transported-fixed-point-residual} first give
$\mathfrak F_{\tau_q}(\bar X_*;z)=\bar X_*$.  Thus the derivative at
$\bar X_*$ is the response Jacobian at an actual coarse solution, rather than
at a transported trial point.  On the comparison spaces, multiply
Eq.~\eqref{eq:closure-defect-derivative} on the right by
$\mathcal C_{\tau_q}^{-1}$ to obtain
\begin{equation}
 \bigl[D_XE_{\tau_q}(X_*(z);z)|_Y\bigr]
 \mathcal C_{\tau_q}^{-1}
 =\cB_{\rm ind}(z)-\cB_{\tau_q}(z)=-\Delta\cB(z),
 \label{eq:closure-kernel-error-proof}
\end{equation}
and proves the first norm estimate.

For $R(\zeta,z)=(\zeta-\cB_{\rm ind}(z))^{-1}$, the condition
$K_\gamma\epsilon<1$ permits the Neumann expansion of the perturbed
resolvent.  The resolvent identity gives, uniformly on $K\times\gamma$,
\begin{equation}
 \|R_{\tau_q}(\zeta,z)-R(\zeta,z)\|
 \leq\frac{K_\gamma^2\epsilon}{1-K_\gamma\epsilon}.
 \label{eq:resolvent-defect-bound-proof}
\end{equation}
Integrating
$(2\pi\iu)^{-1}[R_{\tau_q}(\zeta,z)-R(\zeta,z)]\,d\zeta$
around $\gamma$ proves Eq.~\eqref{eq:riesz-defect-bound}.  The same resolvent
estimate holds for $\cB_{\rm ind}+t\Delta\cB$, $0\leq t\leq1$.
The associated Riesz projections form a norm-continuous path of finite-rank
idempotents and therefore have constant rank.  Since the rank at $t=0$ is
one, $\Pi_{\tau_q}$ also has rank one.
The finite-rank trace identities
\begin{equation}
 \lambda_{\rm ind}=\operatorname{tr}(\cB_{\rm ind}\Pi_{\rm ind}),
 \qquad
 \lambda_{\tau_q}=\operatorname{tr}(\cB_{\tau_q}\Pi_{\tau_q})
 \label{eq:eigenvalue-riesz-trace}
\end{equation}
then imply Eq.~\eqref{eq:eigenvalue-defect-bound}: the term
$\Delta\cB\Pi_{\tau_q}$ has rank at most one, while
$\cB_{\rm ind}(\Pi_{\tau_q}-\Pi_{\rm ind})$ has rank at most two.

On $\partial D_*$, Eq.~\eqref{eq:unperturbed-root-lower-bound} gives
$|\lambda_{\rm ind}(z)-1|\geq2d_\lambda$, whereas
Eq.~\eqref{eq:eigenvalue-defect-bound} bounds the perturbation by
$d_\lambda$.  Rouch\'e's theorem therefore gives exactly one zero
$\widetilde z_*$ of $\lambda_{\tau_q}-1$ in $D_*$ and proves
Eq.~\eqref{eq:critical-root-displacement}.

At a common value of $z$, insert and subtract the two intermediate products
$\mathsf R_{\rm ind}\Pi_{\tau_q}\widetilde{\mathsf S}_X$ and
$\mathsf R_{\rm ind}\Pi_{\rm ind}\widetilde{\mathsf S}_X$.  The triangle
inequality, Eq.~\eqref{eq:riesz-defect-bound}, and the uniform constants in
Eqs.~\eqref{eq:closure-map-error-constants}--%
\eqref{eq:same-point-numerator-defect} give
\begin{equation}
 \sup_{z\in D_*}
 \|\widetilde{\mathcal N}(z)-\mathcal N_{\rm ind}(z)\|
 \leq d_{\rm num}^{(0)}.
 \label{eq:residue-numerator-error-proof}
\end{equation}
The fundamental theorem of calculus along the line segment from $z_*$ to
$\widetilde z_*$ gives
\begin{equation}
 \|\widetilde{\mathcal N}(\widetilde z_*)
       -\widetilde{\mathcal N}(z_*)\|
 \leq L_{\rm num}\delta_z.
 \label{eq:numerator-root-transport-proof}
\end{equation}
Combining the last two estimates yields
\begin{equation}
 \|\widetilde{\mathcal N}(\widetilde z_*)
       -\mathcal N_{\rm ind}(z_*)\|
 \leq D_{\rm num}.
 \label{eq:cross-root-numerator-proof}
\end{equation}
The same argument applied to $\partial_z\lambda_{\tau_q}$ gives
\begin{equation}
 |\partial_z\lambda_{\tau_q}(\widetilde z_*)
   -\partial_z\lambda_{\rm ind}(z_*)|
 \leq d_{\lambda'}^{(0)}+L_{\lambda'}\delta_z
 =D_{\lambda'}.
 \label{eq:cross-root-slope-proof}
\end{equation}
Since $D_{\lambda'}<\alpha/2$,
$|\partial_z\lambda_{\tau_q}(\widetilde z_*)|\geq\alpha/2$.
The elementary difference-of-quotients estimate now gives
\begin{equation}
 \left\|-\frac{\widetilde{\mathcal N}(\widetilde z_*)}
                  {\partial_z\lambda_{\tau_q}(\widetilde z_*)}
          +\frac{\mathcal N_{\rm ind}(z_*)}
                  {\partial_z\lambda_{\rm ind}(z_*)}\right\|
 \leq\frac{2D_{\rm num}}{\alpha}
      +\frac{2\|\mathcal N_{\rm ind}(z_*)\|D_{\lambda'}}{\alpha^2},
 \label{eq:residue-error-proof}
\end{equation}
which is Eq.~\eqref{eq:residue-defect-bound}.
This proves Proposition~\ref{prop:closure-stability}.
\end{proof}

\subsection{Fredholm resolvent near a simple spectral zero}

\begin{proof}[Proof of Theorem~\ref{thm:kernel-pole}]
Under Assumption~\ref{ass:fredholm}, the analytic Fredholm theorem implies
that $[I-\cB(z)]^{-1}$ is a meromorphic operator-valued function on
$U$; see Chapter~VII of Ref.~\cite{Kato:1995}.  If
$1\notin\spec\cB(0)$, bounded invertibility is open
in operator norm.  The inverse is analytic near $z=0$, and insertion into
Eq.~\eqref{eq:coarse-grained-response-factorisation} proves
Theorem~\ref{thm:kernel-pole}(i).

Suppose now that $\lambda(0)=1$ is algebraically simple.  Analytic
perturbation theory supplies, after reducing $U$ if necessary, analytic
continuations $\lambda(z)$, $|r(z)\rangle$, and $\langle\ell(z)|$ satisfying
\begin{equation}
 \cB(z)|r(z)\rangle=\lambda(z)|r(z)\rangle,
 \quad
 \langle\ell(z)|\cB(z)=\lambda(z)\langle\ell(z)|,
 \quad
 \langle\ell(z)|r(z)\rangle=1.
 \label{eq:left-right-eigenvectors}
\end{equation}
The associated Riesz projection is
\begin{equation}
 \Pi_\lambda(z)=|r(z)\rangle\langle\ell(z)|.
 \label{eq:riesz-projector}
\end{equation}
On the complementary invariant subspace, $I-\cB(z)$ remains invertible near
the origin.  Hence
\begin{equation}
 [I-\cB(z)]^{-1}
 =\frac{\Pi_\lambda(z)}{1-\lambda(z)}+\mathcal R_\lambda(z),
 \label{eq:fredholm-resolvent-decomposition}
\end{equation}
where $\mathcal R_\lambda$ is analytic.  The simple-zero condition gives
\begin{equation}
 1-\lambda(z)=-\partial_z\lambda(0)z+O(z^2).
 \label{eq:eigenvalue-crossing}
\end{equation}
Substitution in Eq.~\eqref{eq:coarse-grained-response-factorisation} yields
\begin{equation}
 C_{\rm sing}(z)
 =-\frac{\mathsf R_X(0)|r(0)\rangle
            \langle\ell(0)|\mathsf S_X(0)}
           {\partial_z\lambda(0)z}.
 \label{eq:fredholm-residue}
\end{equation}
The outer product is nonzero exactly when both displayed factors are
nonzero.  This proves part~(ii) and Eq.~\eqref{eq:matrix-algebraic-residue}.

Equation~\eqref{eq:fredholm-residue} is an algebraic matrix residue in the
chosen response convention and proves part~(iii).  Its identification with a
Wightman atom follows under the matching hypotheses of
Theorem~\ref{thm:fredholm-kl-bridge}.  For an approximate nonnormal kernel,
those hypotheses must be tested independently of the critical eigenvalue.
\end{proof}

\subsection{Uniform truncation and Riesz projections}

The finite-dimensionality of a discretisation proves only compactness of the
discretised problem.  The following standard consequence of the resolvent
identity states a sufficient continuum test.

\begin{proposition}[Convergence of an isolated spectral subspace]
\label{prop:riesz-convergence}
Let $K\Subset U$ be compact, and let $\widetilde\cB_N(z)$ and $\cB(z)$ act on
one Banach space and be operator-norm continuous on a neighbourhood of $K$.
Suppose
\begin{equation}
 \sup_{z\in K}\|\widetilde\cB_N(z)-\cB(z)\|\longrightarrow0.
 \label{eq:kernel-convergence-proof}
\end{equation}
Let $\gamma$ be a closed contour in the spectral $\zeta$ plane which lies in
the resolvent set of $\cB(z)$ for every $z\in K$ and is uniformly separated
from its spectrum.  Then, for all sufficiently large $N$, $\gamma$ also lies
in the resolvent set of $\widetilde\cB_N(z)$ and
\begin{equation}
 \sup_{z\in K}
 \|\Pi_{\gamma,N}(z)-\Pi_\gamma(z)\|\longrightarrow0.
 \label{eq:riesz-projection-convergence}
\end{equation}
For convergence of critical roots and residues, assume additionally that
the kernel families are analytic on a common open neighbourhood of a closed
disk $D\subset\operatorname{int}K$.  Let $\gamma$ enclose exactly one
algebraically simple eigenvalue $\lambda(z)$ and no other spectrum for every $z\in D$, and
suppose $z_*\in\operatorname{int}D$ is the only zero of $\lambda-1$ in
$D$, with
\[
 \lambda(z_*)=1,\qquad \lambda'(z_*)\ne0.
\]
Assume the kernel derivatives and the analytic source and readout maps
converge uniformly on $D$ in their specified operator norms.  Then, for
all sufficiently large $N$, the corresponding branch $\lambda_N-1$ has
exactly one zero $z_{*,N}$ in $D$, this zero is simple, and
$z_{*,N}\to z_*$.  The algebraic residues at their respective roots obey
\[
 -\frac{\mathsf R_{X,N}(z_{*,N})\Pi_{\gamma,N}(z_{*,N})
              \mathsf S_{X,N}(z_{*,N})}{\lambda_N'(z_{*,N})}
 \longrightarrow
 -\frac{\mathsf R_X(z_*)\Pi_\gamma(z_*)\mathsf S_X(z_*)}
              {\lambda'(z_*)}.
\]
This is a residue of the selected critical response; analytic direct
terms do not contribute.  Algebraic simplicity in the spectral variable
alone does not give the required first-order zero in $z$.
\end{proposition}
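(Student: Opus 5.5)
The plan is to follow the pattern already used in the proof of Proposition~\ref{prop:closure-stability}, with $\widetilde\cB_N-\cB$ in place of $\Delta\cB$. First, the resolvent. Because $\gamma\times K$ is compact and $(\zeta,z)\mapsto(\zeta-\cB(z))^{-1}$ is norm continuous on it, the constant
\[
K_\gamma:=\sup_{K\times\gamma}\|(\zeta-\cB(z))^{-1}\|
\]
is finite. Put $\epsilon_N:=\sup_K\|\widetilde\cB_N-\cB\|$. Once $K_\gamma\epsilon_N<1$, the Neumann series places $\gamma$ in the resolvent set of $\widetilde\cB_N(z)$, and the resolvent difference is bounded by $K_\gamma^2\epsilon_N/(1-K_\gamma\epsilon_N)$. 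Integrating this bound over $\gamma$ gives
\[
\sup_K\|\Pi_{\gamma,N}-\Pi_\gamma\|\le\frac{L_\gamma}{2\pi}\,\frac{K_\gamma^2\epsilon_N}{1-K_\gamma\epsilon_N}\to0,
\]
which is Eq.~\eqref{eq:riesz-projection-convergence}.

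Second, the eigenvalue branch on $D$. The homotopy $\cB+t(\widetilde\cB_N-\cB)$ gives a norm-continuous path of idempotents, so $\Pi_{\gamma,N}$ has constant rank one. The branch is then
\[
\lambda_N=\operatorname{tr}(\widetilde\cB_N\Pi_{\gamma,N}),
\]
and it is analytic on a neighbourhood of $D$, because the resolvent is analytic in $z$ there. The rank-at-most-two trace estimate gives $\lambda_N\to\lambda$ uniformly on $D$.

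For the derivatives, I will not differentiate the trace formula directly. Instead, since $D$ lies in the interior of $K$, I will apply Cauchy's estimates on a slightly larger disk inside $K$, where uniform convergence of analytic functions also holds. This gives $\lambda_N'\to\lambda'$ uniformly on $D$. The same Cauchy argument gives convergence of $\partial_z\Pi_{\gamma,N}$, or one can use the assumed convergence of the kernel derivatives.

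Third, the root. Since $z_*$ is the only zero of $\lambda-1$ in $D$ and lies in the interior, $\lambda-1$ has no zero on $\partial D$. Continuity and compactness then give $m:=\min_{\partial D}|\lambda-1|>0$. Once $\sup_D|\lambda_N-\lambda|<m$, Rouch\'e's theorem gives exactly one zero $z_{*,N}$ in $D$, counted with multiplicity, so this zero is simple. Repeating the argument on small disks around $z_*$ shows $z_{*,N}\to z_*$. Uniform convergence of $\lambda_N'$ together with continuity of $\lambda'$ then yields $\lambda_N'(z_{*,N})\to\lambda'(z_*)\neq0$.

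Finally, the residue. Here I will use the assumption that the source and readout maps converge uniformly on $D$. The numerators $\mathcal N_N:=\mathsf R_{X,N}\Pi_{\gamma,N}\mathsf S_{X,N}$ then converge uniformly on $D$ to $\mathcal N:=\mathsf R_X\Pi_\gamma\mathsf S_X$, and $\mathcal N$ is continuous. Evaluating at the moving points,
\[
\|\mathcal N_N(z_{*,N})-\mathcal N(z_*)\|\le\sup_D\|\mathcal N_N-\mathcal N\|+\|\mathcal N(z_{*,N})-\mathcal N(z_*)\|\to0 .
\]
The quotient estimate of Eq.~\eqref{eq:residue-error-proof} then gives convergence of the residues. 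The identification of these limits as residues of the response relies on the decomposition~\eqref{eq:fredholm-resolvent-decomposition} at each root, which needs $\lambda_N'(z_{*,N})\neq0$. Analytic direct terms and the analytic complementary resolvent contribute no residue.

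The main obstacle is the root step. It requires a uniform positive lower bound for $|\lambda-1|$ on $\partial D$. It also requires care that the analyticity of $\lambda_N$ comes from the Riesz trace formula rather than from analytic continuation of an eigenvalue.
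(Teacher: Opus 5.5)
Your proposal is correct and follows the paper's proof essentially step for step. Both arguments use a Neumann-series resolvent bound on $K\times\gamma$ and contour integration for the projections, then constant rank one, uniform convergence of the eigenvalue branch, Rouch\'e's theorem on $\partial D$ and on shrinking disks, and convergence of numerators and slopes at the moving roots. The differences are minor. You make the paper's appeal to analytic perturbation theory explicit through $\lambda_N=\operatorname{tr}(\widetilde\cB_N\Pi_{\gamma,N})$ and the rank-two trace estimate. You also obtain $\lambda_N'\to\lambda'$ from Cauchy estimates on a slightly larger disk, which must lie in both $K$ and the common analyticity neighbourhood; this shows the assumed kernel-derivative convergence is not actually needed.
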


\begin{proof}
Indeed, the exact resolvents are uniformly bounded on the compact set
$K\times\gamma$.  Equation~\eqref{eq:kernel-convergence-proof} and a Neumann
series give existence and uniform convergence of the approximate resolvents.
Contour integration then proves
Eq.~\eqref{eq:riesz-projection-convergence}.  The explicit norm-continuity
hypothesis gives the asserted compact-set bound even for nonnormal kernels.
For the additional assertion, nearby projections have the same rank one.
Analytic perturbation on these spectral subspaces gives uniform convergence
of $\lambda_N$ to $\lambda$ and, using the kernel-derivative convergence,
of $\lambda_N'$ to $\lambda'$~\cite{Kato:1995,Chatelin:1983}.
Since $\lambda-1$ is nonzero on $\partial D$, Rouch\'e's theorem gives
exactly one approximate zero in $D$, counted with multiplicity, hence a
simple zero.  Applying the same argument to arbitrarily small disks about
$z_*$ gives $z_{*,N}\to z_*$.  Uniform derivative convergence and
continuity imply $\lambda_N'(z_{*,N})\to\lambda'(z_*)\ne0$; the
denominators are therefore bounded away from zero for large $N$.
Uniform convergence of the projections, sources and readouts gives
convergence of the numerators at these moving roots, proving the residue
limit.  This is the qualitative version of the cross-root comparison in
Proposition~\ref{prop:closure-stability}, specifically
Eqs.~\eqref{eq:critical-root-displacement} and~\eqref{eq:residue-defect-bound}.
\end{proof}

The norm hypothesis in Eq.~\eqref{eq:kernel-convergence-proof} can be
certified either by an explicit remainder estimate
$\sup_{z\in K}\|\cB(z)-\widetilde\cB_N(z)\|\leq r_N(K)$ with
$r_N(K)\to0$, or by the following a posteriori construction when all
approximants act on the same Banach space.

\begin{proposition}[Summable-increment convergence test]
\label{prop:summable-increment-test}
Let $U\subset\mathbb C$ be open and let
$\widetilde\cB_N:U\to\mathcal K(\mathcal X_{\cC})$ be analytic compact-operator
families.  Suppose that for every $K\Subset U$ there are certified bounds
$\varepsilon_N(K)\geq0$ and explicitly computable numbers $r_N(K)\geq0$
with certified decay to zero such that
\begin{equation}
 \begin{aligned}
 \sup_{z\in K}\|\widetilde\cB_{N+1}(z)-\widetilde\cB_N(z)\|
 &\leq\varepsilon_N(K),\\
 \sum_{n=N}^{\infty}\varepsilon_n(K)
 &\leq r_N(K),\qquad r_N(K)\longrightarrow0.
 \end{aligned}
 \label{eq:summable-increment-bound}
\end{equation}
Then the sequence converges locally uniformly to an analytic compact family
$\cB_\infty$, with the computable tail estimate
\begin{equation}
 \sup_{z\in K}\|\cB_\infty(z)-\widetilde\cB_N(z)\|
 \leq\sum_{n=N}^{\infty}\varepsilon_n(K)\leq r_N(K).
 \label{eq:summable-increment-tail}
\end{equation}
If a consistency argument for the truncation identifies
$\cB_\infty=D_X\mathfrak F(X_*;z,\eta)$, this tail bounds the truncation error in
Eq.~\eqref{eq:kernel-convergence-proof}.  The same construction applies to
the source and readout maps when their increments have corresponding
effective tail bounds.  Local uniform convergence of the analytic
kernel families also yields convergence of their $z$ derivatives on smaller
compact subsets of $U$.
\end{proposition}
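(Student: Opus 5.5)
The plan is to run the classical completeness argument in the Banach space of bounded operator-valued functions on each compact set, and then recover analyticity and compactness from closedness properties that are preserved under uniform limits. Fix $K\Subset U$. The first step is to show that $(\widetilde\cB_N)$ is uniformly Cauchy on $K$. For $M>N$ the telescoping sum and the triangle inequality give
\[
 \sup_{z\in K}\|\widetilde\cB_M(z)-\widetilde\cB_N(z)\|
 \leq\sum_{n=N}^{M-1}\varepsilon_n(K)\leq r_N(K),
\]
and this tends to zero by hypothesis. Completeness of $\mathcal L(\mathcal X_{\cC})$ in operator norm then defines a pointwise limit $\cB_\infty(z)$ for every $z\in U$, since each point lies in some compact set. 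Letting $M\to\infty$ in the display gives the tail estimate of Eq.~\eqref{eq:summable-increment-tail} directly. Because a compact exhaustion of $U$ controls every $K$, the convergence is locally uniform.

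The second step handles the two structural properties. Compactness holds because $\mathcal K(\mathcal X_{\cC})$ is a norm-closed subspace of $\mathcal L(\mathcal X_{\cC})$, so each operator norm limit $\cB_\infty(z)$ is compact. Analyticity is the step where some care is needed; I expect it to be the main (if modest) obstacle, since the functions take values in a Banach space and we want to avoid any appeal to weak analyticity subtleties. I would argue through the Cauchy integral formula. For a closed disk $\overline{D(z_0,\rho)}\subset U$, each $\widetilde\cB_N$ satisfies $\widetilde\cB_N(z)=(2\pi\iu)^{-1}\oint_{|\zeta-z_0|=\rho}\widetilde\cB_N(\zeta)(\zeta-z)^{-1}d\zeta$ for $|z-z_0|<\rho$. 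Uniform convergence on the circle, which is a compact subset of $U$, lets us pass to the limit in these Bochner/Riemann integrals, so $\cB_\infty$ satisfies the same formula. Expanding $(\zeta-z)^{-1}$ in a geometric series then yields a norm-convergent power series, so $\cB_\infty$ is analytic.

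The derivative statement follows from the same formula. Differentiating under the integral gives
\[
 \sup_{|z-z_0|\leq\rho/2}\|\partial_z\cB_\infty(z)-\partial_z\widetilde\cB_N(z)\|
 \leq\frac{4}{\rho}\sup_{|\zeta-z_0|=\rho}\|\cB_\infty(\zeta)-\widetilde\cB_N(\zeta)\|.
\]
Covering a smaller compact subset by finitely many such half-disks gives uniform convergence of the $z$-derivatives there, with an explicit bound in terms of $r_N$ of the enlarged compact set.

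The remaining assertions are immediate consequences. If a consistency argument identifies $\cB_\infty$ with $D_X\mathfrak F(X_*;z,\eta)$, then Eq.~\eqref{eq:summable-increment-tail} is precisely the certified bound required in Eq.~\eqref{eq:kernel-convergence-proof}, with $r_N(K)$ playing the role of the remainder. For the source and readout maps, the same Cauchy-sequence argument works verbatim in the complete spaces $\mathcal L(\mathbb C^m,\mathcal X_{\cC})$ and $\mathcal L(\mathcal X_{\cC},\mathbb C^m)$; the only difference is that no compactness claim is needed for them.
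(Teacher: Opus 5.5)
Your proposal is correct and follows the paper's own argument. Telescoping makes the sequence Cauchy in $C(K,\mathcal L(\mathcal X_{\cC}))$, and summing the increments gives the tail bound; compactness of the limit comes from norm-closedness of $\mathcal K(\mathcal X_{\cC})$, analyticity from the Banach-valued Weierstrass theorem, and derivative convergence from the Cauchy integral formula on a slightly larger compact set. The paper only cites the Weierstrass theorem, whereas you prove it directly from the Cauchy integral representation, and your explicit $4/\rho$ estimate makes the paper's derivative step quantitative.
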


\begin{proof}
Indeed, Eq.~\eqref{eq:summable-increment-bound} makes the sequence Cauchy in
$C(K,\mathcal L(\mathcal X_{\cC}))$, and summing the increment bounds gives
Eq.~\eqref{eq:summable-increment-tail}.  Compact operators are closed in the
operator norm.  The Banach-valued Weierstrass theorem gives analyticity of the
locally uniform limit, while the Cauchy integral formula on a slightly larger
compact set gives derivative convergence.
The supplied bounds $r_N(K)$ make the error estimate computable;
summability alone establishes convergence but need not supply such bounds.
\end{proof}

For example, certified computable constants $A(K)\geq0$ and
$0<\theta(K)<1$ with $\varepsilon_n(K)\leq A(K)\theta(K)^n$ give
$r_N(K)=A(K)\theta(K)^N/[1-\theta(K)]$.  This geometric majorant supplies
both the tail bound and its rate of decay.

If a threshold enters $U$, the meromorphic decomposition
Eq.~\eqref{eq:fredholm-resolvent-decomposition} may fail.  When operator-norm
convergence is unavailable, stable-looking eigenvalues of finite nonnormal
matrices also require a spectral-pollution analysis.  These cases call for an
additional spectral argument beyond the isolated-pole theorem.

%% file: sections-en/declarations.tex
\section*{Declarations}

\textbf{Conflict of interest.} The author declares no competing interests.

\textbf{Data and code availability.} This work uses no empirical dataset.
The analytic examples are specified by the equations and conventions in
the manuscript and can be reproduced directly from them.

\textbf{Funding.} No funding was received for this work.